\documentclass[journal]{IEEEtran}
\pdfoutput=1
\usepackage{algorithmic}
\usepackage{algorithm}
\usepackage{array}

\usepackage{subfigure}
\usepackage{amsmath}
\usepackage{amsfonts}
\usepackage{amsthm}

\usepackage{textcomp}
\usepackage{stfloats}
\usepackage{url}
\usepackage{verbatim}
\usepackage{graphicx}
\usepackage{cite}
\usepackage{bm}
\usepackage{soul}
\usepackage{arydshln}
\usepackage{booktabs}
\usepackage{pifont}
\usepackage{amssymb}
\usepackage{tabularx}
\usepackage{microtype}
\usepackage{stfloats}
\usepackage{epstopdf}
\usepackage{makecell}
\usepackage{diagbox}
\usepackage{multicol}

\hyphenation{op-tical net-works semi-conduc-tor IEEE-Xplore}
\allowdisplaybreaks[4]
\usepackage[implicit=false]{hyperref}
\usepackage{tikz,xcolor}

\newtheorem{lemma}{Lemma}
\newtheorem{remark}{Remark}
\newtheorem{corollary}{Corollary}
\newtheorem{proposition}{Proposition}

\begin{document}
\definecolor{lime}{HTML}{A6CE39}
\DeclareRobustCommand{\orcidicon}{%
    \begin{tikzpicture}
    \draw[lime, fill=lime] (0,0)
    circle [radius=0.16]
    node[white] {{\fontfamily{qag}\selectfont \tiny ID}};    \draw[white, fill=white] (-0.0625,0.095)
    circle [radius=0.007];    \end{tikzpicture}
    \hspace{-2mm}}
\foreach \x in {A, ..., Z}{%
    \expandafter\xdef\csname orcid\x\endcsname{\noexpand\href{https://orcid.org/\csname orcidauthor\x\endcsname}{\noexpand\orcidicon}}
    }
\hypersetup{hidelinks}
\newcommand{\orcidauthorA}{0000-0001-6215-6703}

\title{Performance Bounds and Optimization for CSI-Ratio based Bi-static Doppler Sensing\\ in ISAC Systems}
\author{Yanmo Hu,~\IEEEmembership{Student~Member,~IEEE}, Kai Wu,~\IEEEmembership{Member,~IEEE},\\J. Andrew Zhang,~\IEEEmembership{Senior~Member,~IEEE}, Weibo Deng, Y. Jay Guo,~\IEEEmembership{Fellow,~IEEE}\vspace{-0.25em}
\thanks{
Yanmo Hu and Weibo Deng are with the Key Laboratory of Marine Environmental Monitoring and Information Processing, Ministry of Industry and Information Technology, and also with the School of Electronic and Information Engineering, Harbin Institute of Technology. Email: 19B905008@stu.hit.edu.cn; dengweibo@hit.edu.cn. \emph{(Corresponding author: Weibo Deng.)}

Kai Wu, J. A. Zhang and Y. J. Guo are with Global Big Data Technologies Centre, the University of Technology Sydney. Email: \{Kai.Wu; Andrew.Zhang; Jay.Guo\}@uts.edu.au.

}}

\maketitle

\begin{abstract}
Bi-static sensing is crucial for exploring the potential of networked sensing capabilities in integrated sensing and communications (ISAC).
However, it suffers from the challenging clock asynchronism issue.
CSI ratio-based sensing is an effective means to address the issue.
Its performance bounds, particular for Doppler sensing, have not been fully understood yet.
This work endeavors to fill the research gap.
Focusing on a single dynamic path in high-SNR scenarios, we derive the closed-form CRB.
Then, through analyzing the mutual interference between dynamic and static paths, we simplify the CRB results by deriving {close} approximations, further unveiling new insights of the impact of numerous physical parameters on Doppler sensing.
Moreover, utilizing the new CRB and analyses, we propose novel waveform optimization strategies for noise- and interference-limited sensing scenarios, which are also empowered by closed-form and efficient solutions.
Extensive simulation results are provided to validate the preciseness of the derived CRB results and analyses,
with the aid of the maximum-likelihood estimator.
The results also demonstrate the substantial enhanced Doppler sensing accuracy and the sensing capabilities for low-speed target achieved by the proposed waveform design.
\end{abstract}

\begin{IEEEkeywords}
Integrated sensing and communication (ISAC),
Perceptive mobile networks,
uplink sensing,
CSI-ratio model,
Cram\'{e}r-Rao bounds,
OFDM symbol optimization.
\end{IEEEkeywords}

\section{Introduction}\label{Section_Introduction}

\IEEEPARstart{I}{ntegrated}
sensing and communications (ISAC) has established its critical role in numerous next-generation communication networks,
such as mobile networks,
WiFi networks and vehicular networks etc \cite{9585321, 9724258, 9804861, 9393557, pegoraro2023jump, 9606831}.
Major sensing types extensively discussed in ISAC include single-node sensing with sensing transceivers co-located and bi-static sensing with sensing transmitters far separated from receivers.
While single-node sensing enjoy clock coherence between transceiver channels,
it requires full-duplex technique,
which is not mature yet \cite{9848428}.
Bi-static sensing can eliminate the requirement of full-duplexing techniques
and has the potential to utilize the networked sensing capability to deliver enhanced sensing performances \cite{10207009, 9606921}.
Take the emerging perceptive mobile networks (PMNs) \cite{8827589} for an example.
Bi-static sensing at a base station using the link signals from almost ubiquitous user ends can enjoy unprecedented spatial diversity
and information fusion gain for sensing.
However, such bi-static sensing is not without challenges. Due to
different clock oscillators,
between transmitter and receiver, bi-static sensing suffers from the clock asynchronism issue \cite{9848428, 8443427, 9617146}.
It can lead to sensing ambiguities and
must be solved for fully enjoying the potential networked sensing ISAC.

Several types of methods have been proposed to solve the clock asynchronism issue in ISAC.
Since the antennas at a communication receiver typically share a common clock source,
the signals received by each antenna have the same timing and phase nuisances caused by clock asynchronism.
Leveraging this characteristic,
the cross-antenna cross-correlation (CACC) methods were proposed in \cite{9349171, 2018Widar},
which compensate clock asynchronism errors by taking
the cross-correlations of the
received signals over different antenna pairs.
However, the CACC method doubles the sensing parameters that need to be estimated and
increases the probability of detecting false dynamic targets (also known as ghost targets).
Moreover, the CACC method
requires a line-of-sight (LOS) path to ensure unambiguous sensing.
In practical scenarios, the LOS requirement may not be easy to be satisfied.

The CSI-ratio model \cite{123777_ACM, 9904500, 10188607},
a counterpart of CACC,
has
the ability to
alleviate the LOS requirement.
Different from CACC,
the CSI-ratio model calculates the ratio of signals among different antennas
to eliminate common errors caused by clock asynchronism.
In \cite{9904500},
the Doppler frequency of a single dynamic target is estimated by
employing the Mobius transformation,
which
constructs a
periodic
conformal map
between CSI ratio and Doppler frequency.
The Doppler frequency can be estimated by calculating the
period of the CSI ratio.
Furthermore,
\cite{10188607}
introduces the MUSIC algorithm to the CSI-ratio-based sensing.
Compared with the CACC-based MUSIC method,
the CSI-ratio-based MUSIC has better performance in estimating delay and Doppler
in both LOS and NLOS scenarios \cite{123777_ACM, 9848428}.
Therefore, as an elegant formulation,
the CSI ratio model allows simple and efficient parameter estimation.
It also has the capability of
removing additional common signal distortions across antennas such as varying receiver gains \cite{123777_ACM},
remarkably standing out from algorithms for solving the clock
asynchronism issue.

Despite advancements of the CSI-ratio model,
assessing its optimality remains unsolved
and signal optimization lacks of reliable performance metrics.
In parameter estimation,
the classical
Cram\'{e}r-Rao bounds (CRB),
as the lower bound of the variance of an unbiased estimator,
is
an effective metric for assessing the optimality of an estimator and signal design \cite{FSSPVI}.
However,
CRB for the CSI-ratio model,
and more generally for bi-static sensing,
is not known yet,
although limited works
have studied the CRB for ISAC without clock asynchronism \cite{9468975, 8561147, 10318068, 10288116}.
For downlink sensing,
\cite{9468975} optimizes
the subcarrier and OFDM symbol indexes
to minimize delay and Doppler frequency CRBs, respectively.
Additionally,
\cite{10288116}
minimizes the closed-form CRB
to improve delay estimation accuracy in 5G NR networks \cite{8254568}.
In
\cite{10318068},
an uplink waveform optimization problem is solved with the
synchronized clock
assumption.
An algorithm is proposed to
jointly optimize the subcarrier indexes and the power allocation to enhance the estimation accuracy.
These studies highlight the importance of CRB in performance
analysis and optimization in ISAC.
However,
it is very challenging to derive meaningful CRB results for bi-static sensing with
clock asynchronism, due to the multiple unknown time-varying variables of offsets.

To contribute to filling the research gap in the literature,
this paper is dedicated to analyzing and optimizing the Doppler CRB for CSI ratio-based bi-static sensing in ISAC.
We start with deriving the analytical CRB under high-signal-to-noise-ratio (SNR) scenarios.
Then,
utilizing the interference pattern between dynamic and static paths, we provide an approximated CRB,
which not only closely approaches the exact CBR but also reveals impact of critical physical quantities on Doppler sensing.
Moreover,
employing the derived CBR and the unveiled insights through CRB analysis,
we propose new waveform design to optimize the Doppler sensing performance in CSI ratio-based bi-static sensing.
The major contributions are highlighted as follows:
\begin{itemize}

\item
We derive the closed-form CRB for Doppler sensing employing the CSI-ratio model for bi-static sensing in high-SNR scenarios for a single dynamic path.
Through analyzing the impact of mutual impact between dynamic and static paths,
we manage to simplify the closed-form CRB by deriving a {close} CRB approximation.

\item
We perform extensive analysis employing the derived CRB results,
unveiling new and interesting insights into the impact of different parameters,
such as average power of static and dynamic paths,
incident angles of static and dynamic paths,
and noise power etc.,
on Doppler sensing performances.

\item
We provide new waveform design to optimize Doppler sensing performance in different bi-static sensing scenarios,
utilizing the preceding CRB results and analyses.
For noise-limited cases, where interference between dynamic and static paths can be ignored,
we derive the optimal waveform for Doppler sensing.
Moreover, for interference-limited sensing,
we develop a novel waveform optimization strategy to enhance the region of sensible Doppler frequencies,
particularly for small values.
The strategy also entails an optimal trade-off between interference- and noise-limited Doppler sensing.

\end{itemize}

Extensive simulation results are provided to demonstrate the validity of the CRB results and analyses.
A maximum likelihood estimator is used to validate the derived CRBs.
Impact of different parameters on CRBs is also simulated,
to consolidate the analyses.
We also validate the preciseness of the new derivations and analyses, as well as the optimality of the proposed waveform design.

The rest of the paper is organized as follows.
Section \ref{Section_II} introduces the signal model of the CSI-ratio.
The derivation of CRB of the CSI-ratio model and the analysis of the Doppler CRB are presented in Section \ref{Section_III}.
Section \ref{Section_IV}
proposes two algorithms to enhance the model accuracy by optimizing indexes of the OFDM symbols.
Simulation results are provided in Section \ref{Section_simulation_results},
and conclusions are drawn in Section \ref{Section_Conclusion}.

$Notation$:
$\odot $ denotes the Hadamard product;
if $\mathbf{a} \in \mathbb{C}^{N\times 1}$ is a vector,
$\left| \mathbf{a} \right|$,
${{\left| \mathbf{a} \right|}^{2}}$,
and
${{\mathbf{a}}^{2}}$
denote the absolute value, the square of the absolute value, and the square value of each element in $\mathbf{a}$;
${{\left[ \mathbf{a}  \right]}_{n}}$ denotes the $n$-th element of vector $\mathbf{a}$,
and ${{\left[ \mathbf{A}  \right]}_{m,n}}$ denotes the $\left(m,n \right)$-th element of matrix $\mathbf{A}$;
if $\mathbf{a}, \mathbf{b} \in \mathbb{C}^{N\times 1}$ are vectors,
${\mathbf{a}}/{\mathbf{b}}$ denotes the division of the corresponding elements in two vectors,
where ${{\left[ {\mathbf{a}}/{\mathbf{b}} \right]}_{n}}={{{a}_{n}}}/{{{b}_{n}}}$, $n = 0, \cdots, N - 1$;
$\angle a$ denotes the argument of the complex number $a \in \mathbb{C}$;
$\text{round}\left\{ a \right\}$ denotes the rounding operation for $a \in \mathbb{R}$;
${C_{n}^{r}}=\frac{n!}{r!\left( n-r \right)!}$
denotes the combinatorial number;
${{\mathcal{F}}}\left\{\cdot\right\}$ is the discrete Fourier transform (DFT) operator;
$\text{sinc}\left(x\right) = {\sin \left( \pi x \right)}/{\left(\pi x\right)}$
is the sinc function.

\section{CSI-Ratio Model}\label{Section_II}

In PMN ISAC,
uplink sensing generally suffers from clock asynchronism issues \cite{9848428, 9585321}.
Specifically, the clock offset
between transmitter and receiver incurs TMO and CFO, which further leads to
the Doppler and delay ambiguity.
Thus, the TMO and CFO compensation is essential for uplink sensing.
One effective method for eliminating CFO and TMO is based on the CSI ratio between different receiving antennas.
Additionally,
we assume that the TMO and CFO remain constants within each OFDM symbol \cite{9848428},
and we use $\tau_{o, k}$ and $\beta_{k}$ to denote the TMO and CFO of the $k$-th OFDM symbol.

In 5G NR, the OFDM symbols in physical uplink shared channel (PUSCH) can be designed for improving the sensing performance \cite{5GNR213, 8928165}.
Therefore, we consider a general case that the interval of adjacent OFDM symbols for sensing can be unequal,
and the symbol interval is an integer multiple of $T_0$.
We use $K_\text{all}$ to represent the number of the available OFDM symbols.
Suppose that $K$ out of $K_\text{all}$ OFDM symbols are allocated for the uplink sensing, and the index of the $k$-th sensing symbol is $\varphi_k \in \mathbb{Z}$,
where
$0 \le \varphi_k \le K_\text{all} - 1$
with
$k = 0, \cdots, K - 1$.
The sensing receiver at the BS uses
an uniform linear array.
Let $\Delta f$ denote the subcarrier interval.
Moreover, we consider that there are
$L_s$ static objects (with AOA $\theta_{s, \ell}$ {and delay $\tau_{s, \ell}$})
and
one dynamic
path associated with a moving target
(with the Doppler frequency $f_{d}$, AOA $\theta_{d}$, {and delay $\tau_{d}$}).
Accordingly,
the CSI {for the $p$-th subcarrier} of the $k$-th OFDM symbol and the $m$-th antenna of the BS can be expressed as \cite{9585321}:
\begin{align}\label{signal_model_abddd}
\nonumber y_{p,k}^{\left( m \right)}= {} & {{e}^{j{{\beta }_{k}}}}{{e}^{j2\pi p\Delta f{{\tau }_{o,k}}}}\left( {{\xi }_{d}}{{e}^{j2\pi p\Delta f{{\tau }_{d}}}}{{e}^{j\frac{2\pi }{\lambda }{{d}_{m}}\sin {{\theta }_{d}}}}{{e}^{j2\pi {{\varphi }_{k}}{{f}_{d}}{{T}_{0}}}} \right. \\
 & \left.  +\sum\limits_{\ell =0}^{{{L}_{s}}-1}{{{\xi }_{s,\ell }}{{e}^{j2\pi p\Delta f{{\tau }_{s,\ell }}}}{{e}^{j\frac{2\pi }{\lambda }{{d}_{m}}\sin {{\theta }_{s,\ell }}}}} \right)+z_{p,k}^{\left( m \right)},
\end{align}
where
$\xi_{d}$ and $\xi_{s, \ell}$
denote the combined value of
the complex-valued reflection coefficient and the propagation
attenuation for the dynamic target and the $\ell$-th static object, respectively;
$\lambda$ denotes the wavelength;
$d_m$ denotes the position of the $m$-th antenna;
$z_{p,k}^{\left( m \right)}\sim \mathcal{C}\mathcal{N}\left( 0,\sigma _{n}^{2} \right)$
is the Gaussian noise.

The CSI ratio between two different receiving antennas
can
effectively eliminate the impact of the clock offset \cite{123777_ACM}.
Therefore, we consider a simple
two-element array, $M = 2$, with $d$ denoting the spacing between the two elements
and $a\left(\theta_d\right) = {{e}^{j\frac{2\pi }{\lambda }d\sin {{\theta_d }}}}$
denoting the phasor
caused by the phase difference
between two array elements.
The CSI ratio
can be calculated as:
\begin{align}\label{aaa}
\nonumber r\left( k \right) &={y_{k}^{\left( 1 \right)}}/{y_{k}^{\left( 0 \right)}}\; \\
 & =\frac{a\left( {{\theta }_{d}} \right)\rho _{1}^{\left( p \right)}{{e}^{j2\pi {{\varphi }_{k}}{{f}_{d}}{{T}_{0}}}}  +\rho _{0}^{\left( p \right)}+\left( {\tilde{z}_{p,k}^{\left( 1 \right)}}/{h_{s,0}^{\left( p \right)}} \right)}{\rho _{1}^{\left( p \right)}{{e}^{j2\pi {{\varphi }_{k}}{{f}_{d}}{{T}_{0}}}}  +1+\left( {\tilde{z}_{p,k}^{\left( 0 \right)}}/{h_{s,0}^{\left( p \right)}} \right)},
\end{align}
where
$\xi _{d}^{\left( p \right)}\triangleq {{\xi }_{d}}{{e}^{j2\pi p\Delta f{{\tau }_{d}}}}$,
$\xi _{s,\ell }^{\left( p \right)}\triangleq {{\xi }_{s,\ell }}{{e}^{j2\pi p\Delta f{{\tau }_{s,\ell }}}}$,
$\tilde{z}_{p,k}^{\left( 0 \right)}\triangleq{{e}^{-j{{\beta }_{k}}}}{{e}^{-j2\pi p\Delta f{{\tau }_{o,k}}}}z_{p,k}^{\left( 0 \right)}$,
$\tilde{z}_{p,k}^{\left( 1 \right)}\triangleq{{e}^{-j{{\beta }_{k}}}}{{e}^{-j2\pi p\Delta f{{\tau }_{o,k}}}}z_{p,k}^{\left( 1 \right)}$,
and four key intermediate variables are defined as
\begin{align}\label{aaa_variables_definition}
\nonumber \rho _{0}^{\left( p \right)}=\frac{h_{s,1}^{\left( p \right)}}{h_{s,0}^{\left( p \right)}}, & \ \ \rho _{1}^{\left( p \right)}=\frac{\xi _{d}^{\left( p \right)}}{h_{s,0}^{\left( p \right)}} \\
h_{s,0}^{\left( p \right)}=\sum\limits_{\ell =0}^{{{L}_{s}}-1}{\xi _{s,\ell }^{\left( p \right)}}, & \ \ h_{s,1}^{\left( p \right)}=\sum\limits_{\ell =0}^{{{L}_{s}}-1}{\xi _{s,\ell }^{\left( p \right)}{{e}^{j\frac{2\pi }{\lambda }d\sin {{\theta }_{s,\ell }}}}}.
\end{align}
As is typical \cite{9904500}, we employ the uplink signal from \emph{the same subcarrier in each OFDM symbol for Doppler sensing}.
The analysis in this work can be readily applicable to any subcarrier.
Hence,
we omit the subcarrier index $p$ of all variables,
and
$\rho _{0}^{\left( p \right)}$,
$\rho _{1}^{\left( p \right)}$,
$h_{s,0}^{\left( p \right)}$,
$h_{s,1}^{\left( p \right)}$,
$\xi _{d}^{\left( p \right)}$,
$\xi _{s,\ell }^{\left( p \right)}$,
$\tilde{z}_{p,k}^{\left( 0 \right)}$,
$\tilde{z}_{p,k}^{\left( 1 \right)}$
are re-expressed as
${{\rho }_{0}}$,
${{\rho }_{1}}$,
${{h}_{s,0}}$,
${{h}_{s,1}}$,
${{\xi }_{d}}$,
${{\xi }_{s,\ell }}$,
$\tilde{z}_{k}^{\left( 0 \right)}$,
$\tilde{z}_{k}^{\left( 1 \right)}$,
respectively.

The terms ${{h}_{s,0}}$ and ${{h}_{s,1}}$
denote the static part in the {received signals} from the $0$-th and $1$-st antennas, respectively.
Both the
numerator and denominator of (\ref{aaa})
have terms related to Gaussian noises, and the CSI ratio expression is nonlinear.
Therefore,
the probability density function of (\ref{aaa}) is too complicated to be derived for calculating the CRB \cite{Math_ratio_2013}.
As a typical strategy, we introduce a legitimate assumption, i.e., high SNR,
to simplify the derivation
of the CRB
and then evaluate the impact of the assumption later in Section \ref{Section_simulation_BBBB}.
Considering the \emph{high-SNR case}, the CSI ratio can be approximated using {the} Taylor {series} expansion, as given by:
\begin{align}\label{R_approximation}
\nonumber r\left( k \right)\approx {}& \frac{a\left( {{\theta }_{d}} \right){{\rho }_{1}}{{e}^{j2\pi {{\varphi }_{k}}{{f}_{d}}{{T}_{0}}}}+{{\rho }_{0}}}{{{\rho }_{1}}{{e}^{j2\pi {{\varphi }_{k}}{{f}_{d}}{{T}_{0}}}}+1}+\frac{ {\tilde{z}_{k}^{\left( 1 \right)}}/{{{h}_{s,0}}} }{{{\rho }_{1}}{{e}^{j2\pi {{\varphi }_{k}}{{f}_{d}}{{T}_{0}}}}+1} \\
 & -\frac{a\left( {{\theta }_{d}} \right){{\rho }_{1}}{{e}^{j2\pi {{\varphi }_{k}}{{f}_{d}}{{T}_{0}}}}+{{\rho }_{0}}}{{{\left( {{\rho }_{1}}{{e}^{j2\pi {{\varphi }_{k}}{{f}_{d}}{{T}_{0}}}}+1 \right)}^{2}}}\left( {\tilde{z}_{k}^{\left( 0 \right)}}/{{{h}_{s,0}}} \right).
\end{align}
Note that {the} approximation holds when the noise term
{${\tilde{z}_{k}^{\left( 1 \right)}}/{{{h}_{s,0}}} $
and
${\tilde{z}_{k}^{\left( 0 \right)}}/{{{h}_{s,0}}} $}
in (\ref{aaa}) are much smaller than other components.
The
more specific
conditions of the approximation can be established as follows.

\vspace{0.3em}

\begin{lemma}\label{fundamental_assumption_high_SNR}
If $\left| {{\xi }_{d}} \right|<\min \left\{ \left| {{h}_{s,0}} \right|,\left| {{h}_{s,1}} \right| \right\}$,
the approximation in (\ref{R_approximation}) holds when:
\begin{align}
\min \left\{ \left| {{h}_{s,0}} \right|,\left| {{h}_{s,1}} \right| \right\}-\left| {{\xi }_{d}} \right|\gg {{\sigma }_{n}}.
\end{align}
If $\left| {{\xi }_{d}} \right|>\max \left\{ \left| {{h}_{s,0}} \right|,\left| {{h}_{s,1}} \right| \right\}$,
the approximation holds when:
\begin{align}
\left| {{\xi }_{d}} \right|-\max \left\{ \left| {{h}_{s,0}} \right|,\left| {{h}_{s,1}} \right| \right\}\gg {{\sigma }_{n}}.
\end{align}
\end{lemma}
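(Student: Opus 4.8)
The plan is to read (\ref{R_approximation}) as the first-order Taylor (equivalently, Neumann-series) expansion of the ratio $r(k)$ in (\ref{aaa}) about its noiseless value, with the two noise terms $\epsilon_1\triangleq\tilde z_k^{(1)}/h_{s,0}$ (numerator) and $\epsilon_2\triangleq\tilde z_k^{(0)}/h_{s,0}$ (denominator) acting as the small perturbations. Writing $g_k\triangleq e^{j2\pi\varphi_k f_d T_0}$, $A\triangleq a(\theta_d)\rho_1 g_k+\rho_0$ and $B\triangleq\rho_1 g_k+1$, the exact ratio is $(A+\epsilon_1)/(B+\epsilon_2)$. Expanding $(B+\epsilon_2)^{-1}=B^{-1}(1+\epsilon_2/B)^{-1}$ as a geometric series and retaining only the terms linear in $\epsilon_1,\epsilon_2$ reproduces exactly the three terms of (\ref{R_approximation}), the leading discarded remainder being the second-order terms $\epsilon_1\epsilon_2/B^2$ and $A\epsilon_2^2/B^3$. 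The approximation is accurate precisely when these are negligible against the retained first-order terms $\epsilon_1/B$ and $A\epsilon_2/B^2$, which reduces to the two clean smallness requirements $|\epsilon_2|\ll|B|$ and $|\epsilon_1|\ll|A|$.

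I would then convert these into statements about physical magnitudes. Since the unit-modulus factors in $\tilde z_k^{(0)},\tilde z_k^{(1)}$ leave the distribution $\mathcal{CN}(0,\sigma_n^2)$ unchanged, the characteristic size of each perturbation is $|\epsilon_1|,|\epsilon_2|\sim\sigma_n/|h_{s,0}|$. Substituting $\rho_0=h_{s,1}/h_{s,0}$ and $\rho_1=\xi_d/h_{s,0}$ and clearing the common factor $|h_{s,0}|$, the denominator condition $|\epsilon_2|\ll|B|$ becomes $\sigma_n\ll|\xi_d g_k+h_{s,0}|$, and the numerator condition $|\epsilon_1|\ll|A|$ becomes $\sigma_n\ll|a(\theta_d)\xi_d g_k+h_{s,1}|$. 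This is where $h_{s,1}$, the static response seen through the second antenna, enters the statement: it controls the numerator perturbation.

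The last step is to make the conditions uniform over the sensing symbols. As $k$ varies the phase of $g_k$ sweeps the unit circle, so I would lower-bound both right-hand sides by their worst case via the reverse triangle inequality, using $|g_k|=|a(\theta_d)|=1$: $|\xi_d g_k+h_{s,0}|\ge\bigl||h_{s,0}|-|\xi_d|\bigr|$ and $|a(\theta_d)\xi_d g_k+h_{s,1}|\ge\bigl||h_{s,1}|-|\xi_d|\bigr|$. Requiring $\sigma_n$ to be dominated by the smaller of the two bounds yields $\sigma_n\ll\min\{\bigl||h_{s,0}|-|\xi_d|\bigr|,\bigl||h_{s,1}|-|\xi_d|\bigr|\}$, and a case split finishes the proof: for $|\xi_d|<\min\{|h_{s,0}|,|h_{s,1}|\}$ both absolute values open with the static term dominant, giving $\min\{|h_{s,0}|,|h_{s,1}|\}-|\xi_d|$; for $|\xi_d|>\max\{|h_{s,0}|,|h_{s,1}|\}$ the dynamic term dominates, giving $|\xi_d|-\max\{|h_{s,0}|,|h_{s,1}|\}$, which are exactly the two displayed inequalities.

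The main obstacle I anticipate is the bookkeeping in the first step: correctly isolating which second-order terms dominate the Taylor remainder and verifying that controlling them is genuinely equivalent to the decoupled pair $|\epsilon_1|\ll|A|$, $|\epsilon_2|\ll|B|$ rather than some more entangled condition. The reverse-triangle bounds and the case split are then routine, the only subtlety being to interpret the random $\tilde z_k^{(m)}$ through its standard deviation $\sigma_n$ so that the heuristic ``$\gg$'' in the statement is well posed.
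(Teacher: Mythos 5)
Your proposal is correct, and it matches the paper's (implicit) reasoning: the paper states Lemma~\ref{fundamental_assumption_high_SNR} without a formal proof, offering only the heuristic that the approximation holds when the noise terms $\tilde{z}_k^{(1)}/h_{s,0}$ and $\tilde{z}_k^{(0)}/h_{s,0}$ are much smaller than the other components, and your first-order expansion of $(A+\epsilon_1)/(B+\epsilon_2)$ followed by the reverse-triangle worst-case bounds and the two-case split is exactly the natural way to turn that heuristic into the two displayed conditions. One small remark: controlling the second-order Taylor remainder alone only requires $|\epsilon_2|\ll|B|$ (hence only the $h_{s,0}$ condition), so the appearance of $h_{s,1}$ in the lemma really does come from the additional requirement $|\epsilon_1|\ll|A|$ that the noise be small relative to the signal itself --- your proof correctly includes this, and without it the stated conditions could not be recovered.
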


\vspace{0.3em}

For convenience of incorporating all sensing symbols in the CRB analysis, we introduce the matrix form of the CSI ratio below.
With the symbol interval $T_0$, the Doppler steering vector can be given by
\[\mathbf{d}\left( {{f}_{d}} \right)={{\left[ {{e}^{j2\pi {{\varphi }_{0}}{{f}_{d}}{{T}_{0}}}},\cdots ,{{e}^{j2\pi {{\varphi }_{K-1}}{{f}_{d}}{{T}_{0}}}} \right]}^{T}}\in {{\mathbb{C}}^{K\times 1}}.\]
Note that $\varphi_k$ is the symbol index and may not be evenly distributed in our analysis.
Moreover, let the vectors
${{\mathbf{\tilde{z}}}^{\left( 0 \right)}}={{\left[ \tilde{z}_{0}^{\left( 0 \right)},\cdots ,\tilde{z}_{K-1}^{\left( 0 \right)} \right]}^{T}}h_{s,0}^{-1}\in {{\mathbb{C}}^{K\times 1}}$
and
${{\mathbf{\tilde{z}}}^{\left( 1 \right)}}={{\left[ \tilde{z}_{0}^{\left( 1 \right)},\cdots ,\tilde{z}_{K-1}^{\left( 1 \right)} \right]}^{T}}h_{s,0}^{-1}\in {{\mathbb{C}}^{K\times 1}}$
denote the noise vectors of the ${0}$-th and the ${1}$-st antennas with respect to
$K$ {numbers of} OFDM symbols, respectively.
Based on (\ref{R_approximation}),
the vector of CSI ratios over all sensing symbols can be written as:
\begin{align}\label{asdlifhjaolifhaoweifh}
\nonumber   \mathbf{r}={}& \underbrace{\frac{a\left( {{\theta }_{d}} \right){{\rho }_{1}}\mathbf{d}\left( {{f}_{d}} \right)+{{\rho }_{0}}{{\mathbf{1}}_{K\times 1}}}{{{\rho }_{1}}\mathbf{d}\left( {{f}_{d}} \right)+{{\mathbf{1}}_{K\times 1}}}}_{\bm{\chi }} \\
\nonumber & +\underbrace{\frac{{{{\mathbf{\tilde{z}}}}^{\left( 1 \right)}}}{{{\rho }_{1}}\mathbf{d}\left( {{f}_{d}} \right)+{{\mathbf{1}}_{K\times 1}}}-\frac{a\left( {{\theta }_{d}} \right){{\rho }_{1}}\mathbf{d}\left( {{f}_{d}} \right)+{{\rho }_{0}}{{\mathbf{1}}_{K\times 1}}}{{{\left( {{\rho }_{1}}\mathbf{d}\left( {{f}_{d}} \right)+{{\mathbf{1}}_{K\times 1}} \right)}^{2}}}\odot {{{\mathbf{\tilde{z}}}}^{\left( 0 \right)}}}_{\mathbf{z}} \\
\triangleq {}& \bm{\chi }+\mathbf{z}.
\end{align}

From
(\ref{R_approximation}),
we can verify that
$r\left(k\right)$
has a Gaussian distribution with the mean of $\bm{\chi}$
and variance of $\text{diag}\left\{\bm{\eta}\right\}$,
where
\[\bm{\eta }=\frac{\sigma _{n}^{2}}{{{\left| {{h}_{s,0}} \right|}^{2}}} \frac{{{\left| {{\rho }_{1}}\mathbf{d}\left( {{f}_{d}} \right)+{{\mathbf{1}}_{K\times 1}} \right|}^{2}}+{{\left| a\left( {{\theta }_{d}} \right){{\rho }_{1}}\mathbf{d}\left( {{f}_{d}} \right)+{{\rho }_{0}}{{\mathbf{1}}_{K\times 1}} \right|}^{2}}}{{{\left| {{\rho }_{1}}\mathbf{d}\left( {{f}_{d}} \right)+{{\mathbf{1}}_{K\times 1}} \right|}^{4}}}.\]
Hence,
$\mathbf{r}$
conforms to a joint Gaussian distribution,
as given by
$\mathbf{r}\sim\mathcal{C}\mathcal{N}\left( \bm{\chi },\text{diag}\left\{\bm{\eta}\right\} \right)$.
Next,
we employ these signal models to investigate the Doppler sensing performance using CSI ratio.

\section{CRB for CSI-Ratio-Based Sensing}\label{Section_III}
In this section, we first derive the closed-form CRB for CSI ratio-based Doppler sensing and then analyze the result
to highlight some insights.

\subsection{Deriving CRB}

The CRB derivation is based on the CSI ratio vector $\mathbf{r}$, as given in (\ref{asdlifhjaolifhaoweifh}).
Let us collect the
unknown parameters in (\ref{asdlifhjaolifhaoweifh}) by the following real vector:
\begin{align}\label{alpha_parameters_unknown}
\bm{\alpha }={{\left[ {f}_{d},{{{\theta }}_{d}} ,  \operatorname{Re}\left\{ {{\rho }_{0}} \right\},\operatorname{Im}\left\{ {{\rho }_{0}} \right\},\operatorname{Re}\left\{ {\rho }_{1} \right\},\operatorname{Im}\left\{ {\rho }_{1} \right\} \right]}^{T}}\in {{\mathbb{R}}^{6\times 1}},
\end{align}
where $\rho_0$ and $\rho_1$ are given in (\ref{aaa_variables_definition}).
As is typical, the CRB derivation is started with establishing the Fisher information matrix (FIM) \cite{FSSPVI}.
Let $\mathbf{F}\left( \bm{\alpha } \right) \in \mathbb{R}^{6 \times 6}$
denote the FIM.
The $\left(p, q\right)$-th entry can be given by:
\begin{align}\label{CRB_for_formal}
\nonumber  & {{\left[ \mathbf{F}\left( \bm{\alpha } \right) \right]}_{p,q}}=2\operatorname{Re}\left\{ {{\left( \frac{\partial \bm{\chi }}{\partial {{\left[ \bm{\alpha } \right]}_{p}}} \right)}^{H}}\text{dia}{{\text{g}}^{-1}}\left\{ \bm{\eta } \right\}\frac{\partial \bm{\chi }}{\partial {{\left[ \bm{\alpha } \right]}_{q}}} \right\} \\
 & +\text{tr}\left\{ \text{dia}{{\text{g}}^{-1}}\left\{ \bm{\eta } \right\}\frac{\partial \text{diag}\left\{ \bm{\eta } \right\}}{\partial {{\left[ \bm{\alpha } \right]}_{p}}}\text{dia}{{\text{g}}^{-1}}\left\{ \bm{\eta } \right\}\frac{\partial \text{diag}\left\{ \bm{\eta } \right\}}{\partial {{\left[ \bm{\alpha } \right]}_{q}}} \right\}.
\end{align}
The second term in (\ref{CRB_for_formal})
denotes the impact of noise.
Intuitively, when the SNR is high, the second term {becomes small}.
Thus, by applying Lemma \ref{fundamental_assumption_high_SNR}, the FIM can be approximately simplified into:
\begin{align}\label{CRB_for}
\mathbf{F}\left( \bm{\alpha } \right) \approx 2\operatorname{Re}\left\{ {{\left( \frac{\partial \bm{\chi }}{\partial {{\bm{\alpha }}^{T}}} \right)}^{H}}\text{dia}{{\text{g}}^{-1}}\left\{ \bm{\eta } \right\}\frac{\partial \bm{\chi }}{\partial {{\bm{\alpha }}^{T}}} \right\}.
\end{align}

Based on the definition of $\bm{\alpha}$ given in (\ref{alpha_parameters_unknown}),
the CRB of $f_d$ is the first diagonal element of the inverse matrix of $\mathbf{F}\left(\bm{\alpha}\right)$.
With the detailed derivations provided in Appendix \ref{Appendix_aldfikhajkldihf}, the CRB of $f_d$ is presented below.

\vspace{0.3em}

\begin{proposition}\label{proposition_aldfikhajkldihf}
The closed-form CRB of the Doppler estimation based on the CSI ratio model given in (\ref{asdlifhjaolifhaoweifh}) is:
\begin{align}\label{proposition_aldfikhajkldihf_first_formula}
\text{CR}{{\text{B}}_{{{f}_{d}}}}={{\left[ {{\mathbf{H}}^{-1}} \right]}_{1,1}},
\end{align}
where
\begin{align}\label{cal_H_result}
\mathbf{H} \! = \! \frac{2{{\left| {{\xi }_{d}} \right|}^{2}}}{\sigma _{n}^{2}}\left[ \begin{matrix}
   {{\left| {{\rho }_{2}} \right|}^{2}}\left( {{\varepsilon }_{K,2}} \! - \! {{\gamma }_{11}} \right) & \operatorname{Im}\left\{ \rho _{2}^{*}{{\varepsilon }_{K,\mu }} \right\} \! - \! {{\gamma }_{12}}  \\
   \operatorname{Im}\left\{ \rho _{2}^{*}{{\varepsilon }_{K,\mu }} \right\} \! - \! {{\gamma }_{12}} & {{\left| \frac{\partial a\left( {{\theta }_{d}} \right)}{\partial {{\theta }_{d}}} \right|}^{2}}{{\varepsilon }_{\mu ,2}} \! - \! {{\gamma }_{22}}  \\
\end{matrix} \right].
\end{align}
The intermediate variables are given in (\ref{proposition_variables_definition}) at the top of the next page.
\end{proposition}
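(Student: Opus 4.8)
The plan is to evaluate the high-SNR FIM in~(\ref{CRB_for}), which depends only on the sensitivity of the mean vector $\bm{\chi}$, and then to isolate the Doppler block by a Schur-complement reduction against the nuisance parameters $\rho_0$ and $\rho_1$. I take~(\ref{CRB_for}) as the starting point, since Lemma~\ref{fundamental_assumption_high_SNR} already justifies dropping the noise-sensitivity (second) term of~(\ref{CRB_for_formal}).

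First I would rewrite $\bm{\chi}$ in a form that makes all six partial derivatives tractable. Using the elementary identity
\begin{equation}
\bm{\chi}=a\left(\theta_d\right)\mathbf{1}_{K\times1}+\frac{\left(\rho_0-a\left(\theta_d\right)\right)\mathbf{1}_{K\times1}}{\rho_1\mathbf{d}\left(f_d\right)+\mathbf{1}_{K\times1}},
\end{equation}
the $k$-th entry of $\bm{\chi}$ depends on $f_d$ only through the single denominator $\rho_1 e^{j2\pi\varphi_k f_d T_0}+1$, so that $\partial\bm{\chi}/\partial f_d$, $\partial\bm{\chi}/\partial\theta_d$, and the four derivatives with respect to $\operatorname{Re}\{\rho_0\}$, $\operatorname{Im}\{\rho_0\}$, $\operatorname{Re}\{\rho_1\}$, and $\operatorname{Im}\{\rho_1\}$ all share the common factors $(\rho_1\mathbf{d}+\mathbf{1}_{K\times1})^{-1}$ and $(\rho_1\mathbf{d}+\mathbf{1}_{K\times1})^{-2}$. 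Substituting these derivatives into~(\ref{CRB_for}) reduces every FIM entry to a weighted sum over $k$ of products of $e^{j2\pi\varphi_k f_d T_0}$, powers of $\varphi_k$, and $\rho_2\triangleq\rho_0-a\left(\theta_d\right)$, each weighted by $1/\eta_k$; collecting these sums is precisely what defines the moment-type quantities $\varepsilon_{K,2}$, $\varepsilon_{K,\mu}$, and $\varepsilon_{\mu,2}$ (and the mixing factor indexed by $\mu$) appearing in the statement.

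Next I would partition the resulting $6\times6$ FIM as
\begin{equation}
\mathbf{F}\left(\bm{\alpha}\right)=\begin{bmatrix}\mathbf{F}_{11}&\mathbf{F}_{12}\\\mathbf{F}_{21}&\mathbf{F}_{22}\end{bmatrix},
\end{equation}
where $\mathbf{F}_{11}\in\mathbb{R}^{2\times2}$ carries the information on $\left(f_d,\theta_d\right)$, $\mathbf{F}_{22}\in\mathbb{R}^{4\times4}$ carries the information on the nuisance pair $\left(\rho_0,\rho_1\right)$, and $\mathbf{F}_{12}=\mathbf{F}_{21}^T$ is their coupling. By the block-inverse identity, the top-left block of $\mathbf{F}^{-1}$ equals $\mathbf{H}^{-1}$ with $\mathbf{H}=\mathbf{F}_{11}-\mathbf{F}_{12}\mathbf{F}_{22}^{-1}\mathbf{F}_{21}$, so $\text{CRB}_{f_d}=\left[\mathbf{F}^{-1}\right]_{1,1}=\left[\mathbf{H}^{-1}\right]_{1,1}$, matching~(\ref{proposition_aldfikhajkldihf_first_formula}). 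The direct block $\mathbf{F}_{11}$ supplies the $\varepsilon$-terms in~(\ref{cal_H_result}), while the correction $\mathbf{F}_{12}\mathbf{F}_{22}^{-1}\mathbf{F}_{21}$ supplies $\gamma_{11}$, $\gamma_{12}$, and $\gamma_{22}$, which quantify the information lost to the unknown static-path and dynamic-reflection coefficients.

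The main obstacle will be this Schur correction: it requires inverting the $4\times4$ nuisance block $\mathbf{F}_{22}$ in closed form and simplifying the triple product $\mathbf{F}_{12}\mathbf{F}_{22}^{-1}\mathbf{F}_{21}$ into the compact scalars $\gamma_{11}$, $\gamma_{12}$, and $\gamma_{22}$. The key enabler is that $\rho_0$ and $\rho_1$ each enter $\bm{\chi}$ holomorphically, so the four real nuisance derivatives regroup into two complex (Wirtinger) derivatives; this pairing endows $\mathbf{F}_{22}$ with enough structure to be inverted analytically and lets the common weight $1/\eta_k$ factor through. Carrying out the remaining bookkeeping—expanding the $\left|\cdot\right|^2$ terms, separating real and imaginary parts, and identifying the moment sums over $\varphi_k$—then yields the stated $2\times2$ form of $\mathbf{H}$, and inverting this $2\times2$ matrix gives the first diagonal entry claimed in~(\ref{proposition_aldfikhajkldihf_first_formula}).
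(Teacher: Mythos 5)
Your proposal follows essentially the same route as the paper's Appendix~\ref{Appendix_aldfikhajkldihf}: compute the six partial derivatives of $\bm{\chi}$ (your rewriting $\bm{\chi}=a(\theta_d)\mathbf{1}_{K\times1}+(\rho_0-a(\theta_d))\mathbf{1}_{K\times1}/(\rho_1\mathbf{d}(f_d)+\mathbf{1}_{K\times1})$ reproduces exactly the derivatives in (\ref{single_target_diff}), including the $a(\theta_d)-\rho_0$ factor that becomes $\rho_2$), substitute into the high-SNR FIM (\ref{CRB_for}), and take the Schur complement $\mathbf{H}=\mathbf{F}_{11}-\mathbf{F}_{12}\mathbf{F}_{22}^{-1}\mathbf{F}_{12}^{T}$ of the $2\times2$ $(f_d,\theta_d)$ block against the $4\times4$ nuisance block so that $\text{CRB}_{f_d}=[\mathbf{H}^{-1}]_{1,1}$. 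This matches the paper's argument step for step, so the plan is sound.
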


\begin{figure*}[!t]
\begin{align}\label{proposition_variables_definition}
\nonumber  & {{\gamma }_{11}}=\frac{{{\varepsilon }_{1}}{{\left| {{\varepsilon }_{d,K,\mu }} \right|}^{2}}-2{{\varepsilon }_{K,1}}\operatorname{Re}\left\{ \varepsilon _{d,K,\mu }^{*}{{\varepsilon }_{d,\mu ,1}} \right\}+{{\varepsilon }_{\mu ,2}}\varepsilon _{K,1}^{2}}{{{\varepsilon }_{1}}{{\varepsilon }_{\mu ,2}}-{{\left| {{\varepsilon }_{d,\mu ,1}} \right|}^{2}}}, \ \
{{\gamma }_{22}}=\frac{{{\varepsilon }_{1}}{{\left| {{\varepsilon }_{d,\mu ,2}} \right|}^{2}}-2\operatorname{Re}\left\{ \varepsilon _{d,\mu ,2}^{*}{{\varepsilon }_{d,\mu ,1}}\varepsilon _{\mu ,1}^{*} \right\}+{{\varepsilon }_{\mu ,2}}{{\left| {{\varepsilon }_{\mu ,1}} \right|}^{2}}}{{{\varepsilon }_{1}}{{\varepsilon }_{\mu ,2}}-{{\left| {{\varepsilon }_{d,\mu ,1}} \right|}^{2}}}, \\
\nonumber & {{\gamma }_{12}}=\left({{{\varepsilon }_{1}}{{\varepsilon }_{\mu ,2}}-{{\left| {{\varepsilon }_{d,\mu ,1}} \right|}^{2}}}\right)^{-1}\operatorname{Im}\left\{ \rho _{2}^{*}\varepsilon _{d,\mu ,2}^{*}\left( {{\varepsilon }_{1}}{{\varepsilon }_{d,K,\mu }}-{{\varepsilon }_{K,1}}{{\varepsilon }_{d,\mu ,1}} \right)+\rho _{2}^{*}{{\varepsilon }_{\mu ,1}}\left( {{\varepsilon }_{K,1}}{{\varepsilon }_{\mu ,2}}-\varepsilon _{d,\mu ,1}^{*}{{\varepsilon }_{d,K,\mu }} \right) \right\},\\
\nonumber &\bm{\Lambda }=\text{dia}{{\text{g}}^{-1}}\left\{ {{\left| \bm{\mu } \right|}^{2}}+{{\left| a\left( {{\theta }_{d}} \right){{\rho }_{1}}\mathbf{d}\left( {{f}_{d}} \right)+{{\rho }_{0}}{{\mathbf{1}}_{K\times 1}} \right|}^{2}} \right\}\in {{\mathbb{R}}^{K\times K}}, \ \bm{\mu }={{\rho }_{1}}\mathbf{d}\left( {{f}_{d}} \right)+{{\mathbf{1}}_{K\times 1}} \in \mathbb{C}^{K\times 1}, \ {\bm{\varphi }}={{\left[ {{\varphi }_{0}},\cdots ,{{\varphi }_{K-1}} \right]}^{T}}\in {{\mathbb{Z}}^{K\times 1}},\\
\nonumber &{{\varepsilon }_{1}}=\mathbf{1}_{K\times 1}^{T}\bm{\Lambda }{{\mathbf{1}}_{K\times 1}} \in \mathbb{R},  \
{{\varepsilon }_{K,1}}=\left( 2\pi {{T}_{0}} \right)\mathbf{1}_{K\times 1}^{T}\bm{\Lambda }{\bm{\varphi }} \in \mathbb{R},\
{{\varepsilon }_{K,2}}={{\left( 2\pi {{T}_{0}} \right)}^{2}}\mathbf{1}_{K\times 1}^{T}\bm{\Lambda}{\bm{\varphi }}^{2} \in \mathbb{R}, \
{{\varepsilon }_{\mu ,2}}=\mathbf{1}_{K\times 1}^{T}\bm{\Lambda }{{\left| \bm{\mu } \right|}^{2}} \in \mathbb{R},\\
\nonumber &
{{\varepsilon }_{\mu ,1}}= \left(\frac{\partial a\left( {{\theta }_{d}} \right)}{\partial {{\theta }_{d}}}\right) \mathbf{1}_{K\times 1}^{T}\bm{\Lambda \mu } \in \mathbb{C}, \
{{\varepsilon }_{d,\mu ,2}}={{\left( \frac{\partial a\left( {{\theta }_{d}} \right)}{\partial {{\theta }_{d}}} \right)}^{*}}{{\mathbf{d}}^{H}}\left( {{f}_{d}} \right)\bm{\Lambda }{{\left| \bm{\mu } \right|}^{2}} \in \mathbb{C},\
{{\varepsilon }_{K,\mu }}=\left( 2\pi {{T}_{0}}\frac{\partial a\left( {{\theta }_{d}} \right)}{\partial {{\theta }_{d}}} \right)\mathbf{1}_{K\times 1}^{T}\bm{\Lambda }\left( {\bm{\varphi }}\odot \bm{\mu } \right) \in \mathbb{C},\\
&{{\varepsilon }_{d,\mu ,1}}={{\mathbf{d}}^{H}}\left( {{f}_{d}} \right)\bm{\Lambda \mu } \in \mathbb{C}, \
{{\varepsilon }_{d,K,\mu }}=\left( 2\pi {{T}_{0}} \right){{\mathbf{d}}^{H}}\left( {{f}_{d}} \right)\bm{\Lambda }\left( {\bm{\varphi }}\odot \bm{\mu } \right) \in \mathbb{C}.
\end{align}
\normalsize
\hrulefill
\end{figure*}

\vspace{0.3em}

It is not easy to analyze the sensing performance directly from the CRB obtained in (\ref{proposition_aldfikhajkldihf_first_formula}).
However, with proper approximations, some interesting insights can be made.
This is illustrated next.

\subsection{Approximated CRB}\label{Section_approximate_CRB}
In this subsection,
the approximated CRB is derived to facilitate further analysis of sensing performance.
The approximation here is mainly based on the value of Doppler frequency.
Intuitively, if the Doppler frequency of the dynamic target is close to 0,
the dynamic target signal will be mixed into the static object signal,
making the Doppler frequency hard to be estimated.
More rigorously, as can be verified by the results in Proposition \ref{proposition_aldfikhajkldihf},
the CRB for the Doppler frequency approaches infinity as $f_d$ becomes increasingly close to 0.
The interference of static path to the dynamic one can be evaluated by the Doppler pattern, as defined by
$\mathcal{P}\left( {\bm{\varphi }}, f_d \right)=\left|  \frac{1}{K}\sum\nolimits_{k=0}^{K-1}{{{e}^{j2\pi {{\varphi }_{k}}{{T}_{0}}{{f}_{d}}}}} \right|$.
The summation enclosed in $\left| \cdot \right|$ can be seen as the signal leakage from the static path to the moving path in the Doppler domain.

Fig. \ref{Fig_1_L2} exemplifies two cases of $\mathcal{P}\left( {\bm{\varphi }}, f_d \right)$, where $K_\text{all} = 140$.
In particular, Fig. \ref{Fig_1_L2} (a) shows the general case with $\varphi_k=k$
and $K = K_\text{all}$.
Fig. \ref{Fig_1_L2} (b) shows a specific case with $\varphi_k$ given in (\ref{Optimization_K_even}).
Note that this $\varphi_k$ is designed for enhancing sensing performance.
With details to be presented later,
we focus on illustrating the Doppler pattern here.
Despite the oscillation in Fig. \ref{Fig_1_L2} (b),
the envelope has a similar shape to the Doppler pattern shown in Fig. \ref{Fig_1_L2} (a).
One can see that the shape of $\mathcal{P}\left( {\bm{\varphi }}, f_d \right)$
or its envelope has a mainlobe around $f_d T_0 = 0$.
The width of the mainlobe basically asserts the range of Doppler values that have poor estimation performance using CSI ratios.
To depict the mainlobe region, we introduce the single-sided mainlobe width of the Doppler pattern, as given by $f_{d, \text{M}}$, which is defined as:
\begin{align}\label{mainlobe_definition}
{{f}_{d,\text{M}}}\approx \max \left\{ \arg \underset{{{f}_{d}}}{\mathop{\min }}\,\left| \mathcal{P}\left( \bm{\varphi },{{f}_{d}} \right)-0.707 \right| \right\}.
\end{align}
Note that the above definition has taken into account cases similar to Fig. \ref{Fig_1_L2} (b)
by using the envelope of the Doppler pattern.
We also note that
the above definition
does not apply to Doppler patterns with grating lobes.
This is because
through the design of $\bm{\varphi}$,
grating lobes can be readily avoided,
which will be further detailed in Section \ref{Section_IV}.

\begin{figure}[!t]
\centering
\subfigure[]
{\includegraphics[width=3.5in]{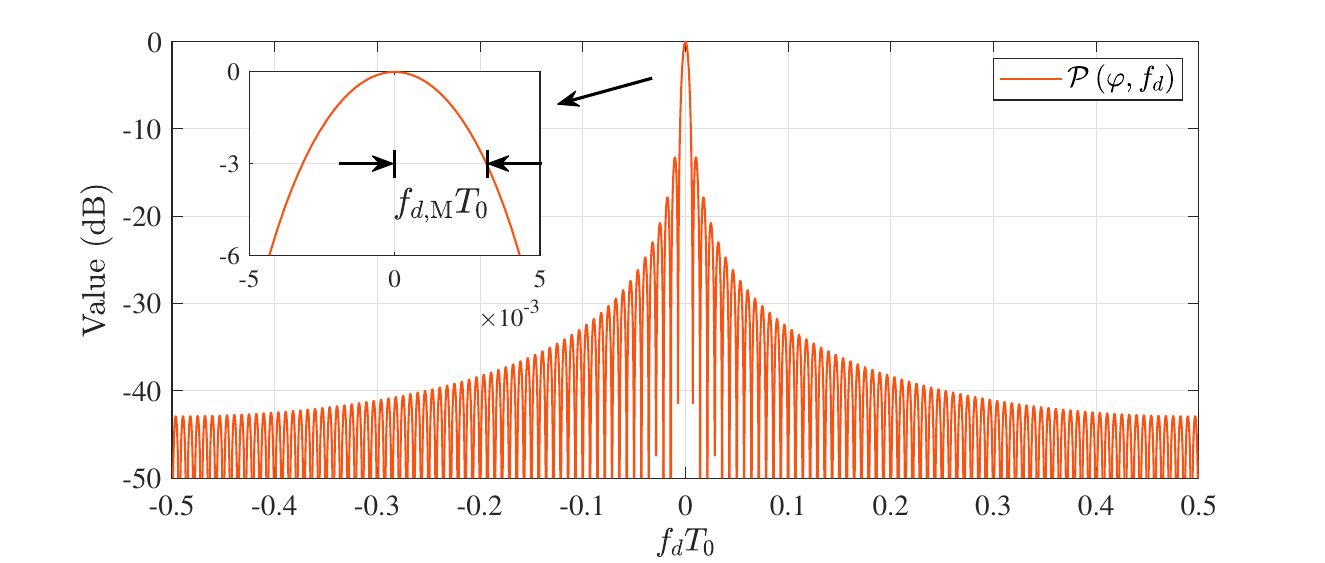}}
\subfigure[]
{\includegraphics[width=3.5in]{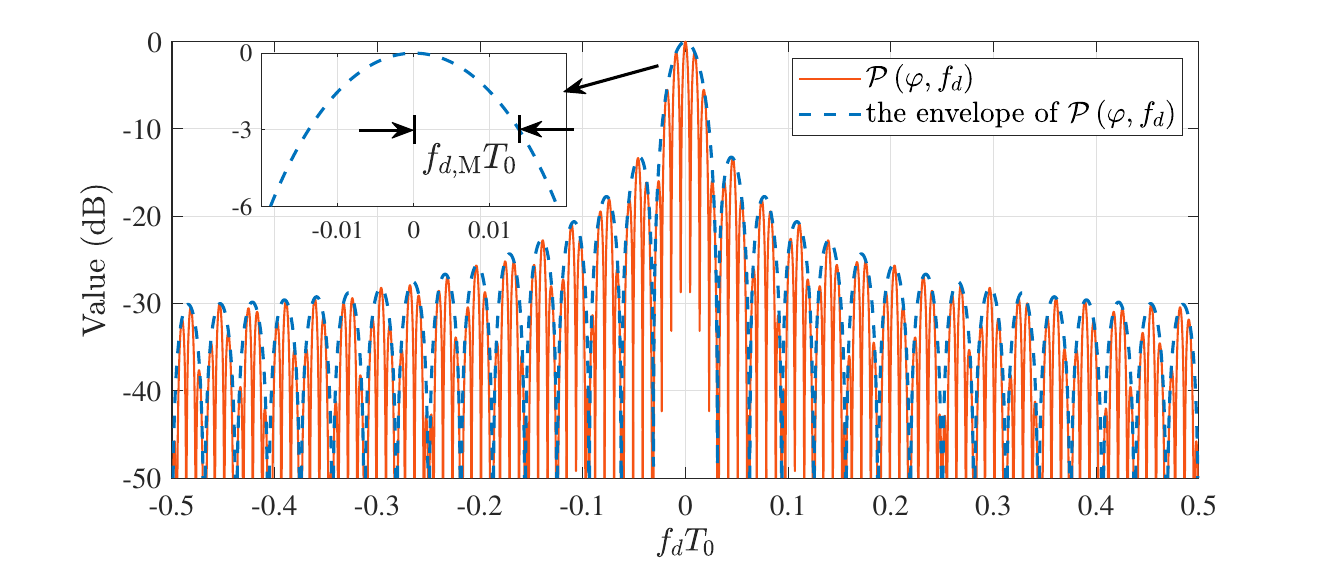}}
\caption{{The definition of the mainlobe, exemplifying two cases. (a) $K_\text{all} = K = 140$, and $\varphi_k=k$.
(b) $K_\text{all} = 140$ , $K = 64$, and $\varphi_k$ follows (\ref{Optimization_K_even}).}}
\label{Fig_1_L2}
\end{figure}

For convenience of further analysis, we also introduce $R_\text{SN}=\frac{{\left( {{\left| {{h}_{s,0}} \right|}^{2}}+{{\left| {{h}_{s,1}} \right|}^{2}} \right)}/{2}}{\sigma _{n}^{2}}$,
which
denotes the average-static-object power to noise power ratio,
$R_\text{SD}=\frac{{\left( {{\left| {{h}_{s,0}} \right|}^{2}}+{{\left| {{h}_{s,1}} \right|}^{2}} \right)}/{2}}{{{\left| {{\xi }_{d}} \right|}^{2}}}$,
which
denotes the average-static-object power to dynamic-target power ratio,
and
$R_\text{A}=\frac{{{\left| {{h}_{s,1}}-a\left( {{\theta }_{d}} \right){{h}_{s,0}} \right|}^{2}}}{{ {{\left| {{h}_{s,0}} \right|}^{2}}+{{\left| {{h}_{s,1}} \right|}^{2}} }}$.
By expanding the numerator of $R_\text{A}$,
we can obtain
${{R}_{\text{A}}}=1-\frac{2\operatorname{Re}\left\{ a\left( {{\theta }_{d}} \right){{h}_{s,0}}h_{s,1}^{*} \right\}}{{{\left| {{h}_{s,0}} \right|}^{2}}+{{\left| {{h}_{s,1}} \right|}^{2}}}$.
Thus,
$R_\text{A}$ can partially reflect the impact of the angular difference between the dynamic and static paths. Moreover,
under all possible $\theta_d$, we further have:
\begin{align}\label{formula_inequality}
0 \le \frac{{{\left( \left| {{h}_{s,0}} \right|-\left| {{h}_{s,1}} \right| \right)}^{2}}}{{{\left| {{h}_{s,0}} \right|}^{2}}+{{\left| {{h}_{s,1}} \right|}^{2}}}\le R_\text{A}\le \frac{{{\left( \left| {{h}_{s,0}} \right|+\left| {{h}_{s,1}} \right| \right)}^{2}}}{{{\left| {{h}_{s,0}} \right|}^{2}}+{{\left| {{h}_{s,1}} \right|}^{2}}} \le 2.
\end{align}
Now, we are ready to introduce the following CRB approximation.

\begin{proposition}\label{proposition_context_CRB_single_target}
When
$\left|f_d \right| > f_{d, \text{M}}$,
and
$R_\text{SD} \to 0$ or $R_\text{SD} \to +\infty$,
the Doppler CRB can be approximated by:
\begin{align}\label{CRB_closed_form_single_text}
\text{CR}{{\text{B}}_{{{f}_{d}}}}\approx \frac{1}{8{{\pi }^{2}}T_{0}^{2}K}\frac{1}{{{S}_{\varphi ,2}}-S_{\varphi ,1}^{2}}\frac{\sqrt{{{\left( 1-{{R}_{\text{SD}}} \right)}^{2}}+2{{R}_{\text{A}}}{{R}_{\text{SD}}}}}{{{R}_{\text{SN}}}{{R}_{\text{A}}}},
\end{align}
where
${{S}_{\varphi ,1}}=\frac{1}{K}\mathbf{1}_{K\times 1}^{T}{\bm{\varphi }}$;
${{S}_{\varphi ,2}}=\frac{1}{K}{\bm{\varphi }}^{T}{\bm{\varphi }}$;
and
${\bm{\varphi }}={{\left[ {{\varphi }_{0}},\cdots ,{{\varphi }_{K-1}} \right]}^{T}}\in {{\mathbb{Z}}^{K\times 1}}$.
\end{proposition}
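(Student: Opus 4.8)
The plan is to start from the exact result of Proposition~\ref{proposition_aldfikhajkldihf} and exploit the fact that $\mathbf{H}$ is only $2\times 2$, so that $\mathrm{CRB}_{f_d}=[\mathbf{H}^{-1}]_{1,1}=H_{22}/(H_{11}H_{22}-H_{12}^2)$, i.e.\ the reciprocal of the Schur complement $H_{11}-H_{12}^2/H_{22}$. The two hypotheses play distinct roles, and I would isolate them from the outset. The condition $R_\text{SD}\to 0$ or $R_\text{SD}\to+\infty$ will be used to linearize the noise weighting $\bm{\eta}$ into a uniform one, while the condition $|f_d|>f_{d,\text{M}}$ will be used to treat $\mathbf{d}(f_d)$ as (approximately) orthogonal to $\mathbf{1}_{K\times 1}$, justified because $\tfrac{1}{K}|\mathbf{1}_{K\times 1}^T\mathbf{d}(f_d)|=\mathcal{P}(\bm{\varphi},f_d)$ sits below the mainlobe level and its cross-products are dominated by the diagonal terms.

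First I would establish the uniform-weighting approximation $\bm{\Lambda}\approx\lambda_0\mathbf{I}_K$. Expanding the $k$-th entry of the denominator of $\bm{\eta}$ with $|a(\theta_d)|=1$ yields a constant part $2|\rho_1|^2+1+|\rho_0|^2$ plus a single $k$-varying term $2\operatorname{Re}\{\rho_1 e^{j2\pi\varphi_k f_d T_0}(1+a(\theta_d)\rho_0^*)\}$. The varying term is $O(|\rho_1|)$ whereas the constant part is $O(\max\{|\rho_1|^2,1\})$, so its relative size vanishes both as $|\rho_1|\to\infty$ ($R_\text{SD}\to0$) and as $|\rho_1|\to0$ ($R_\text{SD}\to+\infty$). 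Hence $\bm{\Lambda}\approx\lambda_0\mathbf{I}_K$ with $\lambda_0=(2|\rho_1|^2+1+|\rho_0|^2)^{-1}$, which I would rewrite through $|\rho_0|^2=|h_{s,1}|^2/|h_{s,0}|^2$ and $|\rho_1|^2=|\xi_d|^2/|h_{s,0}|^2$ to expose $R_\text{SD}$.

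With $\bm{\Lambda}\approx\lambda_0\mathbf{I}_K$ and $\mathbf{1}_{K\times 1}^T\mathbf{d}(f_d)\approx0$, each scalar in (\ref{proposition_variables_definition}) collapses onto a moment of $\bm{\varphi}$: $\varepsilon_1\approx\lambda_0 K$, $\varepsilon_{K,1}\approx 2\pi T_0\lambda_0 K S_{\varphi,1}$, $\varepsilon_{K,2}\approx(2\pi T_0)^2\lambda_0 K S_{\varphi,2}$, and so on, with the cross-products against $\mathbf{d}(f_d)$ dropping out. Then I would evaluate the corrections $\gamma_{11},\gamma_{12},\gamma_{22}$, whose purpose is to remove the four nuisance parameters carried by $(\rho_0,\rho_1)$. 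The decisive simplification I expect is that $\gamma_{11}$ supplies exactly the $S_{\varphi,1}^2$ subtraction, turning the $S_{\varphi,2}$ carried by $\varepsilon_{K,2}$ into the centered second moment $S_{\varphi,2}-S_{\varphi,1}^2$; this mirrors the familiar fact that an unknown complex amplitude forces one to use the variance of the symbol indices rather than their raw second moment. The angular dependence would enter through $|\rho_2|^2$ and $R_\text{A}$, since $h_{s,1}-a(\theta_d)h_{s,0}=h_{s,0}(\rho_0-a(\theta_d))$ gives $R_\text{A}=|\rho_0-a(\theta_d)|^2/(1+|\rho_0|^2)$.

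Finally I would assemble $H_{22}/(H_{11}H_{22}-H_{12}^2)$, substitute $\lambda_0$ together with $|\xi_d|^2/\sigma_n^2=R_\text{SN}/R_\text{SD}$, and collect the power ratios; the factor $\sqrt{(1-R_\text{SD})^2+2R_\text{A}R_\text{SD}}$ should emerge from combining the two constant levels reached by $|\bm{\mu}|^2$ and by $|a(\theta_d)\rho_1\mathbf{d}(f_d)+\rho_0\mathbf{1}_{K\times 1}|^2$ inside $\lambda_0$ with the $R_\text{A}$ factor. As a check I would verify the two endpoints (the square root $\to1$ and $\to R_\text{SD}$) reproduce the expected noise- and static-dominated scalings. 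The main obstacle I anticipate is not the uniform-weighting step but the bookkeeping of the $\gamma$ terms: the sums $\sum_k\varphi_k e^{j2\pi\varphi_k f_d T_0}$ appearing in $\varepsilon_{K,\mu}$ and $\varepsilon_{d,K,\mu}$ are weighted by the possibly large $\varphi_k$ and are \emph{not} controlled by $\mathbf{1}_{K\times 1}^T\mathbf{d}(f_d)\approx0$ alone, so I would need to show that, after forming the cross-term $H_{12}^2/H_{22}$, these leakage contributions either cancel one another or are dominated by the $O(K)$ and $O(K^2)$ principal terms.
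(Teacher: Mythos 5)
Your overall architecture matches the paper's: work from the Schur complement $\mathbf{H}$, use $\left|f_d\right|>f_{d,\text{M}}$ to kill the cross-terms $\mathbf{1}_{K\times 1}^{T}\mathbf{d}(f_d)$ so that the nine $\varepsilon$-terms in (\ref{proposition_variables_definition}) collapse onto moments of $\bm{\varphi}$, observe that $[\mathbf{H}]_{1,2}\approx 0$ and that the $\gamma_{11}$ correction converts $S_{\varphi,2}$ into the centered moment $S_{\varphi,2}-S_{\varphi,1}^{2}$, and your identity $R_\text{A}=\left|\rho_0-a(\theta_d)\right|^{2}/(1+\left|\rho_0\right|^{2})$ is exactly what the paper uses. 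However, there is a genuine gap at the step $\bm{\Lambda}\approx\lambda_0\mathbf{I}_K$ with $\lambda_0=(2\left|\rho_1\right|^{2}+1+\left|\rho_0\right|^{2})^{-1}$. Writing the $k$-th diagonal entry of $\bm{\Lambda}^{-1}$ as $b_0+b_1\cos(2\pi\varphi_kT_0f_d+\phi)$ with $b_0=1+\left|\rho_0\right|^{2}+2\left|\rho_1\right|^{2}$ and $b_1=2\left|\rho_1\right|\left|1+\rho_0^{*}a(\theta_d)\right|$, your approximation amounts to $\varepsilon_1\approx K/b_0$, i.e.\ dropping the oscillating term at zeroth order. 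The paper's Lemma \ref{Lemma_X_b0} instead evaluates the sum of the \emph{reciprocal} over the oscillation and obtains $\varepsilon_1\approx K/\sqrt{b_0^{2}-b_1^{2}}$ (the exact circular average $\frac{1}{2\pi}\int_0^{2\pi}(b_0+b_1\cos\theta)^{-1}d\theta=(b_0^{2}-b_1^{2})^{-1/2}$, recovered via a binomial series and the central-binomial generating function). The distinction is not cosmetic: with $A=1+\left|\rho_0\right|^{2}$ and $B=2\left|\rho_1\right|^{2}$ one has $b_0^{2}-b_1^{2}=(A-B)^{2}+2ABR_\text{A}=B^{2}\left[(1-R_\text{SD})^{2}+2R_\text{A}R_\text{SD}\right]$, so the entire square-root factor in (\ref{CRB_closed_form_single_text}) -- including its $R_\text{A}$-dependence, which Corollary \ref{corollary_SSDA_what_what_aaa} later differentiates -- is generated precisely by the $-b_1^{2}$ term that your uniform-weighting step discards. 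Your route would instead produce $1+R_\text{SD}$ in place of $\sqrt{(1-R_\text{SD})^{2}+2R_\text{A}R_\text{SD}}$; these agree only in the strict limits $R_\text{SD}\to 0$ or $R_\text{SD}\to+\infty$, and differ by up to a factor of $2$ in the relaxed regime $R_\text{SD}<0.1$ or $R_\text{SD}>8$ where the paper actually applies the formula.

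Relatedly, the condition ``$R_\text{SD}\to 0$ or $R_\text{SD}\to+\infty$'' does not function in the paper as a license to linearize $\bm{\eta}$; it is the condition $b_0>2b_1$, equivalent to (\ref{sdolweoidkdfn}), needed so that the series truncation inside Lemma \ref{Lemma_X_b0} (keeping only $r=n$ and $r=n\pm 1$) is accurate, and so that the residual first-harmonic term -- which is proportional to $\operatorname{Re}\{e^{j\phi}\frac{1}{K}\sum_k e^{j2\pi\varphi_kT_0f_d}\}$ and is killed by $\left|f_d\right|>f_{d,\text{M}}$ -- is the only surviving correction. Your closing concern about the $\varphi_k$-weighted sums $\varepsilon_{K,\mu}$, $\varepsilon_{d,K,\mu}$ is legitimate; the paper asserts $\varepsilon_{K,\mu}\approx S_{\varphi,1}(2\pi T_0)\varepsilon_{\mu,1}$ and $\varepsilon_{d,K,\mu}\approx S_{\varphi,1}(2\pi T_0)\varepsilon_{d,\mu,1}$ with details suppressed, so you would indeed need to supply that argument, but it is the $\sqrt{b_0^{2}-b_1^{2}}$ averaging that is the indispensable missing ingredient.
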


\begin{proof}
Please refer to Appendix \ref{Appendix_proposition_context_CRB_single_target}.
Note that the relationship between $f_{d, \text{M}}$ and CRB is also established in Appendix \ref{Appendix_proposition_context_CRB_single_target}.
\end{proof}

Note that we can relax the conditions ``$R_\text{SD} \to 0$ or $R_\text{SD} \to +\infty$''
to ``$R_\text{SD} < 0.1$ or $R_\text{SD} > 8$''
based on (\ref{sdolweoidkdfn})
in Appendix \ref{Appendix_proposition_context_CRB_single_target}.
The approximated CRB has a clearer structure, compared with the original one in Proposition \ref{proposition_aldfikhajkldihf},
which can be exploited for further investigating the impact of different physical quantities on the Doppler estimation.
This is detailed next.

\subsection{Insights Drawn From the Approximated CRB}
In this subsection,
some interesting properties
of $\text{CRB}_{f_d}$
are unveiled,
highlighting the impact of different factors on the CSI ratio-based Doppler estimation.

We start with $R_\text{A}$ by presenting the following result.

\begin{corollary}\label{corollary_SSDA_what_what_aaa}
$\text{CRB}_{f_d}$ is a monotonically decreasing function of $R_\text{A}$.
\end{corollary}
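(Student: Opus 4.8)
The plan is to argue directly from the approximated closed-form expression for $\text{CRB}_{f_d}$ in (\ref{CRB_closed_form_single_text}) of Proposition \ref{proposition_context_CRB_single_target}, isolating the single factor that carries the dependence on $R_\text{A}$. Since the remaining prefactor $\frac{1}{8\pi^2 T_0^2 K}\frac{1}{S_{\varphi,2}-S_{\varphi,1}^2}\frac{1}{R_\text{SN}}$ is a strictly positive constant with respect to $R_\text{A}$ (the factor $S_{\varphi,2}-S_{\varphi,1}^2$ is the sample variance of the symbol indices $\bm{\varphi}$, hence strictly positive whenever the indices are not all identical, and $R_\text{SN}>0$), the monotonicity of $\text{CRB}_{f_d}$ in $R_\text{A}$ reduces to the monotonicity of
\[
g(R_\text{A}) = \frac{\sqrt{(1-R_\text{SD})^2 + 2 R_\text{A} R_\text{SD}}}{R_\text{A}}.
\]

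First I would record the legitimacy of treating $R_\text{A}$ as a free variable: by their definitions, $R_\text{SN}$ and $R_\text{SD}$ depend only on the magnitudes $|h_{s,0}|$, $|h_{s,1}|$, $|\xi_d|$ (through $|h_{s,0}|^2+|h_{s,1}|^2$), whereas $R_\text{A}$ additionally depends on $\theta_d$ through the unit-modulus factor $a(\theta_d)$ and on the relative phase of $h_{s,0}$ and $h_{s,1}$. Hence one may perturb $\theta_d$ so as to move $R_\text{A}$ over its admissible range $(0,2]$ established in (\ref{formula_inequality}) while keeping $|h_{s,0}|^2+|h_{s,1}|^2$, and therefore $R_\text{SD}$ and $R_\text{SN}$, fixed, so the partial-derivative computation is well posed. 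Writing $c = (1-R_\text{SD})^2 \ge 0$ and $b = 2 R_\text{SD} > 0$, I would then compute
\[
g'(R_\text{A}) = \frac{1}{R_\text{A}^2}\!\left( \frac{b R_\text{A}}{2\sqrt{c+b R_\text{A}}} - \sqrt{c+b R_\text{A}} \right) = \frac{-2c - b R_\text{A}}{2 R_\text{A}^2 \sqrt{c + b R_\text{A}}}.
\]
Because $c\ge 0$, $b>0$, and $R_\text{A}>0$, the numerator $-2c - b R_\text{A}$ is strictly negative, so $g'(R_\text{A})<0$ over the admissible range. This shows $g$, and hence $\text{CRB}_{f_d}$, is strictly decreasing in $R_\text{A}$.

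I do not expect a genuine analytical obstacle, as the argument collapses to a single-variable sign check on a derivative. The one point meriting explicit care — which I would state rather than gloss over — is exactly the independence claim above, namely that $R_\text{A}$ can be varied without disturbing $R_\text{SD}$ and $R_\text{SN}$, so that the monotonicity statement is physically meaningful. I would also confirm that the radicand $(1-R_\text{SD})^2 + 2 R_\text{A} R_\text{SD}$ remains strictly positive on $(0,2]$, which is immediate since both summands are nonnegative and $R_\text{SD}>0$ forces the second to be positive for $R_\text{A}>0$; this guarantees differentiability of $g$ and rules out any vanishing of the square root that would invalidate the derivative computation.
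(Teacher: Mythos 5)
Your proposal is correct and follows essentially the same route as the paper: both reduce the claim to the sign of $\partial\,\text{CR}{\text{B}}_{f_d}/\partial R_{\text{A}}$ computed from the approximated closed form (\ref{CRB_closed_form_single_text}), and your expression for $g'(R_{\text{A}})$ matches the paper's (\ref{formula_in_corollary_SSDAaaaa}) up to the positive proportionality constant. The only additions beyond the paper's argument are the (welcome but not essential) remarks that $R_{\text{A}}$ can be varied via $\theta_d$ without disturbing $R_{\text{SN}}$ and $R_{\text{SD}}$, and that the radicand stays positive.
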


\begin{proof}
The partial derivative of {$\text{CRB}_{{{f}_{d}}}$} with respect to $R_{\text{A}}$ is given by:
\begin{align}\label{formula_in_corollary_SSDAaaaa}
\frac{\partial \text{CR}{{\text{B}}_{{{f}_{d}}}}}{\partial {{R}_{\text{A}}}}\propto \frac{-{{\left( 1-{{R}_{\text{SD}}} \right)}^{2}}-{{R}_{\text{SD}}}{{R}_{\text{A}}}}{{{R}_{\text{SN}}}R_{\text{A}}^{2}\sqrt{{{\left( 1-{{R}_{\text{SD}}} \right)}^{2}}+2{{R}_{\text{A}}}{{R}_{\text{SD}}}}}.
\end{align}
Given the inequality in (\ref{formula_inequality}),
(\ref{formula_in_corollary_SSDAaaaa}) is always {smaller} than zero.
A more accurate Doppler estimation can be obtained by enhancing $R_\text{A}$.
Corollary \ref{corollary_SSDA_what_what_aaa} is thus proved.
\end{proof}

As mentioned earlier, $R_\text{A}$ is closely related with the dynamic target angle $\theta_d$,
which, hence, suggests that the CRB of Doppler estimation varies with the AoA of the dynamic path.
An interesting question follows: how sensitive the CRB can be against $\theta_d$?
To investigate this,
let us check
the
ratio of the maximum and minimum values of the CRB in $\theta_d \in \left[-\frac{\pi}{2}, \frac{\pi}{2}\right]$,
i.e.,
$\frac{{{\max }_{{{\theta }_{d}}}}{\text{CRB}}}{{{\min }_{{{\theta }_{d}}}}{\text{CRB}}}$.
A small value of $\frac{{{\max }_{{{\theta }_{d}}}}{\text{CRB}}}{{{\min }_{{{\theta }_{d}}}}{\text{CRB}}}$ illustrates that the CRB is not sensitive to the dynamic target angle.
Specifically, some interesting results are provided below.

\begin{corollary}\label{corollary_CRB_infinite}
The following three relationships hold:
\begin{subequations}
\begin{align}
\arg \underset{\sin {{\theta }_{d}}}{\mathop{\min }}\,\text{CRB}  & \approx \frac{\lambda }{2\pi d}\angle \left\{ {{{h}_{s,1}}}/{{{h}_{s,0}}} \right\}+\frac{\lambda }{d}\left( \frac{1}{2}-{{z}_{I}} \right), \label{corollary_CRB_infinite_FORMULA_a} \\
\arg \underset{\sin {{\theta }_{d}}}{\mathop{\max }}\,\text{CRB} & \approx \frac{\lambda }{2\pi d}\angle \left\{ {{{h}_{s,1}}}/{{{h}_{s,0}}} \right\}-\frac{{{z}_{A}}\lambda }{d}, \label{corollary_CRB_infinite_FORMULA_b}\\
\frac{\underset{{{\theta }_{d}}}{\mathop{\max }}\,\text{CRB}}{\underset{{{\theta }_{d}}}{\mathop{\min }}\,\text{CRB}} & \approx \frac{{{\left( \left| {{h}_{s,1}} \right|+\left| {{h}_{s,0}} \right| \right)}^{2}}}{{{\left( \left| {{h}_{s,1}} \right|-\left| {{h}_{s,0}} \right| \right)}^{2}}}. \label{corollary_CRB_infinite_FORMULA_c}
\end{align}
\end{subequations}
where
the value of ${{z}_{I}}, {{z}_{A}}\in \mathbb{Z}$
can ensure the value on the right side of
(\ref{corollary_CRB_infinite_FORMULA_a})
and
(\ref{corollary_CRB_infinite_FORMULA_b})
within $\left(-1, 1\right)$, respectively.
\end{corollary}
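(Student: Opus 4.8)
The plan is to reduce all three relationships to an elementary study of how $R_\text{A}$ varies with $\theta_d$, and then to invoke the monotonicity from Corollary \ref{corollary_SSDA_what_what_aaa}. First I would note that in the approximated CRB of (\ref{CRB_closed_form_single_text}) the only $\theta_d$-dependent quantity is $R_\text{A}$: the prefactor $\frac{1}{8\pi^2 T_0^2 K(S_{\varphi,2}-S_{\varphi,1}^2)}$ and the ratios $R_\text{SN}$, $R_\text{SD}$ do not involve the dynamic-path angle. Since Corollary \ref{corollary_SSDA_what_what_aaa} establishes that $\text{CRB}_{f_d}$ is monotonically decreasing in $R_\text{A}$, minimizing (respectively maximizing) the CRB over $\sin\theta_d$ is equivalent to maximizing (respectively minimizing) $R_\text{A}$ over $\sin\theta_d$.

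Next I would write $R_\text{A}$ in phase form. Using $a(\theta_d)=e^{j\psi}$ with $\psi=\frac{2\pi}{\lambda}d\sin\theta_d$ and $h_{s,0}h_{s,1}^*=|h_{s,0}||h_{s,1}|e^{-j\angle\{h_{s,1}/h_{s,0}\}}$, the identity ${{R}_{\text{A}}}=1-\frac{2\operatorname{Re}\{a(\theta_d)h_{s,0}h_{s,1}^*\}}{|h_{s,0}|^2+|h_{s,1}|^2}$ becomes $R_\text{A}=1-\frac{2|h_{s,0}||h_{s,1}|}{|h_{s,0}|^2+|h_{s,1}|^2}\cos\left(\psi-\angle\{h_{s,1}/h_{s,0}\}\right)$. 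Its extrema over $\sin\theta_d$ occur where the cosine equals $\mp 1$. Setting the cosine to $-1$ maximizes $R_\text{A}$, hence minimizes the CRB, and gives $\psi = \pi + \angle\{h_{s,1}/h_{s,0}\} + 2\pi n$ for some integer $n$; solving $\frac{2\pi}{\lambda}d\sin\theta_d = \psi$ yields (\ref{corollary_CRB_infinite_FORMULA_a}), where $z_I\in\mathbb{Z}$ is the integer that places the (periodic) solution inside $(-1,1)$. Setting the cosine to $+1$ minimizes $R_\text{A}$ and maximizes the CRB, which in the same manner produces (\ref{corollary_CRB_infinite_FORMULA_b}) with $z_A$. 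The achieved extremal values are $R_\text{A}^{\max}=\frac{(|h_{s,0}|+|h_{s,1}|)^2}{|h_{s,0}|^2+|h_{s,1}|^2}$ and $R_\text{A}^{\min}=\frac{(|h_{s,0}|-|h_{s,1}|)^2}{|h_{s,0}|^2+|h_{s,1}|^2}$, i.e., exactly the two endpoints of (\ref{formula_inequality}).

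For the ratio (\ref{corollary_CRB_infinite_FORMULA_c}), I would substitute these extremal values into (\ref{CRB_closed_form_single_text}) to obtain $\frac{\max_{\theta_d}\text{CRB}}{\min_{\theta_d}\text{CRB}}=\frac{R_\text{A}^{\max}}{R_\text{A}^{\min}}\cdot\frac{\sqrt{(1-R_\text{SD})^2+2R_\text{A}^{\min}R_\text{SD}}}{\sqrt{(1-R_\text{SD})^2+2R_\text{A}^{\max}R_\text{SD}}}$. The first factor already equals the claimed $\frac{(|h_{s,1}|+|h_{s,0}|)^2}{(|h_{s,1}|-|h_{s,0}|)^2}$, so it remains to show the second factor tends to one. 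This is precisely where the limiting hypotheses $R_\text{SD}\to 0$ or $R_\text{SD}\to+\infty$ of Proposition \ref{proposition_context_CRB_single_target} enter: as $R_\text{SD}\to 0$ each radicand reduces to $1$ independently of $R_\text{A}$, while as $R_\text{SD}\to+\infty$ both radicals grow like $R_\text{SD}$ at leading order; either way the square-root quotient approaches unity and the ratio collapses to $R_\text{A}^{\max}/R_\text{A}^{\min}$, giving (\ref{corollary_CRB_infinite_FORMULA_c}).

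The step I expect to be the main obstacle is this final ratio rather than the argmin/argmax formulas. Because the radicand $\sqrt{(1-R_\text{SD})^2+2R_\text{A}R_\text{SD}}$ genuinely depends on $R_\text{A}$, for intermediate $R_\text{SD}$ the CRB quotient does not reduce to the clean closed form, so the result is inherently asymptotic; the delicate point is to verify that both admissible regimes ($R_\text{SD}\to 0$ and $R_\text{SD}\to+\infty$) produce the same limit, so that (\ref{corollary_CRB_infinite_FORMULA_c}) holds under exactly the conditions already assumed in Proposition \ref{proposition_context_CRB_single_target}.
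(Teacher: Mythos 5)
Your proof is correct and follows essentially the same route as the paper's: invoke the monotonicity of $\text{CRB}_{f_d}$ in $R_\text{A}$ from Corollary \ref{corollary_SSDA_what_what_aaa}, extremize the cosine term $\cos\bigl(\tfrac{2\pi}{\lambda}d\sin\theta_d-\angle\{h_{s,1}/h_{s,0}\}\bigr)$ in $R_\text{A}$ over $\sin\theta_d$, and read off the argmin/argmax modulo the $\lambda/d$ period. Your handling of (\ref{corollary_CRB_infinite_FORMULA_c}) is in fact more careful than the paper's one-line ``proved similarly,'' since you correctly note that the $R_\text{A}$-dependence inside $\sqrt{(1-R_\text{SD})^2+2R_\text{A}R_\text{SD}}$ only cancels in the regimes $R_\text{SD}\to 0$ or $R_\text{SD}\to+\infty$ already assumed in Proposition \ref{proposition_context_CRB_single_target}.
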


\begin{proof}
Using (\ref{corollary_CRB_infinite_FORMULA_b}) as an example.
Based on Corollary \ref{corollary_SSDA_what_what_aaa},
a maximal CRB can be achieved when $R_\text{A}$ is minimal,
which means $R_\text{A}=\frac{{{\left( \left| {{h}_{s,0}} \right|-\left| {{h}_{s,1}} \right| \right)}^{2}}}{{{\left| {{h}_{s,0}} \right|}^{2}}+{{\left| {{h}_{s,1}} \right|}^{2}}}$,
and we have
$\operatorname{Re}\left\{ a\left( {{\theta }_{d}} \right)\frac{{{h}_{s,0}}}{\left| {{h}_{s,0}} \right|}\frac{h_{s,1}^{*}}{\left| {{h}_{s,1}} \right|} \right\}=1$.
Due to $a\left(\theta_d\right) = {{e}^{j\frac{2\pi }{\lambda }d\sin {{\theta_d }}}}$,
the equation $\sin \theta_d=\frac{\lambda }{2\pi d}\angle \left\{ {{{h}_{s,1}}}/{{{h}_{s,0}}} \right\}$
holds
when CRB reaches its maximum.
Note that the value of $\sin \theta_d$ ranges from $-1$ to $1$.
The term $\frac{z_A \lambda }{d}$ is thus required.
The results in (\ref{corollary_CRB_infinite_FORMULA_a}) and (\ref{corollary_CRB_infinite_FORMULA_c}) can be proved similarly.
\end{proof}

From Corollary \ref{corollary_CRB_infinite}, we see that $\left|h_{s, 0}\right|$ should not
equal to ${\left| {{h}_{s,1}} \right|}$;
otherwise
the CRB approaches infinite,
suggesting that the CSI ratio cannot be used for Doppler sensing.
Moreover, we also see from Corollary \ref{corollary_CRB_infinite} that
the dynamic target angle has negligible impact on the CRB when
$ \left|{{{h}_{s,1}}}/{{{h}_{s,0}}}\right| \ll 1$ or $ \left|{{{h}_{s,1}}}/{{{h}_{s,0}}}\right| \gg 1$
due to
$\frac{{{\max }_{{{\theta }_{d}}}}\sqrt{\text{CRB}}}{{{\min }_{{{\theta }_{d}}}}\sqrt{\text{CRB}}} \approx 1$.

\vspace{0.3em}

\begin{remark}\label{remark_h_s0_equal_h_s1}
If ${\left| {{h}_{s,0}} \right|} \approx {\left| {{h}_{s,1}} \right|}$,
we can then combine all static objects in space into one {\textit{equivalent}} static object
with angle ${{\bar{\theta }}_{s}}=\arcsin \left\{ \frac{\lambda }{2\pi d}\angle \left\{ {{h}_{s,1}}/{{h}_{s,0}} \right\} \right\}$.
The CSI ratio can generate estimates with large errors when the angles of the dynamic target and the
equivalent static objects are close.
This is consistent with what is observed from Corollary \ref{corollary_CRB_infinite}.
In most cases,
this result does not occur simultaneously across all subcarriers.
We can always find subcarriers that yield relative accurate estimates.

\end{remark}

\vspace{0.3em}

Depending on whether the static path's
power is dominant, or not, compared with the dynamic path's power,
we can further simplify the CRB, as follows.

\begin{corollary}\label{corollary_ksi_large_small}
If $R_{\text{SD}} \gg 1$ (e.g., the presence of LOS path),
\begin{subequations}
\begin{align}\label{ajklnakljfn}
\nonumber \text{CR}{{\text{B}}_{{{f}_{d}}}} & \approx {{\left[ 8{{\pi }^{2}}T_{0}^{2}K\left( {{S}_{\varphi ,2}}-S_{\varphi ,1}^{2} \right) \right]}^{-1}}\frac{{{R}_{\text{SD}}}}{{{R}_{\text{SN}}}{{R}_{\text{A}}}} \\
 & ={{\left[ 8{{\pi }^{2}}T_{0}^{2}K\left( {{S}_{\varphi ,2}}-S_{\varphi ,1}^{2} \right) \right]}^{-1}}\frac{1}{{{R}_{\text{A}}}}\frac{\sigma _{n}^{2}}{{{\left| {{\xi }_{d}} \right|}^{2}}},
\end{align}
which is inversely proportional to ${{{\left| {{\xi }_{d}} \right|}^{2}}}$.
In contrast, if
$R_{\text{SD}} \ll 1$,
\begin{align}\label{formula_ksi_large}
\nonumber  &\text{CR}{{\text{B}}_{{{f}_{d}}}} \approx \left[{8{{\pi }^{2}}T_{0}^{2}K\left( {{S}_{\varphi ,2}}-S_{\varphi ,1}^{2} \right)}\right]^{-1}\frac{1}{{{R}_{\text{SN}}}{{R}_{\text{A}}}} \\
 & =\left[{8{{\pi }^{2}}T_{0}^{2}K\left( {{S}_{\varphi ,2}}-S_{\varphi ,1}^{2} \right)}\right]^{-1}\frac{2\sigma _{n}^{2}}{{{\left| {{h}_{s,1}}-a\left( {{\theta }_{d}} \right){{h}_{s,0}} \right|}^{2}}},
\end{align}
which is independent of $R_{\text{SD}}$.
\end{subequations}
\end{corollary}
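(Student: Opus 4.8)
The plan is to obtain both regimes directly as asymptotic limits of the approximated CRB already established in Proposition~\ref{proposition_context_CRB_single_target}. The only dependence on $R_\text{SD}$ in (\ref{CRB_closed_form_single_text}) is confined to the radical $\sqrt{(1-R_\text{SD})^2 + 2R_\text{A}R_\text{SD}}$, so the whole argument reduces to analyzing how this factor behaves as $R_\text{SD}\to+\infty$ and as $R_\text{SD}\to 0$, and then re-expressing the surviving ratios of $R_\text{SD}$, $R_\text{SN}$, and $R_\text{A}$ in terms of the physical quantities $\sigma_n^2$, $|\xi_d|^2$, $h_{s,0}$, and $h_{s,1}$ through their definitions.

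For the dominant-static case $R_\text{SD}\gg 1$, I would first expand the radicand as $R_\text{SD}^2 + 2(R_\text{A}-1)R_\text{SD} + 1$. Because (\ref{formula_inequality}) guarantees $0\le R_\text{A}\le 2$, the linear and constant terms are $O(R_\text{SD})$ and hence negligible against the leading $R_\text{SD}^2$, giving $\sqrt{(1-R_\text{SD})^2+2R_\text{A}R_\text{SD}}\approx R_\text{SD}$. Substituting this into (\ref{CRB_closed_form_single_text}) yields the first line of (\ref{ajklnakljfn}) at once. The second line then follows from a purely algebraic identity: from $R_\text{SN}=\tfrac{(|h_{s,0}|^2+|h_{s,1}|^2)/2}{\sigma_n^2}$ and $R_\text{SD}=\tfrac{(|h_{s,0}|^2+|h_{s,1}|^2)/2}{|\xi_d|^2}$, the common static-power factor cancels, so $R_\text{SD}/R_\text{SN}=\sigma_n^2/|\xi_d|^2$, which exposes the claimed inverse proportionality to $|\xi_d|^2$.

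For the weak-static case $R_\text{SD}\ll 1$, the same boundedness of $R_\text{A}$ forces $2R_\text{A}R_\text{SD}\to 0$ while $(1-R_\text{SD})^2\to 1$, so the radical tends to unity and drops out; inserting this into (\ref{CRB_closed_form_single_text}) gives the first line of (\ref{formula_ksi_large}), which is visibly independent of $R_\text{SD}$. The second line is again an identity: using $R_\text{A}=\tfrac{|h_{s,1}-a(\theta_d)h_{s,0}|^2}{|h_{s,0}|^2+|h_{s,1}|^2}$ together with the definition of $R_\text{SN}$, the product $R_\text{SN}R_\text{A}$ collapses to $\tfrac{|h_{s,1}-a(\theta_d)h_{s,0}|^2}{2\sigma_n^2}$, and hence $1/(R_\text{SN}R_\text{A}) = 2\sigma_n^2/|h_{s,1}-a(\theta_d)h_{s,0}|^2$.

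The only genuinely delicate step is the radical simplification in the large-$R_\text{SD}$ regime, and its key enabler is the a priori bound $R_\text{A}\le 2$ from (\ref{formula_inequality}): it is precisely this bound that certifies the sub-leading terms inside the square root are $o(R_\text{SD}^2)$ and therefore guarantees the clean limit $\sqrt{\cdots}\approx R_\text{SD}$ instead of a residual $\theta_d$-dependent correction. Everything else is bookkeeping with the three power-ratio definitions, and I would display the two identities $R_\text{SD}/R_\text{SN}=\sigma_n^2/|\xi_d|^2$ and $R_\text{SN}R_\text{A}=|h_{s,1}-a(\theta_d)h_{s,0}|^2/(2\sigma_n^2)$ explicitly, since they are what convert the ratio-based CRB of Proposition~\ref{proposition_context_CRB_single_target} into the physically interpretable forms asserted in the corollary.
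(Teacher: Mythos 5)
Your proposal is correct and follows essentially the same route as the paper's own proof: both reduce the corollary to the asymptotics of the radical $\sqrt{(1-R_\text{SD})^2+2R_\text{A}R_\text{SD}}$ (which the paper calls $\mathcal{L}_{R_\text{SD}}$), obtaining $\mathcal{L}_{R_\text{SD}}\approx R_\text{SD}$ for $R_\text{SD}\gg 1$ and $\mathcal{L}_{R_\text{SD}}\approx 1$ for $R_\text{SD}\ll 1$, and then convert via the identities $R_\text{SD}/R_\text{SN}=\sigma_n^2/|\xi_d|^2$ and $R_\text{SN}R_\text{A}=|h_{s,1}-a(\theta_d)h_{s,0}|^2/(2\sigma_n^2)$. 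Your only addition is making explicit that the bound $R_\text{A}\le 2$ from (\ref{formula_inequality}) justifies dropping the sub-leading terms in the large-$R_\text{SD}$ expansion, a detail the paper leaves implicit.
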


\begin{proof}
Defining
$\mathcal{L}_{R_{\text{SD}}} =\sqrt{{{{\left( 1-R_\text{SD} \right)}^{2}}+2 R_\text{A} R_\text{SD}}}$
based on (\ref{CRB_closed_form_single_text}),
the Doppler CRB in (\ref{CRB_closed_form_single_text}) is inversely proportional to $\mathcal{L}_{R_{\text{SD}}}$.
We now evaluate $\mathcal{L}_{R_{\text{SD}}}$ with respect to ${R_{\text{SD}}}$.

In Case A, we have $R_\text{SD}\gg 1$,
and
${{\mathcal{L}}_{R_{\text{SD}}}}\approx R_\text{SD}$.
The term $R_{\text{SN}}$ in the numerator of (\ref{CRB_closed_form_single_text}) is combined with ${R_{\text{SD}}}$,
making the CRB inversely proportional to ${{{\left| {{\xi }_{d}} \right|}^{2}}}$.
In Case B, we have $R_\text{SD}\ll 1$,
and
${{\mathcal{L}}_{R_{\text{SD}}}}\approx 1$,
which is unrelated with $R_{\text{SD}}$.
Corollary \ref{corollary_ksi_large_small} is thus proved.
\end{proof}

The insights unveiled in this section are helpful in the waveform design for enhancing sensing performance.
This is performed next.

\section{Optimization for OFDM symbols}\label{Section_IV}
Utilizing the derived CRB and analysis made above, we proceed to optimize the OFDM symbol index ${{\varphi }_{k}}$ ($0 \le \varphi_k \le K_\text{all} - 1$, $k = 0, \cdots, K - 1$)
to improve the Doppler sensing performance using CSI ratios.
Recall that we use ${\bm{\varphi }}={{\left[ {{\varphi }_{0}},\cdots ,{{\varphi }_{K-1}} \right]}^{T}}\in {{\mathbb{Z}}^{K\times 1}}$
to collect indexes of $K$ out of $K_\text{all}$ OFDM symbols used for sensing.

\subsection{Strategy of Optimizing $\bm{\varphi}$}\label{Section_strategy_of_optimizing}

Based on Section \ref{Section_approximate_CRB},
a low CRB can be achieved by optimizing two terms,
which are
$\mathcal{L}_1\left( {\bm{\varphi }} \right)={{S}_{\varphi ,2}}-S_{\varphi ,1}^{2}$ (unrelated with $f_d$)
and
{the Doppler pattern}
$\mathcal{P}\left( {\bm{\varphi }}, f_d \right)=\left|  \frac{1}{K}\sum\nolimits_{k=0}^{K-1}{{{e}^{j2\pi {{\varphi }_{k}}{{T}_{0}}{{f}_{d}}}}} \right|$
(related with $f_d$ {and defined in Section \ref{Section_approximate_CRB}}).
These further lead to the following metrics.
\begin{itemize}
\item
Value of $\mathcal{L}_1\left( {\bm{\varphi }} \right)$:
The value of $\mathcal{L}_1\left( {\bm{\varphi }} \right)$ has significant impact on CRB.
As indicated by (\ref{CRB_closed_form_single_text}), $\text{CRB}_{{{f}_{d}}}$ is inversely proportional to ${{\mathcal{L}}_{1}}\left( \bm{\varphi } \right)$.
The term ${{\varphi }_{k}}$ should be selected by maximizing
$\mathcal{L}_1\left( {\bm{\varphi }} \right)$;

\item
Mainlobe width
of {$\mathcal{P}\left( {\bm{\varphi }}, f_d \right)$}:
The mainlobe width
of {$\mathcal{P}\left( {\bm{\varphi }}, f_d \right)$}
(i.e., {$f_{d, \text{M}}$} defined in {Section \ref{Section_approximate_CRB}})
should be as narrow as possible.
As aforementioned in Section \ref{Section_approximate_CRB},
the CRB will sharply increase when $f_d$ is close to the mainlobe of CRB.
In some scenarios, such as indoor localization and vital sign detection,
where the Doppler of the dynamic target is usually low,
a narrow mainlobe (i.e., a small {$f_{d, \text{M}}$}) is able to improve the estimation accuracy;

\item
Sidelobe of {$\mathcal{P}\left( {\bm{\varphi }}, f_d \right)$}:
The sidelobe of {$\mathcal{P}\left( {\bm{\varphi }}, f_d \right)$}
causes
the fluctuation of the CRB under different $f_d$ {values},
which
can render large
CRB values at some Doppler frequencies.
Thus, reducing sidelobes and ensuring uniformness in a Doppler frequency region of interest can help ensure a high and uniform sensing performance.
\end{itemize}

Based on the above analysis,
our strategy of optimizing $\bm{\varphi}$ is given as:
\emph{the optimization should
maximize the value of $\mathcal{L}_1\left( {\bm{\varphi }} \right)$ and
minimize the mainlobe width ({$f_{d, \text{M}}$})
while keeping the sidelobe of {$\mathcal{P}\left( {\bm{\varphi }}, f_d \right)$}
minimal.}

Note that these performance metrics play critical roles in different sensing scenarios.
In particular,
when static paths are strong and the Doppler frequency of the dynamic path is small,
the dynamic path is more likely to be severely interfered by the static paths.
This scenario is interference-limited,
where more focus can be placed on reducing the mainlobe width of $\mathcal{P}\left( {\bm{\varphi }}, f_d \right)$.
In contrast,
when the Doppler frequency of the dynamic path is far from zero with little interference from static paths,
its estimation performance is mainly determined by noises,
i.e., a noise-limited case.
In such a case,
CRB intrinsically asserts the estimation performance,
and hence $\mathcal{L}_1\left( {\bm{\varphi }} \right)$ can be mainly adopted for waveform design.
Next, we illustrate the CRB-driven waveform designs in these two cases.

\subsection{Noise-limited Waveform Design}\label{Section_generlaaa}

Consider noise-limited cases first.
We can then focus on $\mathcal{L}_1\left( {\bm{\varphi }} \right)$ only,
as $\mathcal{P}\left( {\bm{\varphi }}, f_d \right)$ mainly asserts the mutual interference between static and dynamic paths.

In such cases,
the optimal OFDM symbol index can be obtained by maximizing $\mathcal{L}_1\left( {\bm{\varphi }} \right)$.
With the CRB results in Proposition \ref{proposition_context_CRB_single_target},
one can validate the following result.
The details are suppressed here for conciseness.

\vspace{0.3em}

\begin{proposition}\label{proposition_general_optimazition_result}
(Noise-limited Doppler Sensing)
For $0 \le \varphi_k \le K_\text{all} - 1$ and $k = 0, \cdots, K - 1$,
the OFDM symbol ${{\varphi }_{k}}$
{indexes collected by the following vector maximizes $\mathcal{L}_1\left( {\bm{\varphi }} \right)$:}
\begin{align}\label{Optimization_K_even}
  & \bm{\varphi }_{\text{op}}^{\text{gen}}={{\left[{{\bm\kappa }}_{{K}/{2}, \text{A}}^{T},{\left({{K}_{\text{all}}}-1\right)\mathbf{1}_{{{K}/{2}}\times 1}^{T}}-{{\bm\kappa }}_{{K}/{2}, \text{B}}^{T} \right]}^{T}},
\end{align}
where
${{\bm\kappa }}_{{K}/{2}, \text{A}} = {{\bm\kappa }}_{{K}/{2}, \text{B}} = {{\left[ 0,1,\cdots ,\left\lfloor K/2 \right\rfloor -1 \right]}^{T}}\in {{\mathbb{Z}}^{\left\lfloor K/2 \right\rfloor \times 1}}$
if $K$ is even.
If $K$ is odd,
${{\bm\kappa }}_{{K}/{2}, \text{A}} = {{\left[ 0,1,\cdots ,\left\lceil K/2 \right\rceil -1 \right]}^{T}}\in {{\mathbb{Z}}^{\left\lceil K/2 \right\rceil \times 1}}$ and
${{\bm\kappa }}_{{K}/{2}, \text{B}} = {{\left[ 0,1,\cdots ,\left\lfloor K/2 \right\rfloor -1 \right]}^{T}}$,
or
${{\bm\kappa }}_{{K}/{2}, \text{A}} = {{\left[ 0,1,\cdots ,\left\lfloor K/2 \right\rfloor -1 \right]}^{T}}$
and
${{\bm\kappa }}_{{K}/{2}, \text{B}} = {{\left[ 0,1,\cdots ,\left\lceil K/2 \right\rceil -1 \right]}^{T}}$.
\end{proposition}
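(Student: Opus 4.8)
**

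The goal is to maximize $\mathcal{L}_1(\bm{\varphi}) = S_{\varphi,2} - S_{\varphi,1}^2$ over integer index vectors $\bm{\varphi} \in \{0,\dots,K_{\text{all}}-1\}^K$. The key observation is that $\mathcal{L}_1(\bm{\varphi}) = \frac{1}{K}\sum_k \varphi_k^2 - \left(\frac{1}{K}\sum_k \varphi_k\right)^2$ is exactly the (population) variance of the chosen indices. So the plan is to recast the optimization as maximizing the variance of $K$ integers drawn (with repetition allowed, a priori) from $\{0,\dots,K_{\text{all}}-1\}$, and then argue that the variance-maximizing configuration is the symmetric two-cluster arrangement given in \eqref{Optimization_K_even}.

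First I would establish that at an optimum the indices are pushed to the two extreme values $0$ and $K_{\text{all}}-1$. The clean way is a convexity / exchange argument: since $\mathcal{L}_1$ is a sum of a convex term ($\frac{1}{K}\sum \varphi_k^2$) and the negative square of an affine term, for fixed mean $S_{\varphi,1}$ maximizing $\mathcal{L}_1$ is equivalent to maximizing $\sum_k \varphi_k^2$. Holding all other indices fixed and viewing $\mathcal{L}_1$ as a function of a single $\varphi_k$ on the interval $[0, K_{\text{all}}-1]$, it is a downward-opening-free convex quadratic, so its maximum over the interval is attained at an endpoint. Iterating this coordinatewise exchange shows every optimal $\varphi_k \in \{0, K_{\text{all}}-1\}$. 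The problem then reduces to choosing how many of the $K$ indices sit at $0$ versus at $K_{\text{all}}-1$.

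Next I would optimize over that split. If $n$ indices equal $K_{\text{all}}-1$ and $K-n$ equal $0$, a direct computation gives $\mathcal{L}_1 = (K_{\text{all}}-1)^2 \frac{n(K-n)}{K^2}$, and maximizing the integer-valued $n(K-n)$ yields $n = K/2$ when $K$ is even and $n \in \{\lceil K/2\rceil, \lfloor K/2\rfloor\}$ when $K$ is odd. This exactly reproduces the balanced split in \eqref{Optimization_K_even}: $K/2$ (or $\lceil K/2\rceil$) entries at the low end and the rest at the high end, which matches $\bm\kappa_{K/2,\text{A}}$ and $(K_{\text{all}}-1)\mathbf{1} - \bm\kappa_{K/2,\text{B}}$ once one notes that the offsets $0,1,\dots$ inside $\bm\kappa$ are a device for producing distinct symbol indices while keeping them clustered near each endpoint.

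The main obstacle, and the only genuinely delicate point, is the distinctness constraint: the symbols $\varphi_k$ must be \emph{distinct} integers (they index distinct OFDM symbols), so one cannot literally place $K/2$ copies at $0$. This is why \eqref{Optimization_K_even} uses the staircases $\bm\kappa_{K/2,\text{A}} = [0,1,\dots,\lfloor K/2\rfloor-1]^T$ near the low end and the mirrored staircase near the high end, rather than exact point masses. The argument must therefore be completed by showing that, subject to distinctness, the two tightest half-length clusters at the two extremes maximize the variance; I would do this by a second exchange argument showing any feasible index strictly interior to the two staircases can be moved outward toward an endpoint without decreasing $\mathcal{L}_1$, until the two clusters become the consecutive blocks $\{0,\dots,\lfloor K/2\rfloor-1\}$ and $\{K_{\text{all}}-\lceil K/2\rceil,\dots,K_{\text{all}}-1\}$. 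Verifying that this boundary-packed configuration is the feasible variance maximizer (and handling the even/odd parity to confirm both listed choices are optimal when $K$ is odd) is the crux; the remaining algebra evaluating $\mathcal{L}_1$ at \eqref{Optimization_K_even} is routine and can be suppressed, consistent with the paper's remark that the details are omitted for conciseness.
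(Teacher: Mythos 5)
The paper gives no proof of this proposition to compare against --- it states that ``the details are suppressed here for conciseness'' --- so your proposal can only be judged on its own merits. Your framing is the natural one and, in my view, the intended one: $\mathcal{L}_1(\bm{\varphi})=S_{\varphi,2}-S_{\varphi,1}^2$ is the empirical variance of the selected indices, the coefficient of $\varphi_k^2$ in $\mathcal{L}_1$ with the other entries held fixed is $\tfrac{1}{K}-\tfrac{1}{K^2}>0$ so the objective is coordinatewise convex and extremized at the boundary of the feasible range, and you correctly identify that the distinctness of the $\varphi_k$ (they index $K$ distinct symbols out of $K_\text{all}$) is the only reason the answer is a pair of staircases rather than two point masses at $0$ and $K_\text{all}-1$. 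That diagnosis is right, and without distinctness the stated $\bm{\varphi}_\text{op}^\text{gen}$ would in fact \emph{not} be optimal, so flagging it is essential.

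Two points in your sketch need tightening before this is a proof. First, in the exchange step the correct move is not ``outward toward an endpoint'' but \emph{away from the vertex of the coordinatewise parabola}, i.e.\ to the feasible position farthest from the mean $\tfrac{1}{K-1}\sum_{k\ne j}\varphi_k$ of the remaining indices; an interior index moved toward the ``wrong'' endpoint can decrease $\mathcal{L}_1$ before increasing it, and you also need an argument (monotone increase of the objective plus finiteness of configurations) that iterating these swaps terminates at the two consecutive boundary blocks rather than at some other coordinatewise-stable configuration. Second, your balancedness computation $\mathcal{L}_1=(K_\text{all}-1)^2\,n(K-n)/K^2$ is carried out only for the infeasible point-mass relaxation; since that relaxation's optimum is not attained, you must re-verify that among the feasible block configurations $\{0,\dots,m-1\}\cup\{K_\text{all}-(K-m),\dots,K_\text{all}-1\}$ the variance is still maximized at $m=\lfloor K/2\rfloor$ or $\lceil K/2\rceil$ (it is, and the computation confirms that both parity choices in the proposition tie when $K$ is odd, but this does not follow formally from the relaxed expression). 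Both repairs are routine, and with them your argument establishes the proposition.
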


\vspace{0.3em}

Proposition \ref{proposition_general_optimazition_result} suggests that,
to optimize the theoretical Doppler sensing performance using CSI ratios in noise-limited cases,
the OFDM symbols should be selected \emph{{near the two ends} of the available OFDM symbols}.
Moreover, the symbol indexes are symmetrical to the center and
occupy the maximum observation time of the available symbols.
Additionally, the symmetry pattern in $\bm{\varphi }_{\text{op}}^{\text{gen}}$
results in a nice feature of the Doppler pattern.
The feature can be used for further simplifying the waveform design;
please refer to Appendix \ref{Appendix_of_Lemma_Envelope_in_aoifhnaol} for its proof.

\vspace{0.3em}

\begin{proposition}\label{Lemma_Envelope_in_aoifhnaol}
If the indexes of the OFDM symbols used for sensing are symmetric against the point ${\left(K_\text{all} - 1\right)}/{2}$,
i.e., $\bm{\varphi }$ has a structure of $\bm{\varphi }={{\left[ {{{\bm{\tilde{\varphi }}}}^{T}},{{K}_{\text{all}}}-1-{{{\bm{\tilde{\varphi }}}}^{T}} \right]}^{T}}$,
where
$\bm{\tilde{\varphi }}\in {{\mathbb{Z}}^{{K}/{2}\times 1}}$,
the envelope of $\mathcal{P}\left( {\bm{\varphi }}, f_d \right)$
is solely determined by half indexes and can be given by:
\begin{align}\label{afoliakjlafihjdofih}
{{\mathcal{E}_\mathcal{P}}}\left( \bm{\tilde{\varphi }},{{f}_{d}} \right)=\left| \frac{2}{K}\sum\limits_{k=0}^{\frac{K}{2}-1}{{{e}^{j2\pi {{\left[ {\bm{\tilde{\varphi }}} \right]}_{k}}{{T}_{0}}{{f}_{d}}}}} \right|.
\end{align}
\end{proposition}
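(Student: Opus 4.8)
The plan is to substitute the assumed symmetric index structure directly into the definition of $\mathcal{P}$ and then algebraically split the result into a slowly varying magnitude factor and a rapidly oscillating cosine factor; the former is exactly $\mathcal{E}_\mathcal{P}$. Writing $\omega\triangleq 2\pi T_0 f_d$ for brevity and using $\bm{\varphi}=[\bm{\tilde{\varphi}}^T,\,K_\text{all}-1-\bm{\tilde{\varphi}}^T]^T$, I would first split the defining sum according to the two halves of $\bm{\varphi}$:
\begin{equation}
\sum_{k=0}^{K-1}e^{j\varphi_k\omega}=\sum_{k=0}^{K/2-1}e^{j[\bm{\tilde{\varphi}}]_k\omega}+e^{j(K_\text{all}-1)\omega}\sum_{k=0}^{K/2-1}e^{-j[\bm{\tilde{\varphi}}]_k\omega},
\end{equation}
where in the second term the common factor $e^{j(K_\text{all}-1)\omega}$ has already been pulled out.

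Because each $[\bm{\tilde{\varphi}}]_k$ is an integer (hence real), the residual sum in the second term is the complex conjugate of $S\triangleq\sum_{k=0}^{K/2-1}e^{j[\bm{\tilde{\varphi}}]_k\omega}$, so the total equals $S+e^{j(K_\text{all}-1)\omega}S^*$. Writing $S=|S|e^{j\psi}$ with $\psi\triangleq\angle S$ and factoring out the half-phase $e^{j(K_\text{all}-1)\omega/2}$ collapses the two conjugate terms into a single real cosine:
\begin{equation}
S+e^{j(K_\text{all}-1)\omega}S^*=2|S|\,e^{j(K_\text{all}-1)\omega/2}\cos\!\left(\psi-\tfrac{(K_\text{all}-1)\omega}{2}\right).
\end{equation}
Since the leading exponential has unit modulus, taking the modulus and dividing by $K$ gives $\mathcal{P}(\bm{\varphi},f_d)=\frac{2|S|}{K}\left|\cos\!\left(\psi-\tfrac{(K_\text{all}-1)\omega}{2}\right)\right|$.

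Finally, I would identify the prefactor $\frac{2|S|}{K}=\left|\frac{2}{K}\sum_{k=0}^{K/2-1}e^{j2\pi[\bm{\tilde{\varphi}}]_k T_0 f_d}\right|$ with the claimed $\mathcal{E}_\mathcal{P}(\bm{\tilde{\varphi}},f_d)$, which manifestly depends only on the half-index vector $\bm{\tilde{\varphi}}$. Since $|\cos(\cdot)|\le 1$ with equality attained periodically in $f_d$, this prefactor is precisely the upper envelope of the oscillating pattern, completing the argument. I do not expect a genuinely hard step here: the computation reduces to a single phasor identity. The only point deserving care is the envelope interpretation itself, namely that as $K_\text{all}$ grows the cosine oscillates much faster in $f_d$ than $|S|$ varies, so the peaks of $\mathcal{P}$ densely trace out $\mathcal{E}_\mathcal{P}$ rather than merely being bounded by it, consistent with the oscillatory behaviour seen in Fig.~\ref{Fig_1_L2}(b).
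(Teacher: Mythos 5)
Your proposal is correct and follows essentially the same route as the paper's Appendix proof: split the sum over the two symmetric halves, recognize the second half as the conjugate of the first, factor out the half-phase $e^{j(K_\text{all}-1)\omega/2}$ to collapse the pair into an envelope times a fast cosine, and identify the envelope with $\mathcal{E}_\mathcal{P}$. If anything, you are slightly more careful than the paper in retaining the absolute value around the cosine after taking the modulus.
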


\vspace{0.3em}

Employing Proposition \ref{Lemma_Envelope_in_aoifhnaol},
the envelope of
{$\mathcal{P}\left( {{\bm{\varphi }}_{\text{op}}^{\text{gen}}}, f_d \right)$}
is
${\mathcal{E}_\mathcal{P}}\left( {{\bm{\tilde{\varphi} }}_{\text{op}}^{\text{gen}}}, f_d \right)=\left| \frac{\text{sinc}\left(  \frac{K}{2}{{T}_{0}}{{f}_{d}} \right)}{\text{sinc}\left(  {{T}_{0}}{{f}_{d}} \right)} \right|$
whose mainlobe width is related with ${K}/{2}$ instead of $K_\text{all}$.
{In Fig. \ref{Fig_1_L2} (b),
we plot {$\mathcal{P}\left( {{\bm{\varphi }}_{\text{op}}^{\text{gen}}}, f_d \right)$} and its envelope
${\mathcal{E}_\mathcal{P}}\left( {{\bm{\tilde{\varphi} }}_{\text{op}}^{\text{gen}}}, f_d \right)$
based on Proposition \ref{proposition_general_optimazition_result} and Proposition \ref{Lemma_Envelope_in_aoifhnaol}.}
It can be seen that
$\mathcal{P}\left( {{\bm{\varphi }}_{\text{op}}^{\text{gen}}}, f_d \right)$
is modulated by a ``high-frequency oscillating signal'', which is essentially caused by the signal in (\ref{alidkfjaldikhjflahf}).
In such cases, the envelope
${\mathcal{E}_\mathcal{P}}\left( {{\bm{\tilde{\varphi} }}_{\text{op}}^{\text{gen}}}, f_d \right)$
plays a critical role in the Doppler sensing performance rather than $\mathcal{P}\left( {{\bm{\varphi }}_{\text{op}}^{\text{gen}}}, f_d \right)$.

As illustrated in Section \ref{Section_III},
the mainlobe of the Doppler pattern or its envelope asserts the strong interference between static and dynamic paths.
A wide mainlobe or high sidelobes can degrade Doppler sensing performance.
Next, we show that designing the sensing symbol indexes in $\bm{\varphi}$ can help minimize mainlobe width and sidelobe levels in a Doppler pattern.

\subsection{Interference-limited Waveform Design}\label{Section_IV_weighted_balabala}
In this subsection,
we optimize ${{\varphi }_{k}}$
{to reduce}
the mainlobe width and the sidelobe level
of
${\mathcal{E}_\mathcal{P}}\left( {\bm{\tilde\varphi }}, f_d \right)$
{for better sensing performance}.
For ease of illustration,
we {focus on the case of even} $K$.

Proposition \ref{proposition_general_optimazition_result} indicates that the maximal $\mathcal{L}_1\left( {\bm{\varphi }} \right)$ is obtained
when $\varphi_k$ is symmetrically located near the two ends of all available symbols.
Now,
we need to release this constraint
in exchange of the reduced mainlobe and sidelobe of Doppler pattern.
We introduce the structure of $\bm{\varphi}$,
as depicted in Fig. \ref{Fig_text_optimization_figure},
to factor in the trade-off between optimizing $\mathcal{L}_1\left( {\bm{\varphi }} \right)$
and
$\mathcal{P}\left( {\bm{\varphi }}, f_d \right)$.
Specifically,
we fix $2K_F$ OFDM symbols
near the two ends of $\bm{\varphi }_{\text{op}}^{\text{gen}}$
as given in (\ref{Optimization_K_even}) and allow the remaining $2K_B$ OFDM symbols
to be reconfigured for optimizing $\mathcal{P}\left( {\bm{\varphi }}, f_d \right)$.
Since $K_F + K_B = {K}/{2}$,
the value of $K_B$ controls the trade off between the two objectives.
In an extreme case of $K_B={{K}/{2}}$, the complete focus is placed on reducing the interference between static and dynamic paths.

Based on Fig. \ref{Fig_text_optimization_figure},
${\bm{\tilde{\varphi }}}$ can be further rewritten as
$\bm{\tilde{\varphi }}={{\left[ \bm{\tilde{\varphi }}_{F}^{T},\bm{\tilde{\varphi }}_{B}^{T} \right]}^{T}}\in {{\mathbb{Z}}^{{K}/{2}\;\times 1}}$,
where
${{{\bm{\tilde{\varphi }}}}_{F}}={{\left[ 0,1,\cdots ,{{K}_{F}}-1 \right]}^{T}}\in {{\mathbb{Z}}^{{{K}_{F}}\times 1}}$,
${{{\bm{\tilde{\varphi }}}}_{B}}\in {{\mathbb{Z}}^{{{K}_{B}}\times 1}}$.
{This leaves} ${{{\bm{\tilde{\varphi }}}}_{B}}$
{for optimization next}.

Since ${{{\bm{\tilde{\varphi }}}}_{B}}$
is an integer vector,
it can be solved via an integer optimization problem,
which is generally challenging to solve.
Fortunately,
we have the following result to substantially simply the task of optimizing $\bm{\tilde{\varphi}}_B$.

\begin{figure}[!t]
\centering
\includegraphics[width=3.4in]{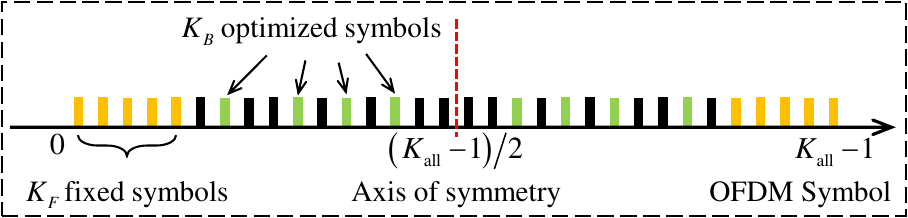}
\caption{Illustration of the structure of $\bm{\varphi}$ in optimization algorithm,
with $K_\text{all} = 32$, $K = 18$, $K_F = 5$, and $K_B = 4$,
where $K_F + K_B = {K}/{2}$.}
\label{Fig_text_optimization_figure}
\end{figure}

\begin{proposition}\label{proposition_integer_operation}
For ${\bm{{\tilde{\varphi }}}_{B}^{\prime}} = {\bm{{\tilde{\varphi }}}_{B}} + \bm{\delta}_B$,
where each entry of $\bm{\delta}_B$ is continuous in $\left(-0.5, 0.5\right)$, and hence,
${\bm{{\tilde{\varphi }}}_{B}} = \text{round}\left\{{\bm{{\tilde{\varphi }}}_{B}^{\prime}}\right\}$,
we have:
\begin{align}\label{round_operation_relationship}
\left| {\mathcal{E}_\mathcal{P}}\left( \left[ \begin{matrix}
   {{{\bm{\tilde{\varphi }}}}_{F}}  \\
   {{{\bm{\tilde{\varphi }}}}_{B}}  \\
\end{matrix} \right],{{f}_{d}} \right)-{\mathcal{E}_\mathcal{P}}\left( \left[ \begin{matrix}
   {{{\bm{\tilde{\varphi }}}}_{F}}  \\
   {{{\bm{\tilde{\varphi }}}}_{B}^{\prime}}  \\
\end{matrix} \right],{{f}_{d}} \right) \right| < {\frac{{\pi{K}_{B}}}{{K}/{2}}{\left|{{T}_{0}}{{{f}_{d}}}\right|}},
\end{align}
where
$\left|{{T}_{0}}{{f}_{d}}\right| \le {{T}_{0}}{{f}_{d, \text{M}}}$.
\end{proposition}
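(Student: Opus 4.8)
The plan is to bound the envelope difference directly, exploiting the fact that the front block $\bm{\tilde{\varphi}}_F$ is common to both arguments, so that only the $K_B$ back entries contribute. Introduce the complex-valued partial sum $g(\bm{x})=\frac{2}{K}\sum_{k=0}^{K/2-1}e^{j2\pi[\bm{x}]_k T_0 f_d}$, so that by (\ref{afoliakjlafihjdofih}) the envelope is $\mathcal{E}_\mathcal{P}(\bm{x},f_d)=|g(\bm{x})|$. First I would apply the reverse triangle inequality $\big||g(\bm{x})|-|g(\bm{y})|\big|\le|g(\bm{x})-g(\bm{y})|$, which reduces the left-hand side of (\ref{round_operation_relationship}) to estimating $|g(\bm{x})-g(\bm{y})|$ with $\bm{x}=[\bm{\tilde{\varphi}}_F^T,\bm{\tilde{\varphi}}_B^T]^T$ and $\bm{y}=[\bm{\tilde{\varphi}}_F^T,(\bm{\tilde{\varphi}}_B')^T]^T$.

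Next, since the two arguments share the identical $\bm{\tilde{\varphi}}_F$ block, the corresponding $K_F$ exponentials cancel exactly in the subtraction, leaving a sum over only the $K_B$ differing entries. Writing $\bm{\tilde{\varphi}}_B'=\bm{\tilde{\varphi}}_B+\bm{\delta}_B$ and factoring each surviving term as $e^{j2\pi[\bm{\tilde{\varphi}}_B]_i T_0 f_d}\big(1-e^{j2\pi[\bm{\delta}_B]_i T_0 f_d}\big)$, the modulus of the leading exponential equals one, so the triangle inequality over the $K_B$ terms reduces the task to bounding each factor $|1-e^{j2\pi[\bm{\delta}_B]_i T_0 f_d}|$.

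The key elementary estimate is $|1-e^{j\theta}|=2|\sin(\theta/2)|\le|\theta|$. Setting $\theta=2\pi[\bm{\delta}_B]_i T_0 f_d$ gives $|1-e^{j2\pi[\bm{\delta}_B]_i T_0 f_d}|\le 2\pi|[\bm{\delta}_B]_i|\,|T_0 f_d|$; because $\bm{\tilde{\varphi}}_B=\text{round}\{\bm{\tilde{\varphi}}_B'\}$ forces $|[\bm{\delta}_B]_i|<0.5$, each factor is strictly below $\pi|T_0 f_d|$. Summing over the $K_B$ terms and restoring the prefactor $2/K$ then yields $\frac{2K_B}{K}\pi|T_0 f_d|=\frac{\pi K_B}{K/2}|T_0 f_d|$, which is precisely the claimed bound.

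I expect the argument to be short and essentially mechanical, with no deep obstacle. The only delicate points are the bookkeeping—correctly identifying which exponentials cancel and tracking the $2/K$ normalization—and the strictness of the final inequality, which hinges on the open-interval constraint $[\bm{\delta}_B]_i\in(-0.5,0.5)$ together with $f_d\ne 0$ inside the mainlobe region $|T_0 f_d|\le T_0 f_{d,\text{M}}$. Conceptually, what matters is the interpretation: the result certifies that rounding the continuous relaxation of $\bm{\tilde{\varphi}}_B$ perturbs the Doppler-pattern envelope by an amount that is controllably small throughout the mainlobe region, thereby justifying solving the hard integer design problem via a continuous surrogate followed by rounding.
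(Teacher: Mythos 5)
Your proposal is correct and follows the same decomposition as the paper's Appendix E: reverse triangle inequality to reduce the envelope difference to $|g(\bm{x})-g(\bm{y})|$, cancellation of the shared $\bm{\tilde{\varphi}}_F$ block, factoring each surviving term as $e^{j2\pi[\bm{\tilde{\varphi}}_B]_k T_0 f_d}\bigl(1-e^{j2\pi[\bm{\delta}_B]_k T_0 f_d}\bigr)$, and then the bound $|[\bm{\delta}_B]_k|<1/2$. The one place you genuinely diverge is the middle step: the paper linearizes $1-e^{j\theta}\approx -j\theta$ and writes an ``$\approx$'' before applying the final bound, so its chain of relations is only an approximate inequality, whereas your use of the exact identity $|1-e^{j\theta}|=2|\sin(\theta/2)|\le|\theta|$ makes the entire chain rigorous. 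That is a small but real improvement, since it turns the stated strict inequality into a theorem rather than an approximation-backed claim. Your caveat about strictness failing at $f_d=0$ (where both sides vanish) is also a legitimate edge case that the paper glosses over.
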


\begin{proof}
Please refer to Appendix \ref{Appendix_of_Proposition_Integer_Operation}.
\end{proof}

We note that the right-hand side of (\ref{round_operation_relationship}) is generally smaller than ${{\pi}/{30}}$.
As seen from Fig. \ref{Fig_1_L2}, $\left| {T_0}{f_d} \right| \le {T_0} {f_{d,\text{M}}} < {{1}/{30}}$.
Note that Fig. \ref{Fig_1_L2} (b) actually illustrates the maximum mainlobe width.
The vector $\bm{\varphi}$ used for plotting the figure is obtained by maximizing $\mathcal{L}_1\left( {\bm{\varphi }} \right)$ in Proposition \ref{proposition_general_optimazition_result} without considering the Doppler pattern.
To further validate this, let us check a numerical example.
For instance, considering $\bm{\varphi }_{\text{op}}^{\text{gen}}$ defined in (\ref{Optimization_K_even}),
its mainlobe width can be calculated as ${{T}_{0}}{{f}_{d,\text{M}}}\approx \frac{0.44}{{K}/{2}}$.
When $K = 128$ and $K_B = 50$, (\ref{round_operation_relationship}) is lower than $1.69\times {{10}^{-2}}$.

Proposition \ref{proposition_integer_operation} suggests that
the rounding operation has negligible impact on
${\mathcal{E}_\mathcal{P}}\left( {\bm{\tilde\varphi }}, f_d \right)$'s mainlobe.
This enables us to conveniently solve a non-integer optimization problem
for optimizing continuous
${\bm{{\tilde{\varphi }}}_{B}^{\prime}}$
and then rounding it element-wise to obtain
${\bm{{\tilde{\varphi }}}_{B}}$.

It is obvious that simultaneously minimizing the mainlobe width and sidelobe level of patterns shown in Fig. \ref{Fig_1_L2} are contradicting objectives.
Thus, we choose to minimize the sidelobe level, subject to an upper bound of the mainlobe width, as denoted by
$f_{d, \text{M}}^{\text{ex}}$.
For the sidelobe suppression, we adopt the
weighted sidelobe power ${{\mathcal{P}}_{\text{SL}}}$,
as given by:
\begin{align}\label{weighted_sidelobe_power_aasdlka}
{{\mathcal{P}}_{\text{SL}}}=\int_{f_{d,\text{M}}^{\text{ex}}}^{\frac{1}{2{{T}_{0}}}}{\varpi \left( {{f}_{d}} \right){{\left| {\mathcal{E}_\mathcal{P}}\left( {{{\bm{\tilde{\varphi }}}}},{{f}_{d}} \right) \right|}^{2}}d{{f}_{d}}},
\end{align}
where $\varpi \left( {{f}_{d}} \right)$
denotes the weighted function with respect to $f_d$.
Note that
$\varpi \left( {{f}_{d}} \right)$
controls the proportion of
${\mathcal{E}_\mathcal{P}}\left( {\bm{\tilde\varphi }}, f_d \right)$ on different $f_d$ in ${{\mathcal{P}}_{\text{SL}}}$.
We can formulate the following optimization problem to achieve the strategy illustrated in Section \ref{Section_strategy_of_optimizing}:
\begin{align}\label{optimal_formula_in_phi_optimization}
\nonumber \underset{\begin{smallmatrix}
 {{{\bm{\tilde{\varphi }}}}_{B}} \\
\end{smallmatrix}}{\mathop{\min }} & \ \ {{\mathcal{P}}_{\text{SL}}} \\
\nonumber \text{s.t.} \ \, & {{\mathbf{1}}_{{{K}_{B}}\times 1}}{{K}_{F}}\le {{{\bm{\tilde{\varphi }}}}_{B}}\le {{\mathbf{1}}_{{{K}_{B}}\times 1}}\left\lfloor \frac{{{K}_{\text{all}}}-1}{2} \right\rfloor \\
\nonumber & f_{d,\text{M}}^{\text{ex}}\le {\left|{{f}_{d}}\right|}\le \frac{1}{2{{T}_{0}}}\\
\nonumber & \bm{\varphi }={{\left[ {{{\bm{\tilde{\varphi }}}}^{T}},{{K}_{\text{all}}}-1-{{{\bm{\tilde{\varphi }}}}^{T}} \right]}^{T}} \\
 & \bm{\tilde{\varphi }}={{\left[ \bm{\tilde{\varphi }}_{F}^{T},\bm{\tilde{\varphi }}_{B}^{T} \right]}^{T}}.
\end{align}

Inspired by \cite{8391730},
we introduce a perturbation on $\bm{\tilde{\varphi}_B}$
and iteratively {optimize} the perturbation
{to minimize the objective function in (\ref{weighted_sidelobe_power_aasdlka}).}
The vector updated in the ${\left( i+1 \right)}$ step is expressed as:
\begin{align}\label{update_vector_phi_B}
\bm{\tilde{\varphi }}_{B}^{\left( i+1 \right)}=\bm{\tilde{\varphi }}_{B}^{\left( i \right)}+\Delta \bm{\tilde{\varphi }}_{B}^{\left( i+1 \right)},
\end{align}
where
$\Delta \bm{\tilde{\varphi }}_{B}^{\left( i+1 \right)} \in \mathbb{R}^{K_B \times 1}$
denotes
a perturbation amount for $\bm{\tilde{\varphi }}_{B}^{\left( i \right)}$.
Note that the value range of each element in $\bm{\tilde{\varphi }}_{B}$ is from
${{K}_{F}}$
to
$\left\lceil \frac{{{K}_{\text{all}}}-1}{2} \right\rceil -1$.
We consider the elements of
the initialization vector $\bm{\tilde{\varphi }}_{B}^{\left( 0 \right)}$
distributed at equal intervals in the above range.
This ensures that the sidelobes nearby the mainlobe have relatively small values.
The $i$-th element of $\bm{\tilde{\varphi }}_{B}^{\left( 0 \right)}$ is thus given by:
\begin{align}\label{asdlfjihasdkolguhweuihf}
{{\left[ \bm{\tilde{\varphi }}_{B}^{\left( 0 \right)} \right]}_{i}}={{K}_{F}}+\text{round}\left\{ i\frac{\left\lceil \frac{{{K}_{\text{all}}}-1}{2} \right\rceil -1-{{K}_{F}}}{{{K}_{B}}-1} \right\},
\end{align}
where
$i=0,\cdots ,{{K}_{B}}-1$.

Assume that the value of $\Delta \bm{\tilde{\varphi }}_{B}^{\left( i+1 \right)} $ is small,
${{\mathcal{E}_\mathcal{P}}}\left( \bm{\tilde{\varphi }},{{f}_{d}} \right)$,
as given in (\ref{afoliakjlafihjdofih}), can be approximated by:
\begin{align}\label{asdohfiljhaljh}
\nonumber  & {\mathcal{E}_\mathcal{P}}\left( \bm{\tilde{\varphi }},{{f}_{d}} \right)\\
\nonumber   &=\frac{2}{K}\left| \sum\limits_{k=0}^{{{K}_{F}}-1}{{{e}^{j2\pi {{\left[ {{{\bm{\tilde{\varphi }}}}_{F}} \right]}_{k}}{{T}_{0}}{{f}_{d}}}}}+\sum\limits_{k=0}^{{{K}_{B}}-1}{{{e}^{j2\pi {{\left[ \bm{\tilde{\varphi }}_{B}^{\left( i \right)}+\Delta \bm{\tilde{\varphi }}_{B}^{\left( i+1 \right)} \right]}_{k}}{{T}_{0}}{{f}_{d}}}}} \right| \\
\nonumber  & =\frac{2}{K}\left| {{\mathbf{f}}^{T}}\left( {{{\bm{\tilde{\varphi }}}}_{F}},{{f}_{d}} \right){{\mathbf{1}}_{{{K}_{F}}\times 1}}+{{\mathbf{f}}^{T}}\left( \bm{\tilde{\varphi }}_{B}^{\left( i \right)},{{f}_{d}} \right)\mathbf{f}\left( \Delta \bm{\tilde{\varphi }}_{B}^{\left( i+1 \right)},{{f}_{d}} \right) \right| \\
 & \approx \frac{2}{K}\left| {{\mathbf{f}}^{T}}\left( \bm{\tilde{\varphi }},{{f}_{d}} \right){{\mathbf{1}}_{{K}/{2}\;\times 1}}+j2\pi {{T}_{0}}{{f}_{d}}{{\mathbf{f}}^{T}}\left( \bm{\tilde{\varphi }}_{B}^{\left( i \right)},{{f}_{d}} \right)\Delta \bm{\tilde{\varphi }}_{B}^{\left( i+1 \right)} \right|,
\end{align}
where
$\mathbf{f}\left( \bm{\tilde{\varphi }},{{f}_{d}} \right)\triangleq {{e}^{j2\pi \bm{\tilde{\varphi }}{{T}_{0}}{{f}_{d}}}}$.

This enables
the optimization problem
(\ref{optimal_formula_in_phi_optimization}) to be
converted
into
minimizing the following function in terms of $\Delta \bm{\tilde{\varphi }}_{B}$:
\begin{align}\label{asukoldj4398iokdf}
\mathcal{H}_{\text{op}}^{\left( i+1 \right)}=\mathcal{P}_{\text{SL}}^{\left( i+1 \right)}+a{{\left( {{\mathbf{w}}^{\left( i \right)}} \right)}^{T}}{{\left( \Delta \bm{\tilde{\varphi }}_{B}^{\left( i+1 \right)} \right)}^{2}},
\end{align}
where
$a \in \mathbb{R}^{+}$ is a weight, which controls the convergence rate.
Moreover,
${{\mathbf{w}}^{\left( i \right)}}={{\left[ w_{0}^{\left( i \right)},\cdots ,w_{{{K}_{B}}-1}^{\left( i \right)} \right]}^{T}}\in {{\mathbb{R}}^{{{K}_{B}}\times 1}}$
is a weighting vector restricting the value range of
${\bm{{\tilde{\varphi }}}_{B}^{\left(i\right)}}$, where
${{w}_{{{k}_{B}}}^{\left(i\right)}}=1$
if
${{K}_{F}}+1<{{\left[ {{{\bm{\tilde{\varphi }}}}_{B}^{\left(i\right)}} \right]}_{{{k}_{B}}}}<\left\lfloor \frac{{{K}_{\text{all}}}-1}{2} \right\rfloor -1$,
and
${{w}_{{{k}_{B}}}^{\left(i\right)}}=\inf$
if
${{\left[ {{{\bm{\tilde{\varphi }}}}_{B}^{\left(i\right)}} \right]}_{{{k}_{B}}}}\le {{K}_{F}}+1$
or
${{\left[ {{{\bm{\tilde{\varphi }}}}_{B}^{\left(i\right)}} \right]}_{{{k}_{B}}}}\ge \left\lfloor \frac{{{K}_{\text{all}}}-1}{2} \right\rfloor -1$.
Note that the second term in (\ref{asukoldj4398iokdf})
{is subject to the fact that} each element in $\Delta \bm{\tilde{\varphi }}_{B}^{\left( i+1 \right)} $ has a small value.

As derived in Appendix \ref{Appendix_iteration_algorithm_result}, the
updated vector $\Delta \bm{\tilde{\varphi }}_{B}^{\left( i+1 \right)}$ can be solved as:
\begin{align}\label{iteration_algorithm_result}
 \Delta \bm{\tilde{\varphi }}_{B}^{\left( i+1 \right)}={{\left( \operatorname{Re}\left\{ \mathbf{V}_{2}^{\left( i \right)} \right\}+a\text{diag}\left\{ \mathbf{w}^{\left( i \right)} \right\} \right)}^{-1}}\operatorname{Im}\left\{ \mathbf{V}_{1}^{\left( i \right)} \right\}{{\mathbf{1}}_{\frac{K}{2}\times 1}},
\end{align}
where
$\mathbf{V}_{1}^{\left( i \right)}=\frac{8\pi {{T}_{0}} }{{{K}^{2}}}\int_{f_{d,\text{M}}^{\text{ex}}}^{\frac{1}{2{{T}_{0}}}}{\varpi \left( {{f}_{d}} \right){{f}_{d}}\mathbf{f}\left( \bm{\tilde{\varphi }}_{B}^{\left( i \right)},{{f}_{d}} \right){{\mathbf{f}}^{H}}\left( \bm{\tilde{\varphi }},{{f}_{d}} \right)d{{f}_{d}}}$,
$\mathbf{V}_{2}^{\left( i \right)}=\frac{4{{\left( 2\pi {{T}_{0}} \right)}^{2}}}{{{K}^{2}}}\int_{f_{d,\text{M}}^{\text{ex}}}^{\frac{1}{2{{T}_{0}}}}{\varpi \left( {{f}_{d}} \right)f_{d}^{2}\mathbf{f}\left( \bm{\tilde{\varphi }}_{B}^{\left( i \right)},{{f}_{d}} \right){{\mathbf{f}}^{H}}\left( \bm{\tilde{\varphi }}_{B}^{\left( i \right)},{{f}_{d}} \right)d{{f}_{d}}}$,
and
$\mathbf{V}_{1}^{\left( i \right)}\in {{\mathbb{C}}^{{{K}_{B}}\times \frac{K}{2}}}$,
$\mathbf{V}_{2}^{\left( i \right)}\in {{\mathbb{C}}^{{{K}_{B}}\times {{K}_{B}}}}$.
When $\left| \Delta \bm{\tilde{\varphi }}_{B}^{\left( i+1 \right)} - \Delta \bm{\tilde{\varphi }}_{B}^{\left( i \right)} \right|$
is lower than a preset threshold, the iteration converges and
we obtain $ \bm{\tilde{\varphi }}_{B}^{\left(I\right)}$
and then take the rounding of $ \bm{\tilde{\varphi }}_{B}^{\left(I\right)}$
as the optimized $ \bm{\tilde{\varphi }}_{B}$:
\begin{align}\label{erowklujfhnedljkvgn}
\bm{\tilde{\varphi }}_{B}^{\text{op}} = \text{round}\left\{\bm{\tilde{\varphi }}_{B}^{\left(I\right)}\right\}.
\end{align}
Finally, we construct the optimal $\bm{\varphi}$:
\begin{align}\label{34oikldfdf}
{{\bm{\varphi }}^{\text{op}}} \! = \! {{\left[ \bm{\tilde{\varphi }}_{F}^{T},{{\left( \bm{\tilde{\varphi }}_{B}^{\text{op}} \right)}^{T}},{{K}_{\text{all}}} \! - \! 1 \! - \! \bm{\tilde{\varphi }}_{F}^{T},{{K}_{\text{all}}} \! - \! 1 \! - \! {{\left( \bm{\tilde{\varphi }}_{B}^{\text{op}} \right)}^{T}} \right]}^{T}}.
\end{align}

Algorithm \ref{Algorithm1}
summarizes the proposed waveform design for the interference-limited Doppler sensing using CSI ratios.
\begin{algorithm}[!t]
\caption{Proposed Waveform Design for Interference-Limited Doppler Sensing}\label{Algorithm1}
\label{Algorithm1}
\hspace*{0.02in}{\bf Input:}
the number of the available OFDM symbols $K_\text{all}$,
the number of the symbols allocated for sensing $K$,
the number of the fixed symbols $K_F$,
the expected mainlobe width $f_{d, \text{M}}^{\text{ex}}$,
the symbol interval $T_0$,
and the weighting function $\varpi \left( {{f}_{d}} \right)$.\\
\hspace*{0.02in}{\bf Output:}
the optimal OFDM symbol vector ${{\bm{\varphi }}^{\text{op}}}$.

\begin{algorithmic}[1]
\STATE
Constructing ${{{\bm{\tilde{\varphi }}}}_{F}}={{\left[ 0,1,\cdots ,{{K}_{F}}-1 \right]}^{T}}\in {{\mathbb{Z}}^{{{K}_{F}}\times 1}}$
and
calculating $K_B = {K}/{2} - K_F$.\\
\STATE
\textbf{Initialize:}
Performing (\ref{asdlfjihasdkolguhweuihf}) and $i=0$.
\WHILE { $\Delta \bm{\tilde{\varphi }}_{B}^{\left( i+1 \right)} - \Delta \bm{\tilde{\varphi }}_{B}^{\left( i \right)}>$ threshold}
\STATE
Updating ${{\mathbf{w}}^{\left( i \right)}}$ by
${\bm{{\tilde{\varphi }}}_{B}^{\left(i\right)}}$
and calculating
$\mathbf{V}_{1}^{\left( i \right)}$
and
$\mathbf{V}_{2}^{\left( i \right)}$.
\STATE
Calculate
$\Delta \bm{\tilde{\varphi }}_{B}^{\left( i+1 \right)}$
by (\ref{iteration_algorithm_result})
and updating
$\bm{\tilde{\varphi }}_{B}^{\left( i+1 \right)}$
by
(\ref{update_vector_phi_B}).
\ENDWHILE
\STATE
Performing (\ref{erowklujfhnedljkvgn}) and (\ref{34oikldfdf}) for ${{\bm{\varphi }}^{\text{op}}}$.
\end{algorithmic}
\end{algorithm}

\section{Simulation Results}\label{Section_simulation_results}
In this section, simulation results are presented to
validate the preceding CRB derivations, analysis and waveform design.
The carrier frequency is set to $3\text{GHz}$, so the wavelength is $\lambda = 0.1\text{m}$.
Suppose that the CSI-ratio model utilizes two received antennas with the spacing $d = {{\lambda}/{2}} = 0.05 \text{m}$.
The time interval between OFDM symbols is $T_0 = 125\mu s$,
so the range of unambiguous Doppler is
$\left| {{f}_{d}} \right|\le \frac{1}{2{{T}_{0}}}=4000\text{Hz}$,
i.e.,
$\left| v \right|\le \frac{\lambda }{4{{T}_{0}}}=200{\text{m}}/{\text{s}}$.
Unless otherwise specified,
we set the number of the allocated OFDM symbols used for uplink sensing to $K = 128$
and
the set the available number of OFDM symbols to $K_\text{all } = 512$.
The OFDM symbol index follows the noise-limited Doppler sensing method given in (\ref{Optimization_K_even}).

The radial velocity of the dynamic target is set to $5{{\text{m}}/{\text{s}}}$, so the Doppler frequency is $f_d = 100\text{Hz}$,
and the angle of the dynamic target is $\theta_d = 10^\circ$.
We set
${{\rho }_{0}}=\frac{{{h}_{s,1}}}{{{h}_{s,0}}} = 1.2 e^{-j{\frac{30\pi}{180}}}$,
and
${{\rho }_{1 }}=\frac{{{\xi }_{d }}}{{{h}_{s,0}}} = 0.1 e^{-j{\frac{110\pi}{180}}}$.
Therefore,
we have
$R_\text{SD} = 122$ (i.e., 20.86dB)
and
$R_\text{A} = 0.527$.
To meet the high-SNR requirement given in Lemma \ref{fundamental_assumption_high_SNR},
a small value of the noise power is required
and
we set
$R_\text{SN} = 1220$ (i.e., 30.86dB).

\subsection{Evaluation of the CRB}\label{Section_simulation_BBBB}
In this subsection, the simulation and theoretical results are provided for
evaluating the CRB.

\begin{figure}[!t]
\centering
\includegraphics[width=3.2in]{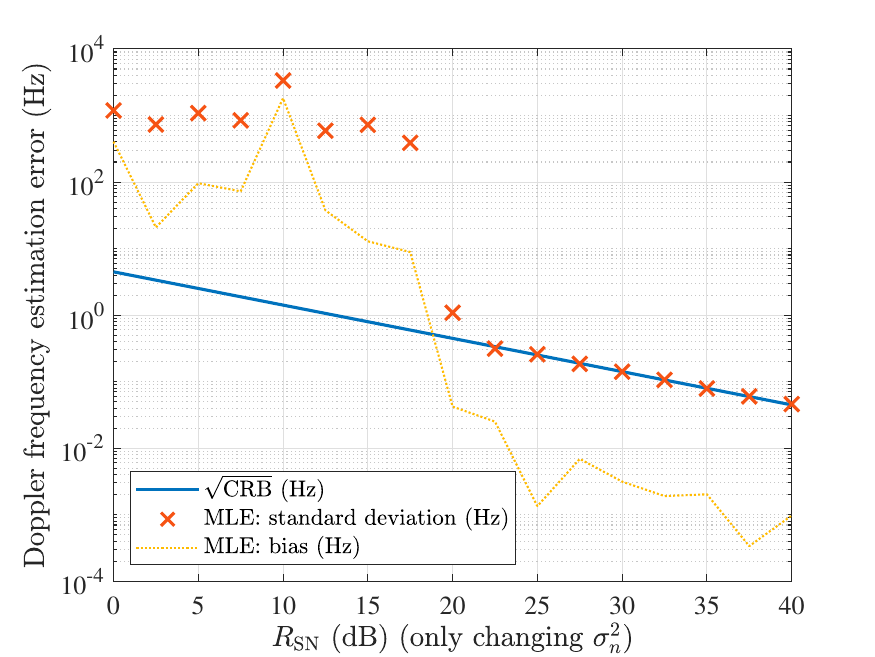}
\caption{{Doppler CRBs (\ref{proposition_aldfikhajkldihf_first_formula}) and the MLE results versus $R_\text{SN}$.}}
\label{Fig_simu_simulation_R_SN}
\end{figure}

Fig. \ref{Fig_simu_simulation_R_SN}
presents the CRB and the estimation results using the maximum likelihood estimation (MLE) algorithm
with respect to $\sigma_n^2$.
Note that
only $\sigma_n^2$ is changed
and
the other parameters remain constant in Fig. \ref{Fig_simu_simulation_R_SN}.
The MLE algorithm is designed based on the approximation in (\ref{R_approximation}).
It can be seen from Fig. \ref{Fig_simu_simulation_R_SN} that
the estimator can be regarded as ``unbiased'',
in the sense that the bias is negligibly small in high SNR regions,
and is {basically overlapping with} the CRB when $R_\text{SN} > 22.5 \text{dB}$.
This verifies the preciseness of the derived CRB in Proposition \ref{proposition_context_CRB_single_target}.

\begin{figure}[!t]
\centering
\includegraphics[width=3.2in]{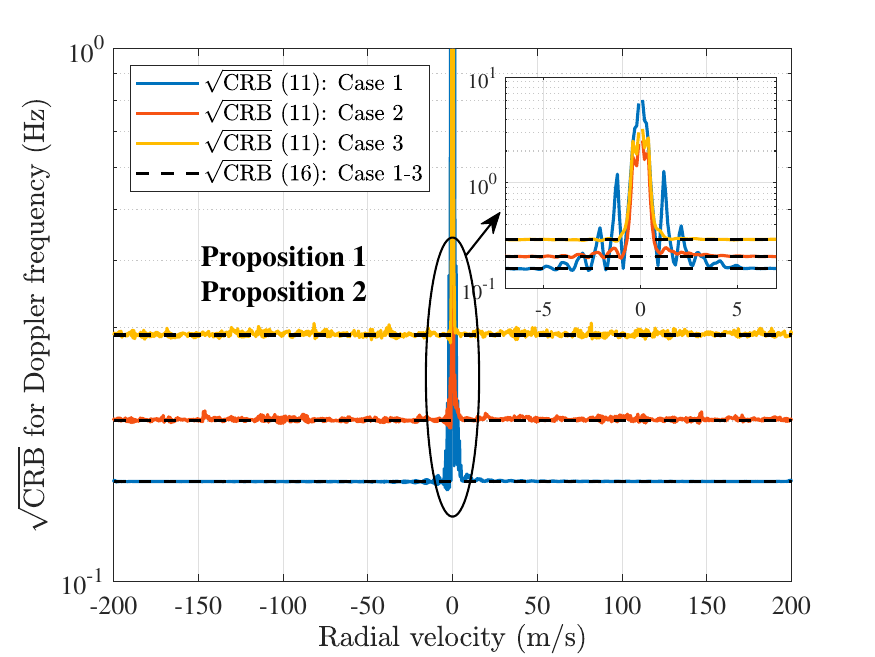}
\caption{Doppler CRBs and the MLE results versus radial velocity,
 comparing CRB results in (\ref{proposition_aldfikhajkldihf_first_formula}) and (\ref{CRB_closed_form_single_text}).}
\label{Fig_simu_CRB_three_approximation}
\end{figure}

Fig. \ref{Fig_simu_CRB_three_approximation}
plots the CRBs of three different cases of $\bm{\varphi} \in \mathbb{Z}^{K \times 1}$
using
(\ref{proposition_aldfikhajkldihf_first_formula})
and
(\ref{CRB_closed_form_single_text}), respectively.
The OFDM symbol index $\bm{\varphi}$ of Case 1 follows (\ref{Optimization_K_even}),
while {$\bm{\varphi}$'s in Cases 2 and 3 are randomly generated.}
From Fig. \ref{Fig_simu_CRB_three_approximation},
we see that Case 1 leads to the minimal CRB in sidelobe regions,
which validate Proposition \ref{proposition_general_optimazition_result}.
We also see that
the
actual
CRB in (\ref{proposition_aldfikhajkldihf_first_formula})
{and the approximated one in}
(\ref{CRB_closed_form_single_text})
are basically overlapping in sidelobe regions with
$\left|f_d \right| > f_{d, \text{M}}$,
{which}
validates the feasibility of using
the CRB approximation
(\ref{CRB_closed_form_single_text})
for the CRB analysis.

\begin{figure*}[!t]
\centering
\subfigure[]
{\includegraphics[width=2.35in]{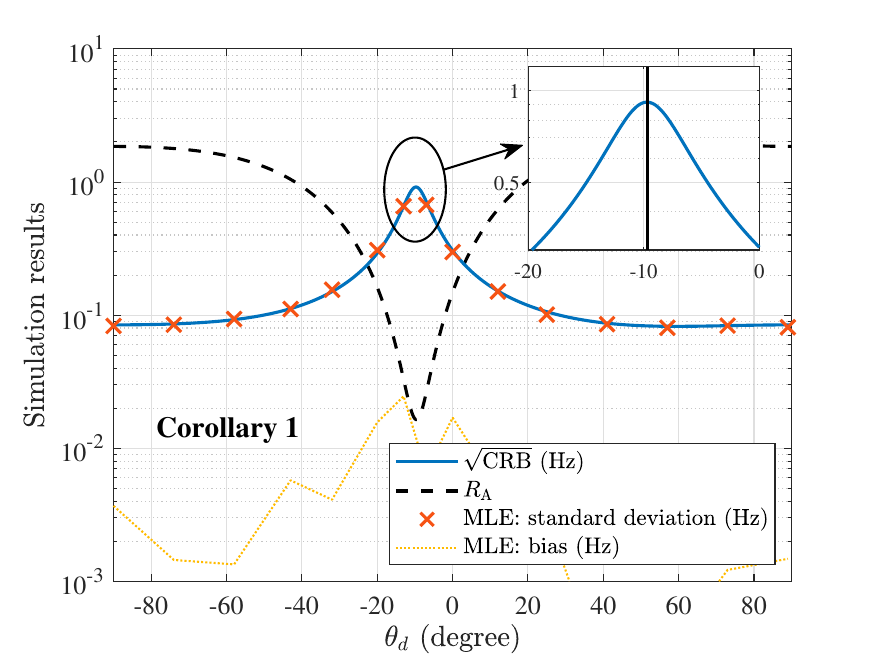}}
\subfigure[]
{\includegraphics[width=2.35in]{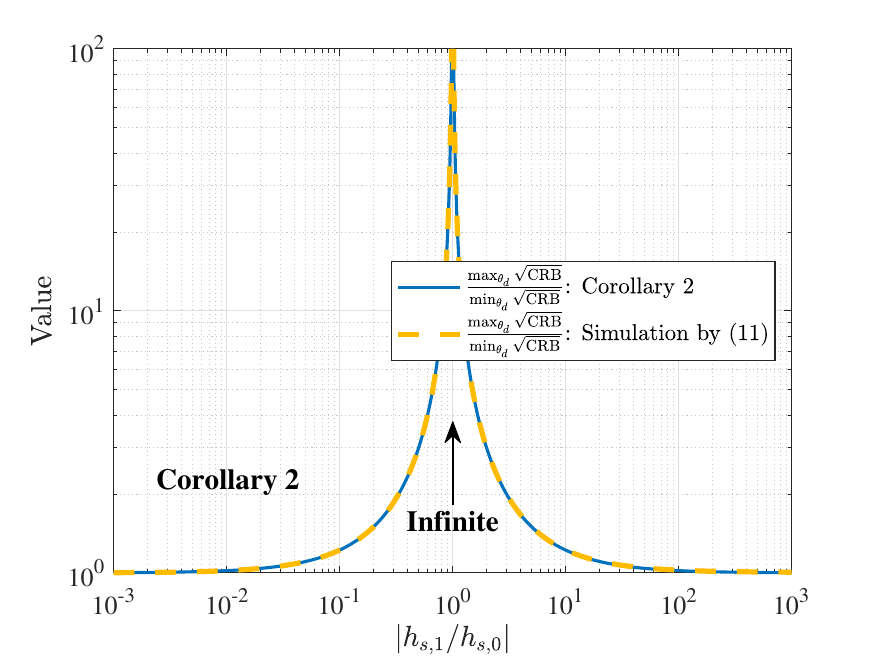}}
\subfigure[]
{\includegraphics[width=2.35in]{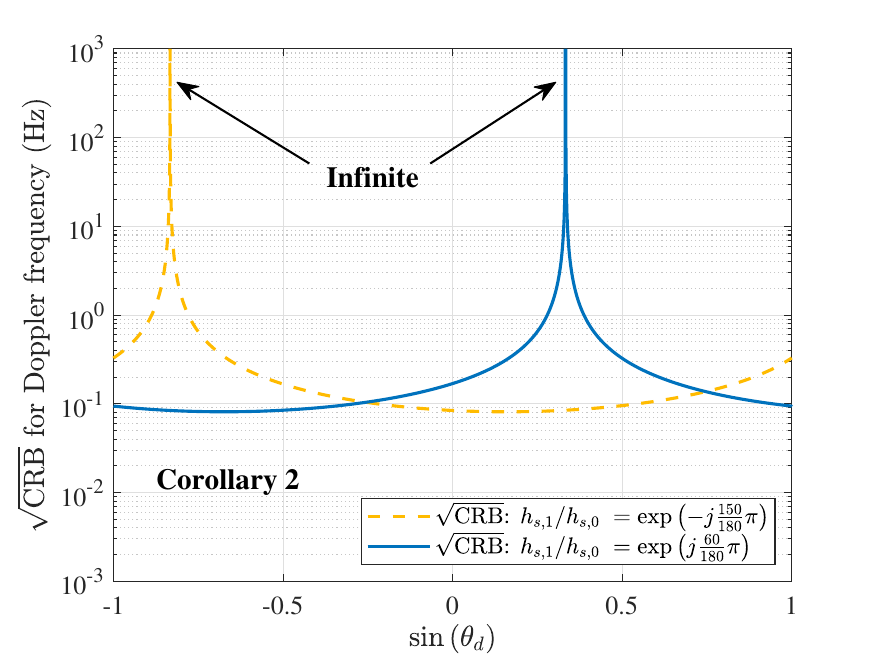}}
\caption{Doppler CRBs (\ref{proposition_aldfikhajkldihf_first_formula}) for Doppler versus $\theta_d$.
(a) Doppler CRBs versus $\theta_d$;
(b) Results of (\ref{corollary_CRB_infinite_FORMULA_c}) versus $\left|{{h_{s, 1}}/{h_{s, 0}}}\right|$;
(c) Doppler CRBs versus $\sin\left(\theta_d\right)$.}
\label{Fig_simu_theta_h1}
\end{figure*}

Fig. \ref{Fig_simu_theta_h1}
demonstrates the influence of $\theta_d$ on Doppler CRBs.
Fig. \ref{Fig_simu_theta_h1} (a)
shows the CRB (given by (\ref{proposition_aldfikhajkldihf_first_formula})), MLE, and $R_\text{A}$ results versus the angle of the dynamic target $\theta_d$.
As illustrated in Corollary \ref{corollary_SSDA_what_what_aaa},
the Doppler CRB is strongly related with $R_\text{A}$.
In Fig. \ref{Fig_simu_theta_h1} (a),
$R_\text{A}$ achieves its lowest value when CRB is maximum,
which verifies the result in Corollary \ref{corollary_SSDA_what_what_aaa}.
In addition, the peak position is closely related to the phase of $h_{s,0}$ and $h_{s,1}$.
The angle corresponding to the CRB peak can be calculated from
the numerator of $R_\text{A}$ (i.e., ${{{\left| {{h}_{s,1}}-a\left( {{\theta }_{d}} \right){{h}_{s,0}} \right|}^{2}}}$),
which is $\theta_d = -9.6^{\circ}$.
This is also validated
in Fig. \ref{Fig_simu_theta_h1} (a).
Moreover, we see the obvious impact
of $\theta_d$ on CRB.
The ratio of the peak and minimum values of the CRB in Fig. \ref{Fig_simu_theta_h1} (a) approximately equals to 11,
i.e.,
$\frac{{{\max }_{{{\theta }_{d}}}}\sqrt{\text{CRB}}}{{{\min }_{{{\theta }_{d}}}}\sqrt{\text{CRB}}} \approx 11$,
{showing the angle-dependent impact of the equivalent static path
on the dynamic path's Doppler estimation in CSI ratio-based sensing.}

Fig. \ref{Fig_simu_theta_h1} (b)
further {observes}
the
ratio of the maximum and minimum values of the CRB in $\theta_d \in \left[-\frac{\pi}{2}, \frac{\pi}{2}\right]$
(i.e., $\frac{{{\max }_{{{\theta }_{d}}}}\sqrt{\text{CRB}}}{{{\min }_{{{\theta }_{d}}}}\sqrt{\text{CRB}}}$
defined in
Corollary
\ref{corollary_CRB_infinite})
to evaluate the variation in CRB values
against $\left|{h_{s, 1}}/{h_{s, 0}}\right|$.
It can be seen that the derivation result in Corollary \ref{corollary_CRB_infinite}
fits well with the simulation result using (\ref{proposition_aldfikhajkldihf_first_formula}).
From Fig. \ref{Fig_simu_theta_h1} (b),
we know that
$\theta_d$ has {negligible} influence on the CRB when
$ \left|{{{h}_{s,1}}}/{{{h}_{s,0}}}\right| \ll 1$ or $ \left|{{{h}_{s,1}}}/{{{h}_{s,0}}}\right| \gg 1$
due to $\frac{{{\max }_{{{\theta }_{d}}}}\sqrt{\text{CRB}}}{{{\min }_{{{\theta }_{d}}}}\sqrt{\text{CRB}}} \approx 1$.
When the value of $\left|{{{h}_{s,0}}}\right|$ is closer to $\left|{{{h}_{s,1}}}\right|$,
the CRB is more sensitive to the changes on $\theta_d$.
An extreme case is $\left|{{{h}_{s,0}}}\right| = \left|{{{h}_{s,1}}}\right|$,
where the ratio value is infinite.

Fig. \ref{Fig_simu_theta_h1} (c)
displays the CRB results when $\left|{{{h}_{s,1}}}/{{{h}_{s,0}}}\right| = 1$ under two different cases.
Clearly,
there is one dynamic target angle that
cannot be sensed using CSI ratios.
Refer to the angle as the unobservable angle.
The angle
is strongly dependent of the phase of ${{{h}_{s,1}}}/{{{h}_{s,0}}}$,
as can be seen in Fig. \ref{Fig_simu_theta_h1} (c) and verified from (\ref{corollary_CRB_infinite_FORMULA_b}) in Corollary \ref{corollary_CRB_infinite}.
Moreover,
each CRB curve in Fig. \ref{Fig_simu_theta_h1} (c) also has a minimum value.
For both curves in Fig. \ref{Fig_simu_theta_h1} (c),
the difference of $\sin \left(\theta_d\right)$ corresponding to the maximum and minimum points of the curve
equals to ${\lambda}/{2d} = 1$, which can be verified from
(\ref{corollary_CRB_infinite_FORMULA_a})
and
(\ref{corollary_CRB_infinite_FORMULA_b}).
{Intuitively, if}
the angle of the dynamic target deviates {further} from the unobservable angle,
the CRB can be lower.

\begin{figure}[!t]
\centering
\includegraphics[width=3.2in]{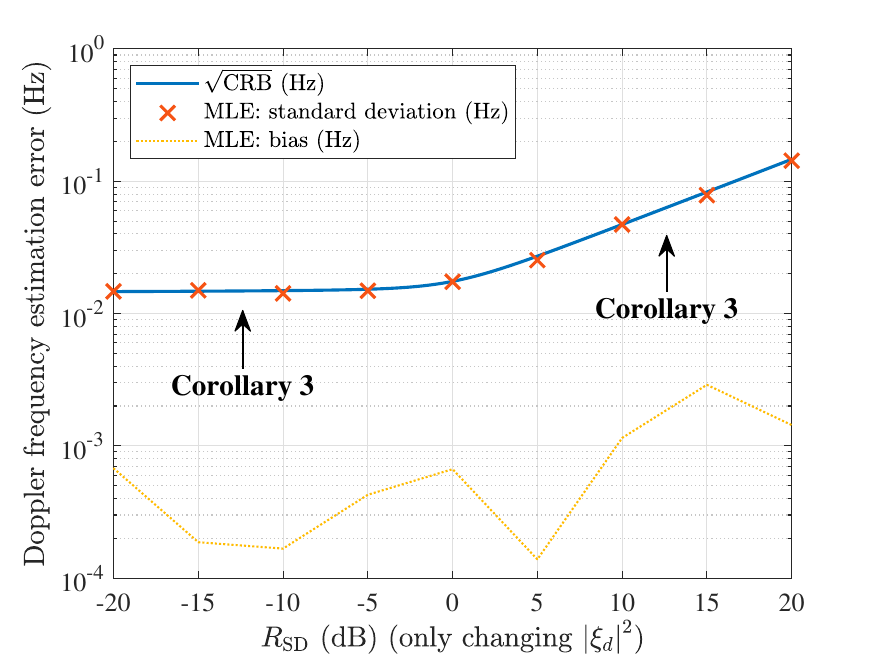}
\caption{Doppler CRBs (\ref{proposition_aldfikhajkldihf_first_formula}) and the MLE results versus $R_\text{SD}$.}
\label{Fig_simu_simulation_R_SD}
\end{figure}

Fig. \ref{Fig_simu_simulation_R_SD}
shows the CRB and the Doppler estimation,
where
we only change the value of $\left|\xi_d\right|^{2}$ to {observe} different $R_\text{SD}$.
When $R_{\text{SD}}$ is sufficiently large,
i.e., the dynamic target power $\left|\xi_d\right|^{2}$ is significantly lower than the averaged static object power
${{\left( {{\left| {{h}_{s,0}} \right|}^{2}}+{{\left| {{h}_{s,1}} \right|}^{2}} \right)}/{2}}$,
a larger $\left|\xi_d\right|^{2}$ brings a more accurate estimation,
as shown in Fig. \ref{Fig_simu_simulation_R_SD}.
This result can be verified using (\ref{ajklnakljfn}) in Corollary \ref{corollary_ksi_large_small}.

\subsection{Evaluation of the CRB Optimization}
Next, we evaluate the proposed noise-limited Doppler sensing method in Proposition \ref{proposition_general_optimazition_result}
and the proposed interference-limited Doppler sensing method in Algorithm \ref{Algorithm1}.

\begin{figure}[!t]
\centering
\includegraphics[width=3.2in]{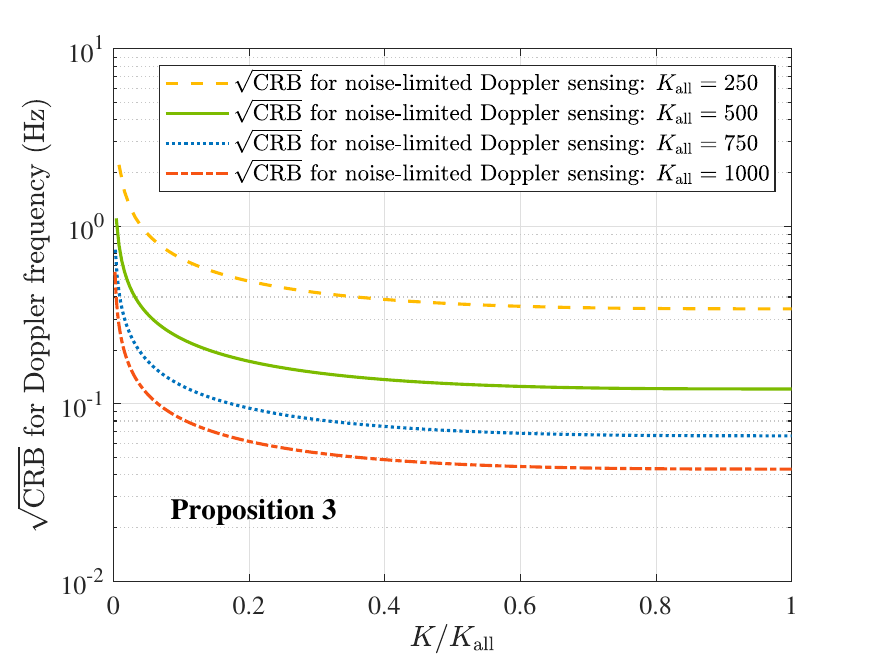}
\caption{Doppler CRB (\ref{proposition_aldfikhajkldihf_first_formula}) of the proposed noise-limited Doppler sensing method in Proposition \ref{proposition_general_optimazition_result} versus ${K}/{K_{\text{all}}}$.}
\label{Fig_simu_K_K_all}
\end{figure}

Fig. \ref{Fig_simu_K_K_all}
shows the CRB of the proposed noise-limited Doppler sensing method in Proposition \ref{proposition_general_optimazition_result} under different ${K}/{K_{\text{all}}}$.
These results are the minimum values that the CRB can achieve
with given ${K}/{K_\text{all}}$ and $K_\text{all}$ when
{$\left|f_d \right| > f_{d, \text{M}}$}.
As can be seen, the CRBs decrease rapidly when ${K}/{K_{\text{all}}}$ is small.
However,
when ${K}/{K_{\text{all}}}$ is larger than 0.5, increasing ${K}/{K_{\text{all}}}$
has limit improvement to the accuracy.
In this case,
it is more feasible to expand the available range of the OFDM symbol $K_{\text{all}}$
to further enhance the CRB.
Although the noise-limited sensing result has the best performance when
$\left|f_d \right| > f_{d, \text{M}}$,
it can be observed from
the Case 1 in Fig. \ref{Fig_simu_CRB_three_approximation} (b) that
the CRB has significant fluctuation when
the velocity is close to zero, i.e., $\left| v \right|<5{\text{m}}/{\text{s}}$.
Therefore,
a further optimization is required
to reduce the mainlobe of the Doppler pattern,
hence, the motivation of Algorithm \ref{Algorithm1}.

\begin{figure}[!t]
\centering
\includegraphics[width=3.2in]{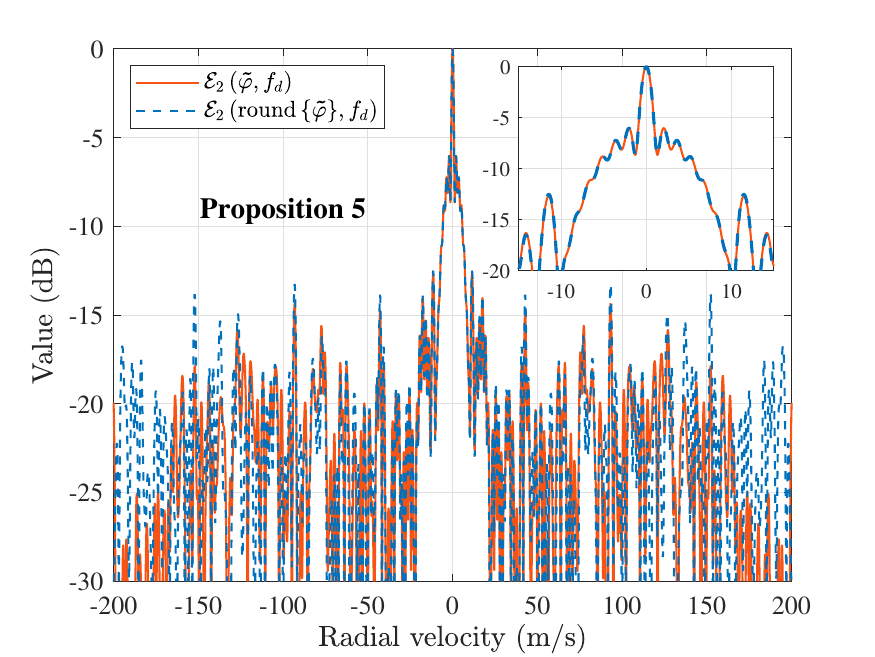}
\caption{Comparison of ${\mathcal{E}_\mathcal{P}}\left( {{\bm{\tilde{\varphi} }}}, f_d \right)$
and
${\mathcal{E}_\mathcal{P}}\left( {\text{round}\left\{{\bm{\tilde{\varphi} }}\right\}}, f_d \right)$.}
\label{Fig_simu_round_result}
\end{figure}

Fig. \ref{Fig_simu_round_result}
evaluates the impact of
the small-amount perturbation of ${{\bm{\tilde{\varphi} }}}$
on the envelope ${\mathcal{E}_\mathcal{P}}\left( {{\bm{\tilde{\varphi} }}}, f_d \right)$,
as illustrated in Proposition \ref{proposition_integer_operation}.
We see that the mainlobe and nearby sidelobes of
${\mathcal{E}_\mathcal{P}}\left( {{\bm{\tilde{\varphi} }}}, f_d \right)$,
have minimal changes with or without the perturbation on ${{\bm{\tilde{\varphi} }}}$.
This
verifies Proposition \ref{proposition_integer_operation}
and the feasibility of Algorithm \ref{Algorithm1}.

\begin{figure}[!t]
\centering
\subfigure[]
{\includegraphics[width=3.2in]{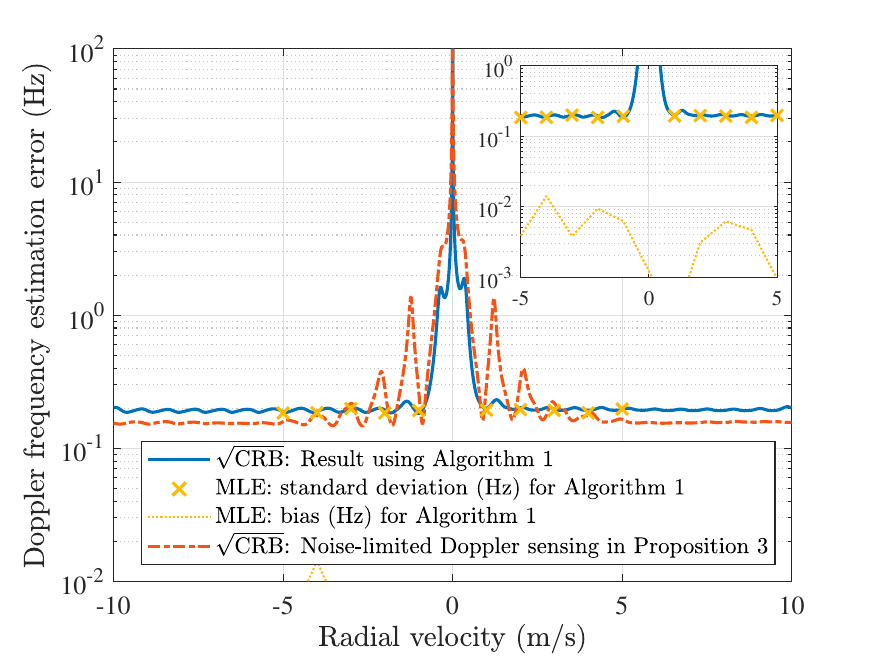}}
\subfigure[]
{\includegraphics[width=3.2in]{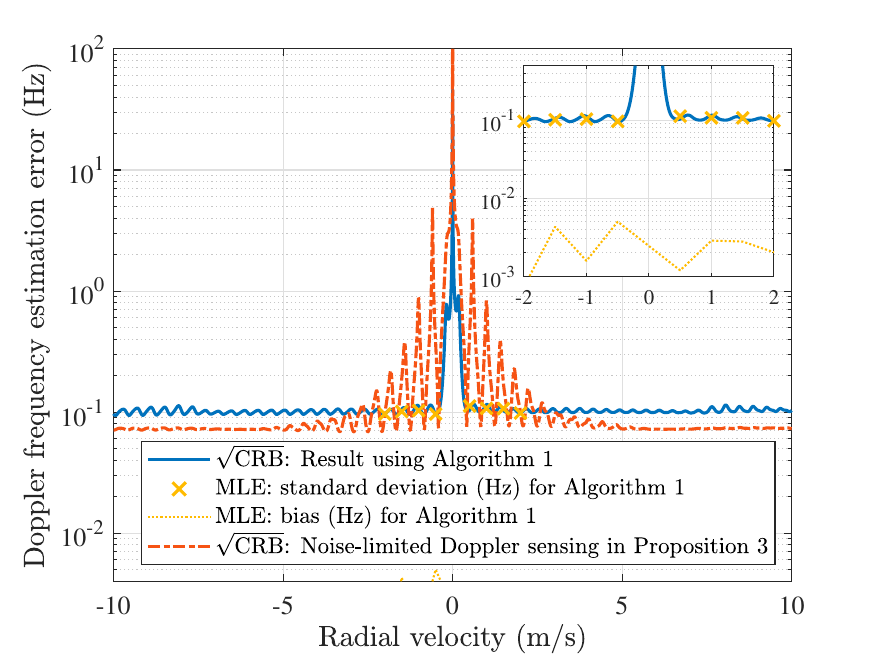}}
\caption{Comparison of the noise-limited CRB result using Proposition \ref{proposition_general_optimazition_result},
the interference-limited CRB result using the proposed Algorithm \ref{Algorithm1},
and the MLE results for Algorithm \ref{Algorithm1},
with
(a) $K_\text{all} = 512$, $K = 128$, and $K_F = 16$,
and
(b) $K_\text{all} = 1024$, $K = 128$, and $K_F = 16$.}
\label{Fig_simu_512_1024_all_result}
\end{figure}

Fig. \ref{Fig_simu_512_1024_all_result}
displays the interference-limited Doppler sensing CRB using Algorithm \ref{Algorithm1},
where $K_\text{all} = 512$ and $1024$ for Figs. \ref{Fig_simu_512_1024_all_result} (a) and \ref{Fig_simu_512_1024_all_result} (b), respectively.
As benchmark performances, Figs. \ref{Fig_simu_512_1024_all_result} (a)
and
\ref{Fig_simu_512_1024_all_result} (b)
plot the CRB under the noise-limited Doppler sensing method given in Proposition \ref{proposition_general_optimazition_result}.
We see that
both the benchmark
CRB results have large values
when
$\left| v \right|<5{\text{m}}/{\text{s}}$,
{indicating poor sensing performances of a}
dynamic target.
Algorithm \ref{Algorithm1} is proposed to reduce the
mainlobe of the envelope ${\mathcal{E}_\mathcal{P}}\left( {{\bm{\tilde{\varphi} }}}, f_d \right)$
to reduce the CRB.
Fig. \ref{Fig_simu_512_1024_all_result}
validates the effectiveness of Algorithm \ref{Algorithm1}.
We see that the CRB values are substantially reduces between
$1{\text{m}}/{\text{s}}<\left| v \right|<5{\text{m}}/{\text{s}}$
in Fig. \ref{Fig_simu_512_1024_all_result} (a) and
between
$0.5{\text{m}}/{\text{s}}<\left| v \right|<5{\text{m}}/{\text{s}}$
in Fig. \ref{Fig_simu_512_1024_all_result} (b).
The CRB values are increased in the sidelobe regions but only slightly.
Moreover,
the observable velocity range for the cases $K = K_{\text{all}} = 512$ and $K = K_{\text{all}} = 1024$ is $\left| v \right|>0.8{\text{m}}/{\text{s}}$ and $\left| v \right|>0.4{\text{m}}/{\text{s}}$, respectively,
as can be seen in Fig. \ref{Fig_simu_512_1024_all_result}.
We also provide the MLE results in Fig. \ref{Fig_simu_512_1024_all_result}.
Results show that
the interference-limited Doppler sensing CRB using Algorithm \ref{Algorithm1}
is not only the theoretical CRB but also achievable in both figures.

\begin{figure}[!t]
\centering
\subfigure[]
{\includegraphics[width=3.2in]{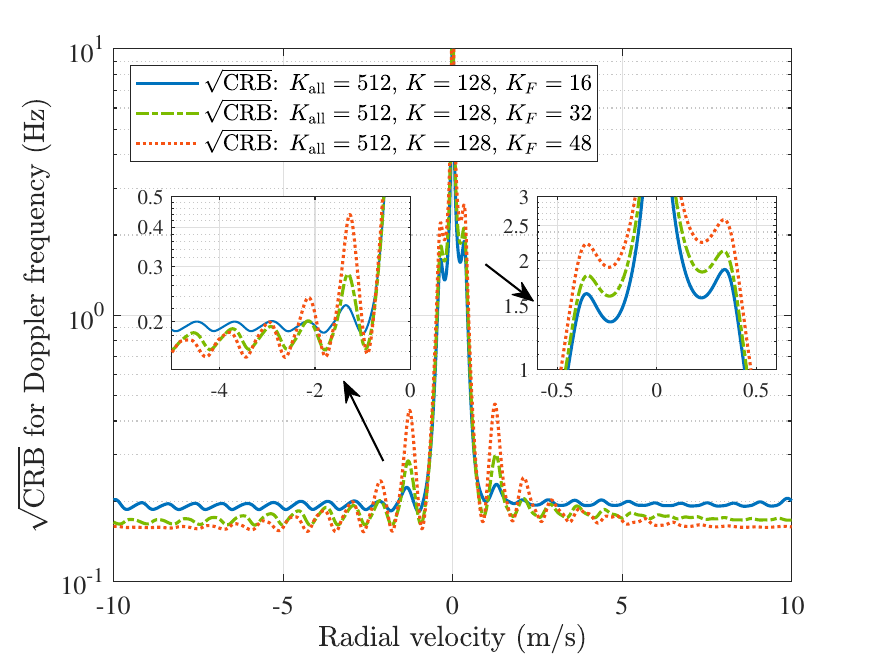}}
\subfigure[]
{\includegraphics[width=3.2in]{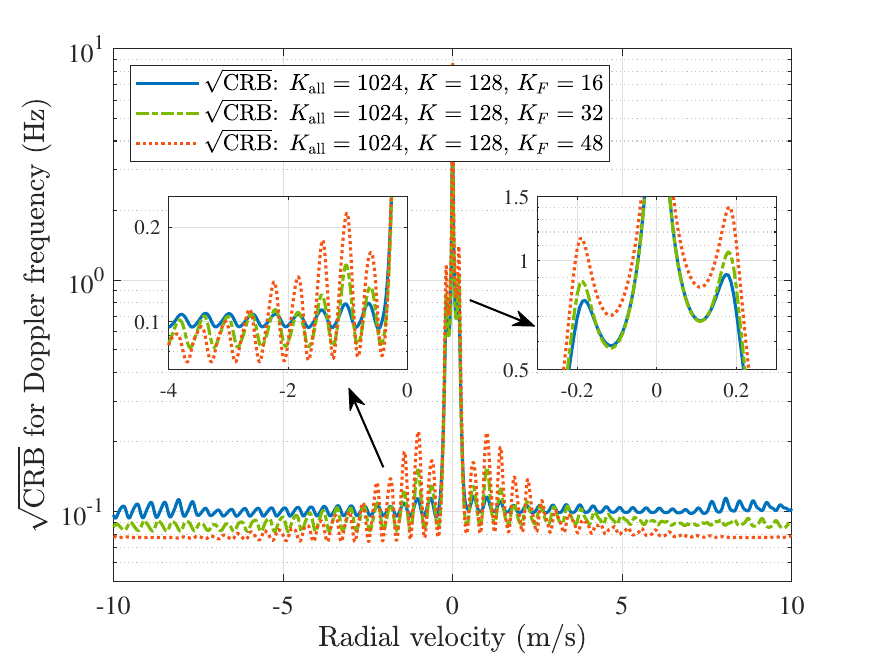}}
\caption{Comparison of the optimal CRB results using proposed Algorithm \ref{Algorithm1} under different $K_F$:
(a) $K_\text{all} = 512$, $K = 128$;
(b) $K_\text{all} = 1024$, $K = 128$.}
\label{Fig_simu_512_1024_multiple_result}
\end{figure}

Fig. \ref{Fig_simu_512_1024_multiple_result}
displays the CRB results under different $K_F$.
Recall that the term $K_F$ denotes the number of the fixed OFDM symbols which are not used in the
waveform optimization; see also Fig. \ref{Fig_text_optimization_figure}.
From Fig. \ref{Fig_simu_512_1024_multiple_result},
we see that,
when $K_F$ is reduced,
the sidelobe of the CRB increases,
while the mainlobe level increases.
This is expected,
since the fixed OFDM symbols are used to ensure that
the CRB
in the sidelobe region of the Doppler pattern is low.
On the contrary,
less fixed OFDM symbols lead to
more symbols available for optimization,
as illustrated in Fig. \ref{Fig_text_optimization_figure}.
Therefore, the mainlobe level of the CRB decreases.
Fig. \ref{Fig_simu_512_1024_multiple_result} validates that
$K_F$ provides an extra degree of freedom for handling practical requirements in waveform optimization.

\section{Conclusion}\label{Section_Conclusion}
In this paper,
we derive the CRB of the CSI-ratio model in high-SNR and single-dynamic-target case.
For simplifying the analysis and the optimization,
we have provided a concise closed-form approximation of the Doppler CRB,
using which we analyze the influence of each parameter on the model accuracy.
Then, we enhance the accuracy of the Doppler CRB by optimizing the OFDM symbol index.
We first propose the noise-limited Doppler sensing method with the Doppler CRB optimization.
However, the model accuracy for the dynamic target with low-Doppler is obviously poor.
We have then proposed an OFDM symbol optimization algorithm in Algorithm \ref{Algorithm1} to reduce the CRB at low-Doppler range.
Simulation results verify the correctness of the derivation and the analysis.

For future work,
we can improve the CSI ratio performance by selecting the best pairs of antennas and the best subcarriers for sensing
since the CRB result is related to the antenna and subcarrier indexes.
We can also extend the work to the case of joint sensing using multiple subcarriers.

\begin{appendices}

\section{Proof of Proposition \ref{proposition_aldfikhajkldihf}}\label{Appendix_aldfikhajkldihf}
The partial derivative of $\bm{\chi }$ with respect to each element in $\bm{\alpha}$,
as defined in (\ref{alpha_parameters_unknown}), can be calculated as:
\begin{align}\label{single_target_diff}
\nonumber  \frac{\partial \bm{\chi }}{\partial {{f}_{d}}}& =j {2\pi {{T}_{0}}}{{\rho }_{1}}\text{diag}\left\{ {\bm{\varphi }} \right\}\frac{\mathbf{d}\left( {{f}_{d}} \right)\left[ a\left( {{\theta }_{d}} \right)-{{\rho }_{0}} \right]}{{{\left( {{\rho }_{1}}\mathbf{d}\left( {{f}_{d}} \right)+{{\mathbf{1}}_{K\times 1}} \right)}^{2}}} \\
\nonumber \frac{\partial \bm{\chi }}{\partial {{\theta }_{d}}}&={{\rho }_{1}}\frac{\partial a\left( {{\theta }_{d}} \right)}{\partial {{\theta }_{d}}}\frac{\mathbf{d}\left( {{f}_{d}} \right)}{{{\rho }_{1}}\mathbf{d}\left( {{f}_{d}} \right)+{{\mathbf{1}}_{K\times 1}}} \\
\nonumber \frac{\partial \bm{\chi }}{\partial \operatorname{Re}\left\{ {{\rho }_{0}} \right\}}&=\frac{{{\mathbf{1}}_{K\times 1}}}{{{\rho }_{1}}\mathbf{d}\left( {{f}_{d}} \right)+{{\mathbf{1}}_{K\times 1}}}=-j\frac{\partial \bm{\chi }}{\partial \operatorname{Im}\left\{ {{\rho }_{0}} \right\}} \\
 \frac{\partial \bm{\chi }}{\partial \operatorname{Re}\left\{ {{\rho }_{1}} \right\}}&=\frac{\mathbf{d}\left( {{f}_{d}} \right)\left[ a\left( {{\theta }_{d}} \right)-{{\rho }_{0}} \right]}{{{\left( {{\rho }_{1}}\mathbf{d}\left( {{f}_{d}} \right)+{{\mathbf{1}}_{K\times 1}} \right)}^{2}}}=-j\frac{\partial \bm{\chi }}{\partial \operatorname{Im}\left\{ {{\rho }_{1}} \right\}}.
\end{align}

Substituting (\ref{single_target_diff}) into (\ref{CRB_for}), we can obtain the FIM of the parameters, denoted as $\mathbf{F}$.
Utilizing the Schur complement, the inverse of the FIM can
be expressed as:
\begin{align}\label{laikjfalijoi}
\nonumber &\mathbf{C}\left( \bm{\alpha } \right)= \mathbf{F}^{-1} \\
 & \triangleq {{\left[ \begin{matrix}
   {{\mathbf{F}}_{11}}\in {{\mathbb{R}}^{2\times 2}} & {{\mathbf{F}}_{12}}\in {{\mathbb{R}}^{2\times 4}}  \\
   \mathbf{F}_{12}^{T}\in {{\mathbb{R}}^{4\times 2}} & {{\mathbf{F}}_{22}}\in {{\mathbb{R}}^{4\times 4}}  \\
\end{matrix} \right]}^{-1}}=\left[ \begin{array}{c;{2pt/2pt}c}
   {{\mathbf{H}}^{-1}} & \times   \\
   \hdashline[2pt/2pt]
   \times  & \times   \\
\end{array} \right],
\end{align}
where
$\mathbf{H}={{\mathbf{F}}_{11}}-{{\mathbf{F}}_{12}}\mathbf{F}_{22}^{-1}\mathbf{F}_{12}^{T}$.
Using (\ref{laikjfalijoi}),
the result given in (\ref{cal_H_result}) can be obtained.
Proposition \ref{proposition_aldfikhajkldihf} is thus proved.

\section{Proof of Proposition \ref{proposition_context_CRB_single_target}}\label{Appendix_proposition_context_CRB_single_target}
{From the CRB derived in Appendix \ref{Appendix_aldfikhajkldihf},
there are nine key terms, i.e.,}
${{\varepsilon }_{1}}$,
${{\varepsilon }_{K,1}}$,
${{\varepsilon }_{K,2}}$,
${{\varepsilon }_{\mu ,1}}$,
${{\varepsilon }_{\mu ,2}}$,
${{\varepsilon }_{d,\mu ,2}}$,
${{\varepsilon }_{d,\mu ,1}}$,
${{\varepsilon }_{K,\mu }}$,
and
${{\varepsilon }_{d,K,\mu }}$,
as given in (\ref{proposition_variables_definition}). Here,
we derive the approximation of ${{\varepsilon }_{1}}$ as an example.
Before further derivations, we provide the following lemma,
with its proof established in Appendix \ref{Appendix_Lemma_2}.

\vspace{0.3em}

\begin{lemma}\label{Lemma_X_b0}
Let $\phi \in \left[-\pi, \pi\right]$ be a constant.
When {$\left|f_d \right| > f_{d, \text{M}}$ and} $b_0 > 2b_1$, we have the following approximation:
\begin{align}\label{Lemma_X_b0_b1_equation_all}
\mathcal{L}_{{{f}_{d}}}\left( b_0, b_1 \right)&=\sum\limits_{k=0}^{K-1}{\frac{1}{{{b}_{0}}+{{b}_{1}}\cos \left( 2\pi {{\varphi }_{k}}{{T}_{0}}{{f}_{d}}+\phi  \right)}} \approx {\frac{K}{\sqrt{b_{0}^{2}-b_{1}^{2}}}}.
\end{align}
\end{lemma}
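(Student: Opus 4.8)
The plan is to expand the summand into its Fourier series in the angle variable, so that the constant (DC) term produces the claimed leading value while every higher harmonic is suppressed---either by the smallness of the Doppler pattern in the sidelobe region or by geometric decay of the Fourier coefficients. First I would invoke the classical Poisson-kernel expansion, valid for $b_0 > b_1 > 0$:
\begin{align*}
\frac{1}{b_0 + b_1\cos\theta} = \frac{1}{\sqrt{b_0^2 - b_1^2}}\left(1 + 2\sum_{n=1}^{\infty}(-t)^n\cos(n\theta)\right), \qquad t = \frac{b_1}{b_0 + \sqrt{b_0^2-b_1^2}}.
\end{align*}
Under the hypothesis $b_0 > 2b_1$, the ratio obeys $t < 2-\sqrt{3}\approx 0.27$, so the harmonic coefficients decay geometrically.

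Next I would substitute $\theta = 2\pi\varphi_k T_0 f_d + \phi$ and sum over $k$. The constant term contributes exactly $K/\sqrt{b_0^2-b_1^2}$, the desired leading value, leaving the error
\begin{align*}
\mathcal{L}_{f_d}(b_0,b_1) - \frac{K}{\sqrt{b_0^2-b_1^2}} = \frac{2}{\sqrt{b_0^2-b_1^2}}\sum_{n=1}^{\infty}(-t)^n\operatorname{Re}\left\{e^{jn\phi}\sum_{k=0}^{K-1}e^{j2\pi n\varphi_k T_0 f_d}\right\}.
\end{align*}
The inner antenna-independent sum is, up to modulus, exactly $K$ times the Doppler pattern evaluated at the $n$-th harmonic, since $\left|\sum_{k=0}^{K-1}e^{j2\pi n\varphi_k T_0 f_d}\right| = K\,\mathcal{P}\left(\bm{\varphi}, nf_d\right)$.

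Applying the triangle inequality then yields
\begin{align*}
\left|\mathcal{L}_{f_d}(b_0,b_1) - \frac{K}{\sqrt{b_0^2-b_1^2}}\right| \le \frac{2K}{\sqrt{b_0^2-b_1^2}}\sum_{n=1}^{\infty}t^n\,\mathcal{P}\left(\bm{\varphi}, nf_d\right).
\end{align*}
The $n=1$ term is controlled by the hypothesis $|f_d| > f_{d,\text{M}}$, which places $f_d$ in the sidelobe region where $\mathcal{P}\left(\bm{\varphi}, f_d\right) \le 0.707$ by the definition of $f_{d,\text{M}}$ in (\ref{mainlobe_definition}), and far smaller under the low-sidelobe designs of Section \ref{Section_IV}. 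Every remaining term is bounded via $\mathcal{P} \le 1$ by the geometric tail $2t^2/(1-t)$, which is small because $t < 0.27$; hence the total error is negligible relative to the leading term, establishing the approximation.

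The main obstacle will be making the higher-harmonic bound convincing: for $n \ge 2$ the quantities $\mathcal{P}\left(\bm{\varphi}, nf_d\right)$ sample the Doppler pattern at frequencies that need not lie in the near-sidelobe region and could, in principle, approach grating-lobe locations. The argument must therefore rest on the geometric weight $t^n$---guaranteed small by $b_0 > 2b_1$---dominating these contributions rather than on any sidelobe smallness beyond the first harmonic, and tightening this into an explicit tolerance for the ``$\approx$'' is the delicate step.
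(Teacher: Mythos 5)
Your proof is correct, and it reaches the paper's conclusion by a genuinely different technical route. The paper writes $\mathcal{L}_{f_d}$ through a DFT/convolution identity, expands $(b_0+b_1\cos\theta)^{-1}$ as a geometric series in $(b_1/b_0)\cos\theta$, opens each $\cos^n$ with the binomial theorem, \emph{discards} every harmonic of order $\ge 2$ on the informal grounds that $b_0>2b_1$, and then re-sums the surviving DC terms via $(1-x)^{-1/2}=\sum_n(\tfrac14 x)^nC_{2n}^n$ to get $K_{\text{all}}/\sqrt{b_0^2-b_1^2}\,\delta(\tilde k)$, leaving a single first-harmonic residual proportional to $\operatorname{Re}\{e^{j\phi}\frac1K\sum_k e^{j2\pi\varphi_kT_0f_d}\}$ that is killed by $|f_d|>f_{d,\text{M}}$. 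You instead invoke the exact Poisson-kernel Fourier expansion, which hands you the DC coefficient $1/\sqrt{b_0^2-b_1^2}$ in closed form and the \emph{exact} harmonic coefficients $(-t)^n$ with $t<2-\sqrt3$ under $b_0>2b_1$; the skeleton is the same (DC term gives the answer, harmonics are suppressed jointly by the coefficient decay and the sidelobe condition), but your route buys a complete, explicit error bound $\frac{2K}{\sqrt{b_0^2-b_1^2}}\sum_{n\ge1}t^n\mathcal{P}(\bm\varphi,nf_d)$ that accounts for the $n\ge2$ harmonics the paper silently drops, and it avoids both the DFT machinery and the two-stage series re-summation. Your closing caveat is well placed: the only quantitative weak point in either argument is that the $n=1$ bound $2t\,\mathcal{P}(\bm\varphi,f_d)\lesssim 2(0.27)(0.707)$ is not by itself tiny at the mainlobe edge, so the ``$\approx$'' really relies on $\mathcal{P}$ decaying well below $0.707$ deeper into the sidelobe region — a reliance the paper shares but does not state, so your proof is, if anything, the more honest of the two.
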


\vspace{0.3em}

Based on the definition of $\varepsilon_1$ in (\ref{proposition_variables_definition}),
we have:
\begin{align}
\nonumber  & {{\varepsilon }_{1}}=\sum\limits_{k=0}^{K-1}{\frac{1}{{{\left| {{\rho }_{1}}{{e}^{j2\pi k{{f}_{d}}{{T}_{0}}}}+1 \right|}^{2}}+{{\left| {{\rho }_{1}}{{e}^{j2\pi k{{f}_{d}}{{T}_{0}}}}+{{\rho }_{0}}{{a}^{*}}\left( {{\theta }_{d}} \right) \right|}^{2}}}} \\
\nonumber   & =\sum\limits_{k=0}^{K-1}{\frac{1}{{{\left| {{\rho }_{0}} \right|}^{2}}+2{{\left| {{\rho }_{1}} \right|}^{2}}+1+2\operatorname{Re}\left\{ \left[ 1+\rho _{0}^{*}a\left( {{\theta }_{d}} \right) \right]{{\rho }_{1}}{{e}^{j2\pi k{{f}_{d}}{{T}_{0}}}} \right\}}}.
\end{align}
Using Lemma \ref{Lemma_X_b0}, we can obtain
$b_{0}^{\left( {{\varepsilon }_{1}} \right)}={{\left| {{\rho }_{0}} \right|}^{2}}+2{{\left| {{\rho }_{1}} \right|}^{2}}+1$;
$b_{1}^{\left( {{\varepsilon }_{1}} \right)}=2\left| {{\rho }_{1}}+\rho _{0}^{*}{{\rho }_{1}}a\left( {{\theta }_{d}} \right) \right|$.
Moreover, we can show that
$b_{0}^{\left( {{\varepsilon }_{1}} \right)} > b_{1}^{\left( {{\varepsilon }_{1}} \right)}$
always holds, specifically:
\begin{align}
\nonumber b_{0}^{\left( {{\varepsilon }_{1}} \right)} - b_{1}^{\left( {{\varepsilon }_{1}} \right)}& ={{\left| {{\rho }_{0}} \right|}^{2}}+2{{\left| {{\rho }_{1}} \right|}^{2}}+1-2\left| \left[ 1+\rho _{0}^{*}a\left( {{\theta }_{d}} \right) \right]{{\rho }_{1}} \right| \\
\nonumber & \ge {{\left| {{\rho }_{0}} \right|}^{2}}+2{{\left| {{\rho }_{1}} \right|}^{2}}+1-2\left( \left| {{\rho }_{0}} \right|+1 \right)\left| {{\rho }_{1}} \right| \\
 & ={{\left( \left| {{\rho }_{1}} \right|-1 \right)}^{2}}+{{\left( \left| {{\rho }_{0}} \right|-\left| {{\rho }_{1}} \right| \right)}^{2}}>0.
\end{align}
To meet the accuracy requirement of the approximation, we need to ensure that
$b_{0}^{\left( {{\varepsilon }_{1}} \right)} > 2b_{1}^{\left( {{\varepsilon }_{1}} \right)}$ based on Lemma \ref{Lemma_X_b0},
i.e.:
\begin{align}\label{sdolweoidkdfn}
{{\left| {{\rho }_{1}} \right|}^{2}}\le \frac{1}{16}\left( 1+{{\left| {{\rho }_{0}} \right|}^{2}} \right) \ \ \text{or} \ \ {{\left| {{\rho }_{1}} \right|}^{2}}\ge {5}\left( 1+{{\left| {{\rho }_{0}} \right|}^{2}} \right).
\end{align}
Therefore,
when $\left|f_d \right| > f_{d, \text{M}}$
and
(\ref{sdolweoidkdfn}) are satisfied,
the approximation of ${{\varepsilon }_{1}}$ is:
\begin{align}
 {{\varepsilon }_{1}}\approx {{\mathcal{L}}_{{{f}_{d}}}}\left( b_{0}^{\left( {{\varepsilon }_{1}} \right)},b_{1}^{\left( {{\varepsilon }_{1}} \right)} \right),
\end{align}

The other eight terms can be approximated using the similar method devised for ${{\varepsilon }_{1}}$.
With the details suppressed, we provide the following results for $\left|f_d \right| > f_{d, \text{M}}$:
\begin{align}\label{relationship_Appendix_A}
\nonumber  {{\varepsilon }_{K,1}}&\approx {{S}_{\varphi ,1}}\left( 2\pi {{T}_{0}} \right){{\varepsilon }_{1}}, \ &{{\varepsilon }_{K,2}}&\approx {{S}_{\varphi ,2}}{{\left( 2\pi {{T}_{0}} \right)}^{2}}{{\varepsilon }_{1}}, \\
\nonumber  {{\varepsilon }_{K,\mu }}&\approx {{S}_{\varphi ,1}}\left( 2\pi {{T}_{0}} \right){{\varepsilon }_{\mu ,1}}, \ &{{\varepsilon }_{d,K,\mu }}&\approx {{S}_{\varphi ,1}}\left( 2\pi {{T}_{0}} \right){{\varepsilon }_{d,\mu ,1}}, \\
 {{\varepsilon }_{d,\mu ,2}}&\approx {\varepsilon _{\mu ,1}^{*}{{\varepsilon }_{d,\mu ,1}}}/{{{\varepsilon }_{1}}},
\end{align}
where
${{S}_{\varphi ,1}}=\frac{1}{K}\mathbf{1}_{K\times 1}^{T}{\bm{\varphi }}$,
and
${{S}_{\varphi ,2}}=\frac{1}{K}{\bm{\varphi }}^{T}{\bm{\varphi }}$.

Utilizing (\ref{relationship_Appendix_A}),
the terms ${{\left[ \mathbf{H} \right]}_{1,1}}$ and ${{\left[ \mathbf{H} \right]}_{1,2}}$ can be approximated to
${{\left[ \mathbf{H} \right]}_{1,1}}\approx 8{{\pi }^{2}}T_{0}^{2}\frac{{{\left| {{\xi }_{d}}{{\rho }_{2}} \right|}^{2}}}{\sigma _{n}^{2}} \left( {{S}_{\varphi ,2}}-S_{\varphi ,1}^{2} \right){{\varepsilon }_{1}}$
and
${{\left[ \mathbf{H} \right]}_{1,2}}\approx 0$, respectively.
Hence, the inverse of the closed-form CRB for Doppler frequency can {be calculated as}:
\begin{align}\label{Appendix_aaa_final_result}
\nonumber &\text{CRB}_{{{f}_{d}}}^{-1}\approx 8{{\pi }^{2}}T_{0}^{2}\frac{{{\left| a\left( {{\theta }_{d}} \right)-{{\rho }_{0}} \right|}^{2}}}{\sigma _{n}^{2}{{\left| {{\xi }_{d}} \right|}^{-2}}}\left( {{S}_{\varphi ,2}}-S_{\varphi ,1}^{2} \right){{\varepsilon }_{1}} \\
&= 8{{\pi }^{2}}T_{0}^{2}K\left( {{S}_{\varphi ,2}}-S_{\varphi ,1}^{2} \right)  \frac{{{R}_{\text{SN}}}{{R}_{\text{A}}}}{\sqrt{{{\left( 1-{{R}_{\text{SD}}} \right)}^{2}}+2{{R}_{\text{A}}}{{R}_{\text{SD}}}}},
\end{align}
where
$R_\text{SN}$,
$R_\text{SD}$,
and
$R_\text{A}$
are defined in Proposition \ref{proposition_context_CRB_single_target}.

\section{proof of Lemma \ref{Lemma_X_b0}}\label{Appendix_Lemma_2}

Utilizing the property of the convolution operation,
(\ref{Lemma_X_b0_b1_equation_all}) can be expressed as:
\begin{align}\label{proof_lemma_X_b0_b1_ccc}
\mathcal{L}_{{{f}_{d}}}\left( b_0, b_1 \right) &=\frac{1}{{{K}_{\text{all}}}}\sum\limits_{\tilde{k}=0}^{{{K}_{\text{all}}}-1}{{{X}_{c}}\left( {\tilde{k}} \right)\sum\limits_{k=0}^{K-1}{{{e}^{j\frac{2\pi }{{{K}_{\text{all}}}}{{\varphi }_{k}}\tilde{k}}}}},
\end{align}
where
${{X}_{c}}\left( {\tilde{k}} \right)={{\mathcal{F}}_{{\tilde{k}}}}\left\{ \frac{1}{{{b}_{0}}+{{b}_{1}}\cos \left( 2\pi k{{T}_{0}}{{f}_{d}}+\phi  \right)} \right\}$,
${\tilde{k}} = 0,\cdots,K_{\text{all}} - 1$.
The approximation of ${{X}_{c}}\left( {\tilde{k}} \right)$ is given in (\ref{Lemma_X_b0_b1_equation_bbb_proof}) at the top of the next page under the assumption $b_0 > 2b_1$,
where
$\delta \left( {\tilde{k}} \right) = 1$ if ${\tilde{k}} = 0$
and
$\delta \left( {\tilde{k}} \right) = 0$ if ${\tilde{k}} \ne 0$.
In the first step of (\ref{Lemma_X_b0_b1_equation_bbb_proof}), we use the relationship
${{\left( 1+x \right)}^{-1}}=\sum\nolimits_{n=0}^{\infty }{{{\left( -1 \right)}^{n}}{{x}^{n}}}$.
In the second step,
we expand ${{\cos }^{n}}\left( \cdot  \right)$ using the binomial theorem.
We then split the second step into two terms according to the parity of $n$ and get the third step.
Then, a key approximation is made in the third step by remaining $r = n$ in the first term
and $r = n \pm 1$ in the second term based on the assumption $b_0 > 2b_1$, and we get the fourth step.
In the fifth step, the relationship
${{\left( 1-x \right)}^{-\frac{1}{2}}}=\sum\nolimits_{n=0}^{\infty }{{{\left( \frac{1}{4}x \right)}^{n}}C_{2n}^{n}}$ is utilized.

\begin{figure*}[!t]
\begin{align}\label{Lemma_X_b0_b1_equation_bbb_proof}
\nonumber  {{X}_{c}}\left( {\tilde{k}} \right)=&\sum\limits_{n=0}^{\infty }{\frac{{{\left( -{{b}_{1}} \right)}^{n}}}{b_{0}^{n+1}}{{\mathcal{F}}_{{\tilde{k}}}}\left\{ {{\cos }^{n}}\left( 2\pi k{{T}_{0}}{{f}_{d}}+\phi  \right) \right\}}=\sum\limits_{n=0}^{\infty }{\frac{{{\left( -{{b}_{1}} \right)}^{n}}}{{{2}^{n}}b_{0}^{n+1}}\sum\limits_{r=0}^{n}{{C_{n}^{r}}{{\mathcal{F}}_{{\tilde{k}}}}\left\{ {{e}^{j\left( 2r-n \right)\left( 2\pi k{{T}_{0}}{{f}_{d}}+\phi  \right)}} \right\}}} \\
\nonumber =&\sum\limits_{n=0}^{\infty }{\frac{{{\left( -{{b}_{1}} \right)}^{2n}}}{{{2}^{2n}}b_{0}^{2n+1}}\sum\limits_{r=0}^{2n}{{C_{2n}^{r}}{{\mathcal{F}}_{{\tilde{k}}}}\left\{ {{e}^{j\left( 2r-2n \right)\left( 2\pi k{{T}_{0}}{{f}_{d}}+\phi  \right)}} \right\}}}+\sum\limits_{n=0}^{\infty }{\frac{{{\left( -{{b}_{1}} \right)}^{2n+1}}}{{{2}^{2n+1}}b_{0}^{2n+2}}\sum\limits_{r=0}^{2n+1}{{C_{2n+1}^{r}}{{\mathcal{F}}_{{\tilde{k}}}}\left\{ {{e}^{j\left( 2r-2n-1 \right)\left( 2\pi k{{T}_{0}}{{f}_{d}}+\phi  \right)}} \right\}}}\\
\nonumber \approx & \sum\limits_{n=0}^{\infty }{\frac{b_{1}^{2n}}{{{2}^{2n}}b_{0}^{2n+1}}\sum\limits_{r=n}^{n}{{C_{2n}^{r}}{{\mathcal{F}}_{{\tilde{k}}}}\left\{ {{e}^{j\left( 2r-2n \right)\left( 2\pi k{{T}_{0}}{{f}_{d}}+\phi  \right)}} \right\}}}-\sum\limits_{n=0}^{\infty }{\frac{b_{1}^{2n+1}}{{{2}^{2n+1}}b_{0}^{2n+2}}\sum\limits_{r=n}^{n+1}{{C_{2n+1}^{r}}{{\mathcal{F}}_{{\tilde{k}}}}\left\{ {{e}^{j\left( 2r-2n-1 \right)\left( 2\pi k{{T}_{0}}{{f}_{d}}+\phi  \right)}} \right\}}} \\
 = &\: \frac{{{K}_{\text{all}}}}{\sqrt{b_{0}^{2}-b_{1}^{2}}}\delta \left( {\tilde{k}} \right)-{{\mathcal{F}}_{{\tilde{k}}}}\left\{ \cos \left( 2\pi k{{T}_{0}}{{f}_{d}}+\phi  \right) \right\}\sum\limits_{n=0}^{\infty }{\frac{b_{1}^{2n+1}}{{{2}^{2n}}b_{0}^{2n+2}}{C_{2n+1}^{n}}}.
\end{align}
\normalsize
\hrulefill
\end{figure*}

Substituting (\ref{Lemma_X_b0_b1_equation_bbb_proof}) into (\ref{proof_lemma_X_b0_b1_ccc}),
we have:
\begin{align}\label{Appendix_L_ful_formula}
 \nonumber &\mathcal{L}_{{{f}_{d}}}\left( b_0, b_1 \right) \approx \frac{K}{\sqrt{b_{0}^{2}-b_{1}^{2}}} \\
&-K \operatorname{Re}\left\{\frac{{e}^{j\phi }}{K}\sum\limits_{k=0}^{K-1}{{{e}^{j2\pi {{\varphi }_{k}}{{T}_{0}}{{f}_{d}}}}} \right\}\sum\limits_{n=0}^{\infty }{\frac{b_{1}^{2n+1}}{{{2}^{2n}}b_{0}^{2n+2}}{C_{2n+1}^{n}}},
\end{align}
where $\operatorname{Re}\left\{ \frac{{e}^{j\phi }}{K}\sum\nolimits_{k=0}^{K-1}{{{e}^{j2\pi {{\varphi }_{k}}{{T}_{0}}{{f}_{d}}}}} \right\}$
is related with $\mathcal{P}\left( {\bm{\varphi }}, f_d \right)$ defined in Section \ref{Section_approximate_CRB}.
It can be easily proved that
$\operatorname{Re}\left\{ \frac{{e}^{j\phi }}{K}\sum\nolimits_{k=0}^{K-1}{{{e}^{j2\pi {{\varphi }_{k}}{{T}_{0}}{{f}_{d}}}}} \right\} \le \mathcal{P}\left( {\bm{\varphi }}, f_d \right)$.
Therefore, the mainlobe width of $\mathcal{P}\left( {\bm{\varphi }}, f_d \right)$
can be utilized to describe the mainlobe of $\operatorname{Re}\left\{ \frac{{e}^{j\phi }}{K}\sum\nolimits_{k=0}^{K-1}{{{e}^{j2\pi {{\varphi }_{k}}{{T}_{0}}{{f}_{d}}}}} \right\}$.
Focusing on the sidelobe, which is $\left|f_d \right| > f_{d, \text{M}}$,
the second term of (\ref{Appendix_L_ful_formula}) can be eliminated, and (\ref{Lemma_X_b0_b1_equation_all}) is obtained.

\section{Proof of Proposition \ref{Lemma_Envelope_in_aoifhnaol}}\label{Appendix_of_Lemma_Envelope_in_aoifhnaol}
Substituting $\bm{\varphi } \! = \! {{\left[ {{{\bm{\tilde{\varphi }}}}^{T}},{{K}_{\text{all}}} \! - \! 1 - \! {{{\bm{\tilde{\varphi }}}}^{T}} \right]}^{T}}$
into
{$\mathcal{P}\left( {\bm{\varphi }}, f_d \right)$}
yields:
\begin{align}\label{alidkfjaldikhjflahf}
\nonumber & {\mathcal{P}\left( {\bm{\varphi }}, f_d \right)}\\
\nonumber  & =\frac{1}{K}\left| \sum\limits_{k=0}^{\frac{K}{2}-1}{{{e}^{j2\pi {{\left[ {\bm{\tilde{\varphi }}} \right]}_{k}}{{T}_{0}}{{f}_{d}}}}}+{{e}^{j2\pi \left( {{K}_{\text{all}}}-1 \right){{T}_{0}}{{f}_{d}}}}\sum\limits_{k=0}^{\frac{K}{2}-1}{{{e}^{-j2\pi {{\left[ {\bm{\tilde{\varphi }}} \right]}_{k}}{{T}_{0}}{{f}_{d}}}}} \right| \\
\nonumber  & =\frac{2}{K}\operatorname{Re}\left\{ {{e}^{-j\pi \left( {{K}_{\text{all}}}-1 \right){{T}_{0}}{{f}_{d}}}}\sum\limits_{k=0}^{\frac{K}{2}-1}{{{e}^{j2\pi {{\left[ {\bm{\tilde{\varphi }}} \right]}_{k}}{{T}_{0}}{{f}_{d}}}}} \right\} \\
\nonumber  & =\operatorname{Re}\left\{ {{e}^{-j\pi \left( {{K}_{\text{all}}}-1 \right){{T}_{0}}{{f}_{d}}+j\angle \left\{ {\mathcal{E}_\mathcal{P}}\left( \bm{\tilde{\varphi }},{{f}_{d}} \right) \right\}}} \right\}{\mathcal{E}_\mathcal{P}}\left( \bm{\tilde{\varphi }},{{f}_{d}} \right)\\
 & =\cos \left( \pi \left( {{K}_{\text{all}}} \! - \! 1 \right){{T}_{0}}{{f}_{d}} \! - \! \angle \left\{ {\mathcal{E}_\mathcal{P}}\left( \bm{\tilde{\varphi }},{{f}_{d}} \right) \right\} \right){\mathcal{E}_\mathcal{P}}\left( \bm{\tilde{\varphi }},{{f}_{d}} \right),
\end{align}
where
${\mathcal{E}_\mathcal{P}}\left( {\bm{\tilde{\varphi} }}, f_d \right)$
is
is the envelope of
$\mathcal{P}\left( {\bm{\varphi }}, f_d \right)$,
and
{$\mathcal{P}\left( {\bm{\varphi }}, f_d \right)$}
is modulated by
$\cos \left( \pi \left( {{K}_{\text{all}}}-1 \right){{T}_{0}}{{f}_{d}}-\angle \left\{ {{\mathcal{P}}}\left( \bm{\tilde{\varphi }},{{f}_{d}} \right) \right\} \right)$,
i.e., a ``high-frequency oscillating signal''.
Proposition \ref{Lemma_Envelope_in_aoifhnaol} is proved.

\section{Proof of Proposition \ref{proposition_integer_operation}}\label{Appendix_of_Proposition_Integer_Operation}
Suppose that
${\bm{{\tilde{\varphi }}}_{B}^{\prime}}={{{\bm{\tilde{\varphi }}}}_{B}}+{{\bm{\delta }}_{B}}$,
where
${{\bm{\delta }}_{B}} \in \mathbb{R}^{K_B\times 1}$
and
$\left| {{\bm{\delta }}_{B}} \right|\le \frac{1}{2}{{\mathbf{1}}_{{{K}_{B}}\times 1}}$.
We have:
\begin{align}
\nonumber  & \left| {{\mathcal{E}}_{\mathcal{P}}}\left( \left[ \begin{matrix}
   {{{\bm{\tilde{\varphi }}}}_{F}}  \\
   {{{\bm{\tilde{\varphi }}}}_{B}}  \\
\end{matrix} \right],{{f}_{d}} \right)-{{\mathcal{E}}_{\mathcal{P}}}\left( \left[ \begin{matrix}
   {{{\bm{\tilde{\varphi }}}}_{F}}  \\
   {{{\bm{\tilde{\varphi }}}}_{B}}+{{\bm{\delta }}_{B}}  \\
\end{matrix} \right],{{f}_{d}} \right) \right| \\
\nonumber   & \le \left| \frac{2}{K}\sum\nolimits_{k=0}^{{{K}_{B}}-1}{{{e}^{j2\pi {{\left[ {{{\bm{\tilde{\varphi }}}}_{B}} \right]}_{k}}{{T}_{0}}{{f}_{d}}}}\left\{ 1-{{e}^{j2\pi {{\left[ {{\bm{\delta }}_{B}} \right]}_{k}}{{T}_{0}}{{f}_{d}}}} \right\}} \right| \\
\nonumber   & \approx \left| \frac{2}{K}2\pi {{T}_{0}}{{f}_{d}}\sum\nolimits_{k=0}^{{{K}_{B}}-1}{{{\left[ {{\bm{\delta }}_{B}} \right]}_{k}}{{e}^{j2\pi {{\left[ {{{\bm{\tilde{\varphi }}}}_{B}} \right]}_{k}}{{T}_{0}}{{f}_{d}}}}} \right| \\
 & <\left| \frac{2}{K}2\pi {{T}_{0}}{{f}_{d}}\sum\nolimits_{k=0}^{{{K}_{B}}-1}{\frac{1}{2}} \right|=\frac{\pi {{K}_{B}}}{K/2}\left| {{T}_{0}}{{f}_{d}} \right|.
\end{align}
Proposition \ref{proposition_integer_operation} is thus proved.

\section{Proof of (\ref{iteration_algorithm_result})}\label{Appendix_iteration_algorithm_result}
When $\mathcal{H}_{\text{op}}^{\left( i+1 \right)}$ is minimized,
the partial derivative of $\mathcal{H}_{\text{op}}^{\left( i+1 \right)}$
with respect to $\Delta \bm{\tilde{\varphi }}_{B}^{\left( i+1 \right)} $
is zero, i.e.,
$\frac{\partial }{\partial \Delta \bm{\tilde{\varphi }}_{B}^{\left( i+1 \right)}}\mathcal{H}_{\text{op}}^{\left( i+1 \right)}=0$,
which leads to:
\begin{align}\label{Appendix_formula_aldiasoshf}
\frac{\partial }{\partial \Delta \bm{\tilde{\varphi }}_{B}^{\left( i+1 \right)}}\mathcal{P}_{\text{SL}}^{\left( i+1 \right)}+2a\text{diag}\left\{ \mathbf{w}^{\left( i \right)} \right\}\Delta \bm{\tilde{\varphi }}_{B}^{\left( i+1 \right)}=0.
\end{align}
Given the relationship between ${{\mathcal{P}}_{\text{SL}}}$
and
${\mathcal{E}_\mathcal{P}}\left( {{{\bm{\tilde{\varphi }}}}},{{f}_{d}} \right)$ in (\ref{weighted_sidelobe_power_aasdlka}), the partial derivative of ${{\left| {\mathcal{E}_\mathcal{P}}\left( {{{\bm{\tilde{\varphi }}}}^{\left( i+1 \right)}},{{f}_{d}} \right) \right|}^{2}}$
with respect to $\Delta \bm{\tilde{\varphi }}_{B}^{\left( i+1 \right)} $
is needed for deriving the first term on the left-hand side above.
Based on (\ref{weighted_sidelobe_power_aasdlka}),
we can show that:
\begin{align}\label{Appendix_formula_aldiadasfasoshf}
\nonumber  & \frac{\partial}{\partial \Delta \bm{\tilde{\varphi }}_{B}^{\left( i+1 \right)}}{{\left| {\mathcal{E}_\mathcal{P}}\left( {{{\bm{\tilde{\varphi }}}}^{\left( i+1 \right)}},{{f}_{d}} \right) \right|}^{2}} \\
\nonumber & =-2{{f}_{d}}\frac{8\pi {{T}_{0}}}{{{K}^{2}}}\operatorname{Im}\left\{ \mathbf{f}\left( \bm{\tilde{\varphi }}_{B}^{\left( i \right)},{{f}_{d}} \right){{\mathbf{f}}^{H}}\left( \bm{\tilde{\varphi }},{{f}_{d}} \right) \right\}{{\mathbf{1}}_{{K}/{2}\;\times 1}} \\
 & +2f_{d}^{2}\frac{4{{\left( 2\pi {{T}_{0}} \right)}^{2}}}{{{K}^{2}}}\operatorname{Re}\left\{ \mathbf{f}\left( \bm{\tilde{\varphi }}_{B}^{\left( i \right)},{{f}_{d}} \right){{\mathbf{f}}^{H}}\left( \bm{\tilde{\varphi }}_{B}^{\left( i \right)},{{f}_{d}} \right) \right\}\Delta \bm{\tilde{\varphi }}_{B}^{\left( i+1 \right)}.
\end{align}
Substituting (\ref{Appendix_formula_aldiadasfasoshf}) into the first term of (\ref{Appendix_formula_aldiasoshf}) yields:
\begin{align}\label{Appendix_formula_aldiadasfafdasasoshf}
\nonumber \frac{\partial }{\partial \Delta \bm{\tilde{\varphi }}_{B}^{\left( i+1 \right)}}\mathcal{P}_{\text{SL}}^{\left( i+1 \right)}  & =\int_{f_{d,\text{M}}^{\text{ex}}}^{\frac{1}{2{{T}_{0}}}}{\varpi \left( {{f}_{d}} \right)\frac{\partial{{\left| {\mathcal{E}_\mathcal{P}}\left( {{{\bm{\tilde{\varphi }}}}^{\left( i+1 \right)}},{{f}_{d}} \right) \right|}^{2}}}{\partial \Delta \bm{\tilde{\varphi }}_{B}^{\left( i+1 \right)}}d{{f}_{d}}} \\
 & \approx 2\mathbf{V}_{2}^{\left( i \right)}\Delta \bm{\tilde{\varphi }}_{B}^{\left( i+1 \right)}-2\mathbf{V}_{1}^{\left( i \right)}{{\mathbf{1}}_{{K}/{2}\;\times 1}},
\end{align}
where the definition of $\mathbf{V}_{1}^{\left( i \right)}\in {{\mathbb{C}}^{{{K}_{B}}\times \frac{K}{2}}}$
and
$\mathbf{V}_{2}^{\left( i \right)}\in {{\mathbb{C}}^{{{K}_{B}}\times {{K}_{B}}}}$
is given in (\ref{iteration_algorithm_result}).
{Substituting} (\ref{Appendix_formula_aldiadasfafdasasoshf}) into (\ref{Appendix_formula_aldiasoshf}),
the formula (\ref{iteration_algorithm_result}) can be proved.

\end{appendices}

\bibliographystyle{IEEEtran}
\bibliography{AAA}
\vfill

\end{document}